\numberwithin{equation}{section}
\theoremstyle{plain}
\newtheorem{theorem}{Theorem}[section]
\newtheorem{proposition}[theorem]{Proposition}
\newtheorem{lemma}[theorem]{Lemma}
\newtheorem{corollary}[theorem]{Corollary}
\theoremstyle{definition}
\newtheorem{definition}[theorem]{Definition}
\theoremstyle{remark}
\newtheorem{remark}[theorem]{Remark}
\newcommand{\R}{\mathbb{R}}
\newcommand{\dd}{\mathrm{d}}
\newcommand{\Id}{\mathrm{Id}}
\newcommand{\tr}{\mathrm{tr}}
\newcommand{\lam}{\lambda}
\newcommand{\lamo}{\lambda_0}
\newcommand{\lamone}{\lambda_1}
\title[Obstructions to global visibility of singularities]{Obstructions to global visibility of singularities in asymptotically flat spacetimes}
\author{Kanabar Jay}
\address{Department of Mathematics, MG Science Institute, Gujarat University, Ahmedabad 380009, India}
\email{\textcolor{Maroon}{kanabarjaymgsi@gmail.com}}
\author{Kharanshu N. Solanki}
\address{Fakultät für Mathematik, Universität Wien,  Oskar-Morgenstern-Platz 1, 1090 Wien, Austria}
\email{\textcolor{Maroon}{kharanshu.nilesh.solanki@univie.ac.at}}
\author{Pankaj S. Joshi}
\address{International Centre for Space and Cosmology, Ahmedabad University, Ahmedabad 380009, India}
\email{\textcolor{Maroon}{psjcosmos@gmail.com}}
\thanks{This research was funded in part by the Austrian Science Fund (FWF) [Grant DOI \href{https://www.fwf.ac.at/en/research-radar/10.55776/STA32}{10.55776/STA32}]. For open access purposes, the authors have applied a CC BY public copyright license to any author-accepted manuscript version arising from this submission.}
\newcommand{\printtitledate}[1]{%
  \vspace*{-0.6\baselineskip}%
  \begin{center}\normalfont\small #1\end{center}%
  \vspace*{-0.3\baselineskip}%
}
\begin{document}
\maketitle
\printtitledate{\today}
\begin{abstract}
Consider an $(N+1)$-dimensional asymptotically flat spacetime and a future-directed, affinely parametrized outgoing null generator $\gamma$ of an achronal boundary $\partial J^+(S_\varepsilon)$, where $\{S_\varepsilon\}$ is a nested family of smooth compact codimension $2$ surfaces approaching a singular boundary set \(S\) in the past. In the twist-free case and under the null energy condition, the Raychaudhuri equation on the $m:=N-1$ dimensional screen bundle reads,
$$
\theta'=-\frac1m\theta^2-\|\sigma\|^2-\mathrm{Ric}(k,k),
$$
where $k$ is the tangent to $\gamma$. This equation linearizes, via the rescaling $u:=A^{1/m}$ with $A \coloneqq|\det D|$ the Jacobi-map \(m\)-volume, to the Sturm-type ODE
$$
u''+\frac1m f\,u=0,\qquad f:=\|\sigma\|^2+\mathrm{Ric}(k,k)\ge 0.
$$
We develop two purely generator-wise criteria forcing a first zero of $u$: (i) an exact Volterra identity combined with concavity leads to a barrier-weighted integral inequality, and (ii) Sturm comparison and a Prüfer-angle estimate yields failure of disconjugacy whenever $\int_c^d \sqrt{f/m}\,d\lambda>\pi$ on a subinterval. We prove that $u(\lambda_\ast)=0$ is equivalent to the existence of a focal (conjugate) point and implies $\theta= m u'/u\to-\infty$ at $\lambda_\ast$. Using the standard structure of achronal boundaries, this yields a geodesic-wise obstruction: if every generator that could reach $\mathscr I^+$ satisfies one of the above conditions in the regular spacetime region, then $J^+(S_\varepsilon)\cap \mathscr I^+=\emptyset$, and hence \(S\) is not globally visible. As an application, we illustrate one of these criteria in the Einstein-massless scalar field collapse model of Christodoulou.
\end{abstract}
\setcounter{tocdepth}{2}

\setcounter{tocdepth}{1}
\tableofcontents

\section{Introduction}

The classical singularity theorems of general relativity show that under appropriate energy and causality conditions, spacetime geodesics are incomplete \cite{Hawking:1973uf}. What these theorems do \emph{not} decide by themselves is whether the singular boundary (associated with this geodesic incompleteness) is hidden behind a trapped region or is (at least locally) visible. The problem of finding suitable conditions that lead to the formation of locally naked singularities is still a vastly open, and typically quite challenging pursuit. Here we address a related problem of finding obstructions for a locally visible singularity in a certain class of spacetimes to be globally visible, in the sense that any null geodesic issuing from the locally naked singularity cannot escape to future null infinity. Note that in particular such obstructions do not claim anything about cosmic censorship \cite{ShlapentokhRothman2025WCC} directly, since one already assumes the existence of a past-singular boundary of the spacetime under consideration. However these obstructions and the methods used to derive them may be of independent interest in order to obtain a better understanding of certain aspects of naked singularities. In particular the conditions presented here will demonstrate how the behavior of curvature and shear along null geodesics emanating from such singularities affect their global visibility. The methods rely heavily on various notions from Lorentzian causality theory, for which the unfamiliar reader is referred to \cite{Minguzzi:2019mbe}. Note that such methods have already been employed in the past by Tipler \cite{TIPLER1978165,Tipler:1977zza,Tipler:1977zzb}, Mingarelli \cite{minga}, and Hawking and Penrose \cite{Hawking:1970zqf} in various different contexts. 

\subsection{Setup and guiding idea}
Fix a future-directed, affinely parametrized null geodesic $\gamma:I\to M$, and choose an auxiliary null vector field $n\in \Gamma(TM)$ with $g(k,n)=-1$, where $k$ is the tangent to $\gamma$. The \emph{screen bundle} of $M$ is the $m\coloneqq N-1$-dimensional vector bundle over $M$ with the underlying set $S_\lambda:=\{X:\ g(X,k)=g(X,n)=0\}$. The deformation of a thin null congruence around $\gamma$ is encoded by the \emph{optical endomorphism} $B:S_\lambda\to S_\lambda$, defined by $(J^\perp)'=B(J^\perp)$ for screen-projected Jacobi fields $J^\perp$.  It splits as,
$$\displaystyle B=\frac{1}{m}\theta\,\Id+\sigma+\omega,$$
where $\theta$ is the expansion, $\sigma$ the trace-free shear, and $\omega$ the twist. Throughout we work in the twist-free case ($\omega\equiv 0$) and impose the null energy condition (NEC), i.e., $\mathrm{Ric} (k,k)=T(k,k)\ge 0$.

Let $\Pi:TM\to S_{\lambda}$ be the orthogonal projection to the screen-bundle. Then the trace of the optical evolution identity $\nabla_{k}B=-B^2-\Pi\circ R(\cdot,k)k$ yields the Raychaudhuri equation
\begin{equation}\label{eq:intro-ray}
\theta'=-\frac{1}{m}\theta^2-\|\sigma\|^2-\mathrm{Ric}(k,k), 
\end{equation}
where $(\cdot)'$ denotes differentiation with respect to the affine parameter $\lambda$ (we shall use this notation throughout the paper). Let $D(\lam):\mathcal \mathcal S_{\lamo}\to \mathcal S_\lam$ be the \emph{Jacobi map} that sends initial screen vectors at $\lamo$ to the corresponding Jacobi fields at $\lam$. To extract a \emph{linear} equation from the Raychaudhuri equation, we rescale the congruence’s $m$-volume $\mathcal A(\lambda):=|\det D(\lambda)|$ (with $D' = B D$, $D(\lambda_0)=\Id$) by,
$$\displaystyle u(\lambda):=\mathcal A(\lambda)^{1/m}\quad\Longrightarrow\quad \theta=m\,\frac{u'}{u}.$$
Differentiating with respect to $\lambda$ and using \eqref{eq:intro-ray} cancels the quadratic term and yields the scalar ODE
\begin{equation}\label{eq:intro-ode}
u''+\frac{1}{m}f\,u=0,\qquad f:=\|\sigma\|^2+T(k,k).
\end{equation}
Equation \eqref{eq:intro-ode} is the backbone of our analysis. It reduces the geometric focusing problem to a second-order Sturm-type equation along \emph{one} null geodesic (which is part of a congruence).

\subsection{Zero of \(u\), conjugate points, and pre-horizon placement}
The existence of a first zero of $u$ has two immediate consequences. First, $u(\lambda_*)=0$ if and only if $\det D(\lambda_*)=0$, i.e. there is a nontrivial screen Jacobi field $J$ with $J(\lambda_*)=0$ (a conjugate point). Second, $\theta=m\,u'/u\to -\infty$ as $\lambda\to\lambda_*$, which implies that the focusing is genuine and not a coordinate artifact. If in addition the spacetime is asymptotically flat and its future null infinity satisfies certain properties (as made precise in Theorem \ref{thm:global_visibility_failure}), and moreover the null geodesic (and the congruence thereof) is past-incomplete, i.e., it terminates at a singularity in the past, and the first zero occurs at $\lambda_*<\lambda_{\mathrm{AH}}$ (where $\lambda_{\mathrm{AH}}$ is the first parameter where the outgoing expansion $\theta_+$ of a suitable congruence vanishes), and moreover the segment up to $\lambda_{\mathrm{AH}}$ is regular and untrapped, then we shall show  the singularity is \emph{not globally visible} along $\gamma$. 

\subsection{Two possible obstructions to global visibility}
We provide two different routes to force a pre-horizon zero of $u$:
\begin{enumerate}[leftmargin=2em]
\item Barrier route: On any interval where $f\ge 0$ and the affine-linear \emph{barrier} $b(\lambda)=u(\lambda_0)+u'(\lambda_0)(\lambda-\lambda_0)$ stays positive, a weighted inequality of the form
$$\frac{1}{m}\int (\lambda_1-s)\,f(s)\,b(s)\,\dd s>u(\lambda_0)+u'(\lambda_0)(\lambda_1-\lambda_0)$$
forces $u(\lambda_1)<0$, thereby guaranteeing a zero of $u$ in $(\lambda_0,\lambda_1)$.
\item Sturm/Prüfer route: Writing $q:=f/m$, if $\int_{c}^{d}\sqrt{q(\lambda)}\,\dd\lambda>\pi$ on some subinterval $(c,d)$ contained in the regular region, then the solution of \eqref{eq:intro-ode} has a zero of $u$ in $(c,d)$.
\end{enumerate}
Both tests require a condition on $f$ along a single chosen geodesic. Recall that all of this holds under the twist-free evolution along $\gamma$ along with the NEC. The auxiliary null vector field $n$ only determines the screen and eventually drops out of all invariants. Moreover, affine reparametrizations leave the criteria unchanged.

\section{Geometric framework and preliminaries}\label{sec:framework}

Let $(M,g)$ be a time-oriented Lorentzian manifold. Fix a future-directed null geodesic congruence with tangent $k$, affinely parametrized by $\lam$ so that $\nabla_k k=0$ along each geodesic of the congruence. Choose an auxiliary future-directed null vector field $n$ with $g(k,n)=-1$. The \emph{screen bundle} at parameter $\lam$ is the $m$-dimensional vector bundle with the underlying set,
$$\mathcal S_\lam:=\{X\in T_{\gamma(\lam)}M:\ g(X,k)=g(X,n)=0\},\qquad m:=N-1,$$
with inner product $h:=g|_{\mathcal S_\lam}$. Let $\Pi:T M\to \mathcal S_\lam$ be the \emph{orthogonal projector}.

\begin{definition}[Optical map and scalars]\label{def:optical}
If $J$ is a Jacobi field along a geodesic, its screen projection $J^\perp:=\Pi\circ J$ satisfies $(J^\perp)'=B(J^\perp)$, defining the \emph{optical endomorphism} $B:\mathcal S_\lam\to \mathcal S_\lam$. One may decompose
$$B=\frac{1}{m}\theta\,\Id+\sigma+\omega,\qquad \theta:=\tr B,\quad \tr\sigma=0,\quad \omega^\top=-\omega.$$
We assume that the congruence is \emph{twist-free} ($\omega\equiv 0$) throughout. Define $\|\sigma\|^2:=\tr(\sigma^\top\sigma)\ge 0$.
\end{definition}

\begin{remark}[Energy condition]
Under the NEC, $T(X,X)\ge 0$ for every null $X$. Einstein’s equations in $(N+1)$ dimensions yield $\mathrm{Ric}(k,k)=T(k,k)\ge 0$.
\end{remark}

\subsection{The Raychaudhuri equation}\label{sec:ray}

For the sake of completeness, we first establish the optical evolution identity, then derive the Raychaudhuri equation by taking the screen bundle trace.

\begin{lemma}\label{lem:optical-evolution}
Let $R(U,V)W:=\nabla_U\nabla_VW-\nabla_V\nabla_UW-\nabla_{[U,V]}W$ be the Riemann curvature operator for $U,V,W\in\Gamma(TM)$. Along each geodesic $\gamma$ with tangent $k$, one has,
\begin{equation}\label{optical-identity}
    \nabla_k B=-B^2-\Pi\circ R(\,\cdot\,,k)k\big|_{S}.
\end{equation}
\end{lemma}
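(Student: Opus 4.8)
The plan is to prove the optical evolution identity $\nabla_k B=-B^2-\Pi\circ R(\cdot,k)k|_{\mathcal S}$ by differentiating the defining relation $(J^\perp)'=B(J^\perp)$ along $\gamma$ and substituting the Jacobi equation. Concretely, let $J$ be any Jacobi field along $\gamma$, so $\nabla_k\nabla_k J=-R(J,k)k$ by definition of a Jacobi field (with this sign convention for $R$). First I would record the elementary facts that $k$ and $n$ can be propagated so that the screen $\mathcal S_\lam$ is parallel, hence $\Pi$ commutes with $\nabla_k$ on the relevant objects; this makes $(J^\perp)' := \nabla_k(\Pi J)$ well-defined as a section of $\mathcal S$, and $B$ genuinely an endomorphism field of the screen bundle. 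I would then note that the decomposition $J = J^\perp + (\text{span of }k, n)$ is preserved under $\nabla_k$ up to terms that the projector $\Pi$ annihilates, using $\nabla_k k = 0$ and the normalization $g(k,n)=-1$.

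The core computation is then: differentiate $(J^\perp)' = B(J^\perp)$ once more to get $(J^\perp)'' = (\nabla_k B)(J^\perp) + B\big((J^\perp)'\big) = (\nabla_k B)(J^\perp) + B^2(J^\perp)$, using the defining relation a second time. On the other hand, $(J^\perp)'' = \nabla_k\nabla_k(\Pi J) = \Pi\big(\nabla_k\nabla_k J\big) = -\Pi\big(R(J,k)k\big)$, where the middle equality uses that $\Pi$ is parallel along $\gamma$. Now decompose $J = J^\perp + ak + bn$ for functions $a,b$; since $R(k,k)k=0$, the term $ak$ contributes nothing, and the $bn$ term I would handle by observing that $\Pi\circ R(n,k)k$ can be absorbed — more carefully, one checks that by the curvature symmetries $g(R(n,k)k, k) = 0$ and that the relevant screen-valued part is accounted for, or alternatively one restricts attention to Jacobi fields with $J^\perp$ spanning $\mathcal S$ at a point and notes the identity is $\mathcal S$-linear. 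Equating the two expressions for $(J^\perp)''$ gives $(\nabla_k B + B^2)(J^\perp) = -\Pi\circ R(\cdot,k)k|_{\mathcal S}(J^\perp)$ for all screen Jacobi data, and since such data span $\mathcal S_\lam$ pointwise, the operator identity \eqref{optical-identity} follows.

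The main obstacle I anticipate is the bookkeeping around the non-screen components of $J$ and the precise sense in which $\Pi$ is parallel. The subtlety is that $\mathcal S_\lam$ is defined via both $k$ and the auxiliary $n$, and while $k$ is parallel along $\gamma$, a general choice of $n$ need not be; one either fixes $n$ to be parallel-transported along $\gamma$ (which is consistent with $g(k,n)=-1$ since $g(k,k)=0$ is preserved) or carries extra connection terms that must be shown to drop out after applying $\Pi$. I would therefore open the proof by fixing $n$ parallel along $\gamma$, so that $\mathcal S_\lam$ is a parallel subbundle, $\Pi\nabla_k = \nabla_k\Pi$ on sections, and the twist-free hypothesis is used only later in deriving Raychaudhuri, not here. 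With that normalization in place the rest is the two-line differentiation above; the only remaining care is the sign convention in the Jacobi equation matching the stated convention for $R$, which I would state explicitly. Finally, taking the trace of \eqref{optical-identity} over $\mathcal S_\lam$, using $\tr(B^2) = \frac1m\theta^2 + \|\sigma\|^2$ in the twist-free case and $\tr\big(\Pi\circ R(\cdot,k)k\big) = \mathrm{Ric}(k,k)$, recovers \eqref{eq:intro-ray} — though that is the content of the subsequent step rather than this lemma.
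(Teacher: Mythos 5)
Your proposal is correct and follows essentially the same route as the paper: differentiate the defining relation $(J^\perp)'=B(J^\perp)$, substitute the Jacobi equation, use parallelism of $\Pi$ along $\gamma$, and conclude by a spanning argument over screen Jacobi data. The paper simply restricts from the outset to screen Jacobi fields (so the $ak+bn$ bookkeeping never arises), whereas you handle the general decomposition and make explicit the need to parallel-transport $n$ — a point the paper leaves implicit — so your version is, if anything, slightly more careful on the same argument.
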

\begin{proof}
Fix a geodesic $\gamma:I\to M$ and a screen Jacobi field $J$ along it, i.e. $J(\lam)\in \mathcal S_\lam$ and $g(J(\lam),k)=g(J(\lam),n)=0$ for all $\lam\in I$. The Jacobi equation is given by $J''=R(k,J)k$. Projecting both sides with $\Pi$ yields $\Pi\circ J''=\Pi\circ R(k,J)k$. Since $J^\perp:=\Pi \circ J$ and $\Pi$ is parallel along $k$ on $S$ (we identify screen fibers by parallel transport), the left hand side equals $(J^\perp)''$. By definition \((J^\perp)'=B(J^\perp)\). It follows that,
$$(J^\perp)''=B'(J^\perp)+B\big((J^\perp)'\big)=B'(J^\perp)+B^2(J^\perp).$$
Hence for every screen vector field of the form $J^\perp$ we have
$$B'(J^\perp)+B^2(J^\perp)=\Pi\,R(k,J)k.$$
As the vector fields $J^\perp$ span $\mathcal S_\lam$ (by choosing initial data for two linearly independent screen Jacobi fields and propagating), we can identify operators acting on $S$ to get
$$\nabla_k B=-B^2-\Pi\circ R(\,\cdot\,,k)k\big|_{S}.$$
This completes the proof.

\end{proof}

\begin{proposition}\label{prop:ray}
With $m=N-1$ and $\omega=0$, one has,
\begin{equation}\label{ray}
    \theta'=-\frac{1}{m}\theta^2-\|\sigma\|^2-\mathrm{Ric}(k,k).
\end{equation}
\end{proposition}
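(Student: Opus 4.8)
The plan is to obtain \eqref{ray} purely by taking the screen-bundle trace of the optical evolution identity \eqref{optical-identity} established in Lemma \ref{lem:optical-evolution}, and then invoking the decomposition $B=\tfrac1m\theta\,\Id+\sigma+\omega$ with $\omega\equiv0$. First I would note that since the screen fibers are identified by parallel transport along $k$, the trace commutes with $\nabla_k$, so $\tr(\nabla_k B)=(\tr B)'=\theta'$. Thus applying $\tr$ to \eqref{optical-identity} gives
\[
\theta' = -\tr(B^2) - \tr\bigl(\Pi\circ R(\,\cdot\,,k)k\big|_{\mathcal S}\bigr).
\]

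For the curvature term, I would observe that $\tr(\Pi\circ R(\,\cdot\,,k)k)=\sum_i h(R(e_i,k)k,e_i)$ for any $h$-orthonormal screen frame $\{e_i\}_{i=1}^m$, and that adding the two null directions $k,n$ contributes nothing: $g(R(k,k)k,\cdot)=0$ by antisymmetry of $R$ in its first pair, and the $n$-slot term vanishes because $R(n,k)k$ paired with $k$ gives $g(R(n,k)k,k)=0$ again by the same antisymmetry (using $g(k,n)=-1$ to normalize). Hence the screen trace equals the full trace $\sum_{\mu} g(R(e_\mu,k)k,e^\mu)=\mathrm{Ric}(k,k)$ over a pseudo-orthonormal frame of $T_{\gamma(\lambda)}M$, i.e. $\tr(\Pi\circ R(\,\cdot\,,k)k\big|_{\mathcal S})=\mathrm{Ric}(k,k)$.

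For the quadratic term I would substitute $B=\tfrac1m\theta\,\Id+\sigma$ (twist-free), square, and take the trace using $\tr\Id=m$, $\tr\sigma=0$, and the definition $\|\sigma\|^2=\tr(\sigma^\top\sigma)$. Since $\sigma$ is $h$-symmetric in the twist-free case, $\tr(\sigma^\top\sigma)=\tr(\sigma^2)$, and the cross terms $\tfrac2m\theta\tr\sigma$ vanish, giving $\tr(B^2)=\tfrac1m\theta^2+\|\sigma\|^2$. Combining the two computations yields $\theta'=-\tfrac1m\theta^2-\|\sigma\|^2-\mathrm{Ric}(k,k)$, and finally $\mathrm{Ric}(k,k)=T(k,k)$ under the NEC/Einstein equations as in the preceding remark.

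The only mild subtlety — the ``main obstacle,'' though it is routine — is justifying that the screen trace of the tidal operator reproduces the \emph{full} Ricci contraction $\mathrm{Ric}(k,k)$ rather than a proper-subspace partial trace; this rests on the two antisymmetry cancellations noted above, which are exactly why the auxiliary null vector $n$ drops out of the final invariant. One should also confirm that $\nabla_k$ and $\tr$ genuinely commute, which is immediate once screen fibers along $\gamma$ are trivialized by parallel transport (so that the frame $\{e_i\}$ can be taken parallel, $\nabla_k e_i=0$).
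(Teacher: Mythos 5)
Your proof is correct and follows essentially the same route as the paper: trace the optical evolution identity of Lemma \ref{lem:optical-evolution}, compute $\tr(B^2)=\tfrac1m\theta^2+\|\sigma\|^2$ from the decomposition with $\omega=0$, and identify the curvature trace with $\mathrm{Ric}(k,k)$. Your justification of why the screen trace of the tidal operator equals the full contraction $\mathrm{Ric}(k,k)$ (the antisymmetry cancellations killing the $k$- and $n$-slot contributions) is actually more explicit than the paper's one-line assertion that projection does not affect the trace, and is a welcome addition.
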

\begin{proof}
Take the screen bundle trace of equation \ref{optical-identity} gives,
$$\tr(\nabla_k B)=-\tr(B^2)-\tr\big(\Pi\circ R(\,\cdot\,,k)k\big).$$
Since $\tr(\nabla_k B)=\nabla_k(\tr B)=\theta'$, it remains to evaluate $\tr(B^2)$ and the curvature trace. Write $B=(\theta/m)\,\Id+\sigma$. It follows that,
\begin{align*}
B^2&=\Big(\frac{1}{m}\theta\,\Id+\sigma\Big)^2
    =\frac{\theta^2}{m^2}\Id+\frac{2\theta}{m}\sigma+\sigma^2,\\
\tr(B^2)&=\frac{\theta^2}{m^2}\tr(\Id)+\frac{2\theta}{m}\tr(\sigma)+\tr(\sigma^2)
=\frac{\theta^2}{m}+\|\sigma\|^2,
\end{align*}
Here we have used $\tr(\Id)=m$ and $\tr(\sigma)=0$. For the curvature, orthogonal projection does not affect the trace on $S$, and hence,
$$\tr\big(\Pi\circ R(\,\cdot\,,k)k\big)=\tr\big(R(\,\cdot\,,k)k|_{S}\big)=\mathrm{Ric}(k,k).$$
Therefore $\theta'=-\theta^2/m-\|\sigma\|^2-\mathrm{Ric}(k,k)$ as claimed.

\end{proof}

\subsection{Jacobi map, rescaling, and linearization}\label{sec:linearization}

\begin{definition}[Jacobi map and beam $m$-volume]\label{def:Jacobi}
Fix $\lamo$ and identify $\mathcal S_{\lamo}$ with $\mathcal S_\lam$ by parallel transport along $k$. The \emph{Jacobi map} $D(\lam):\mathcal S_{\lamo}\to \mathcal S_\lam$ sends initial screen vectors to the corresponding Jacobi field values at $\lam$, and satisfies $D(\lamo)=\Id$, $D'=B D$. Define the unsigned \emph{$m$-volume density} and \emph{beam scale}
$$\mathcal A(\lam):=|\det D(\lam)|,\qquad u(\lam):=\mathcal A(\lam)^{1/m}.$$
\end{definition}

\begin{lemma}\label{lem:area}
For $u>0$, one has,
$$\theta=\frac{1}{\mathcal A}\frac{\dd\mathcal A}{\dd\lam}=m\,\frac{u'}{u}.$$
\end{lemma}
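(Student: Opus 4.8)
The plan is to read off the first equality from Jacobi's formula for the derivative of $\det D$, and then obtain the second by differentiating the defining relation $u=\mathcal A^{1/m}$ and using the chain rule.

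First I would pin down a sign for $\mathcal A$. On any subinterval on which $u>0$ we have $\mathcal A=|\det D|>0$, so the smooth function $\lam\mapsto\det D(\lam)$ is nowhere zero there; by continuity it keeps a constant sign on each connected component, and on the component containing $\lamo$ that sign is positive because $D(\lamo)=\Id$ has $\det=1$. Hence $\mathcal A=\det D>0$ on that component (on any other component $\mathcal A=\pm\det D$, which is immaterial for the logarithmic derivative below), and $\mathcal A$ is smooth there since $t\mapsto|t|$ is smooth away from $0$.

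Next, I would apply Jacobi's formula together with $D'=BD$ (Definition~\ref{def:Jacobi}) and $\theta=\tr B$ (Definition~\ref{def:optical}): wherever $D$ is invertible,
$$\frac{\dd}{\dd\lam}\det D=(\det D)\,\tr\!\big(D^{-1}D'\big)=(\det D)\,\tr\!\big(D^{-1}BD\big)=(\det D)\,\tr B=\theta\,\det D,$$
where the third equality is cyclicity of the trace. Dividing by $\det D=\mathcal A>0$ gives $\tfrac{1}{\mathcal A}\tfrac{\dd\mathcal A}{\dd\lam}=\theta$. Finally, since $\mathcal A>0$ the chain rule yields $u'=\tfrac1m\mathcal A^{\frac1m-1}\mathcal A'=\tfrac1m\,u\,\tfrac{\mathcal A'}{\mathcal A}=\tfrac1m\,\theta\,u$, hence $\theta=m\,u'/u$.

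I do not expect a real obstacle here: the computation is entirely routine once one commits to a sign convention for $\mathcal A$, and that sign is fixed by the normalization $D(\lamo)=\Id$ together with continuity on the conjugate-point-free set $\{u>0\}$. The only thing worth stating carefully in the writeup is this reduction of $|\det D|$ to $\det D$, so that $\mathcal A$ is a genuine positive smooth function whose logarithmic derivative may be computed.
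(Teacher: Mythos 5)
Your proposal is correct and follows essentially the same route as the paper: the paper invokes Liouville's formula for $D'=BD$ to get $(\det D)'=\theta\det D$ and then differentiates $\mathcal A=u^m$, while you merely spell out Liouville's formula via Jacobi's formula and cyclicity of the trace, and add the (harmless but welcome) observation that the sign of $\det D$ is fixed by $D(\lamo)=\Id$ on the conjugate-point-free component.
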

\begin{proof}
Liouville’s formula for matrix-valued ODEs gives
$$\frac{\dd}{\dd\lam}\det D=\tr(B)\,\det D=\theta\,\det D.$$
Taking the absolute value and using that $\mathcal A=|\det D|$, we obtain $\mathcal A'=\theta\mathcal A$ wherever $\det D\ne 0$ (hence $\mathcal A>0$). Since $\mathcal A=u^m$, differentiating with respect to $\lambda$ gives,
$$\frac{\mathcal A'}{\mathcal A}=\frac{(u^m)'}{u^m}=m\,\frac{u'}{u}.$$
Combining the two expressions yields the required identity.

\end{proof}

\begin{proposition}\label{prop:linear}
On any interval where $u>0$, one has,
$$u''+\frac{1}{m}\,f\,u=0,\qquad f:=\|\sigma\|^2+T(k,k)\ (\ge 0\ \text{under NEC}).$$
\end{proposition}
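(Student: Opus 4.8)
The plan is to differentiate the identity $\theta = m u'/u$ from Lemma \ref{lem:area} once more along $\gamma$ and feed the result into the Raychaudhuri equation \eqref{ray}, watching the quadratic term cancel. Concretely, on any interval where $u>0$ we may write $\theta = m u'/u$, so differentiating gives $\theta' = m(u''/u) - m(u'/u)^2 = m u''/u - \theta^2/m$, where in the last step I substituted $u'/u = \theta/m$ back in. Now I substitute this expression for $\theta'$ into \eqref{ray}: the left side becomes $m u''/u - \theta^2/m$, the right side is $-\theta^2/m - \|\sigma\|^2 - \mathrm{Ric}(k,k)$. The two copies of $-\theta^2/m$ cancel, leaving $m u''/u = -(\|\sigma\|^2 + \mathrm{Ric}(k,k)) = -f$, using the NEC-rewriting $\mathrm{Ric}(k,k) = T(k,k)$ from Einstein's equations. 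Multiplying through by $u/m > 0$ yields $u'' + \tfrac1m f u = 0$, as claimed. Nonnegativity of $f$ under the NEC is immediate since $\|\sigma\|^2 \ge 0$ by Definition \ref{def:optical} and $T(k,k)\ge 0$ by the NEC.

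The only genuine subtlety, and the step I would be most careful about, is the regularity needed to differentiate twice: Lemma \ref{lem:area} only gives $\theta = m u'/u$ where $\mathcal A > 0$, so I should note at the outset that on the open set $\{u>0\}$ the map $D(\lambda)$ is smooth (it solves the linear ODE $D' = BD$ with $B$ smooth along $\gamma$), hence $\mathcal A = |\det D|$ is smooth there, and therefore $u = \mathcal A^{1/m}$ is smooth with $u>0$; this legitimizes forming $u''$. Everything else is the algebraic cancellation above, which is routine. I would present this as a short computation, perhaps isolating the intermediate identity $u''/u = \theta'/m + (u'/u)^2$ on its own line before the substitution, so the cancellation of $\theta^2/m$ is transparent to the reader.
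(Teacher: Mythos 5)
Your proof is correct and follows essentially the same route as the paper: differentiate $\theta=m\,u'/u$, substitute into the Raychaudhuri equation, and observe that the quadratic term cancels. The added remark on the smoothness of $D$ (hence of $u$) on $\{u>0\}$ is a sensible precaution, though the paper takes this regularity for granted.
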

\begin{proof}
By \cref{lem:area}, $\theta=m\,u'/u$. Differentiating with respect to $\lambda$ gives,
$$ \theta' = m\left(\frac{u''}{u}-\frac{(u')^2}{u^2}\right). $$
Inserting into the Raychaudhuri equation \ref{ray} gives,
$$m\left(\frac{u''}{u}-\frac{(u')^2}{u^2}\right)
= -\frac{1}{m}\theta^2-\|\sigma\|^2-T(k,k)
= -\frac{1}{m}\,m^2\frac{(u')^2}{u^2}-\|\sigma\|^2-T(k,k).
$$
The terms in $(u'/u)^2$ cancel exactly, leaving
$$m\,\frac{u''}{u}= -\big(\|\sigma\|^2+T(k,k)\big).$$
Multiplying by $u$ gives $u''+(f/m)\ u=0$ with $f:=\|\sigma\|^2+T(k,k)\ge 0$ by NEC.

\end{proof}

\section{The barrier route to conjugacy}\label{sec:volterra}

\begin{lemma}\label{lem:volterra}
Let $u$ solve $u''+(f/m)\,u=0$ with $f\in C^0([\lamo,\lamone])$. Then for any initial data $u(\lamo)=u_0>0$, $u'(\lamo)=u_0'$, one has,
\begin{equation}\label{barrier}
    u(\lamone)=u_0+u_0'(\lamone-\lamo)-\frac{1}{m}\int_{\lamo}^{\lamone}(\lamone-s)\,f(s)\,u(s)\,\dd s.
\end{equation}
\end{lemma}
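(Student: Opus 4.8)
The statement is an exact integral identity for the solution of $u'' = -(f/m)u$. The cleanest route is to treat $-(f/m)u$ as a known inhomogeneity $g(s) := -(1/m)f(s)u(s)$ and simply integrate twice, using the elementary formula for the solution of $u'' = g$ with prescribed data at $\lambda_0$. Concretely, I would start from $u''(s) = g(s)$, integrate once from $\lambda_0$ to $t$ to get $u'(t) = u_0' + \int_{\lambda_0}^t g(s)\,\mathrm ds$, and then integrate once more from $\lambda_0$ to $\lambda_1$ to get $u(\lambda_1) = u_0 + u_0'(\lambda_1 - \lambda_0) + \int_{\lambda_0}^{\lambda_1}\!\int_{\lambda_0}^t g(s)\,\mathrm ds\,\mathrm dt$. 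The double integral is then collapsed to a single one by Fubini/Dirichlet on the triangle $\{\lambda_0 \le s \le t \le \lambda_1\}$: the inner region for fixed $s$ is $t \in [s,\lambda_1]$, contributing a factor $(\lambda_1 - s)$, so $\int_{\lambda_0}^{\lambda_1}\!\int_{\lambda_0}^t g(s)\,\mathrm ds\,\mathrm dt = \int_{\lambda_0}^{\lambda_1}(\lambda_1 - s)g(s)\,\mathrm ds$. Substituting $g = -(1/m)f u$ yields exactly \eqref{barrier}.

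The key steps in order: (1) note $u \in C^2$ since $f \in C^0$ and $u$ solves the ODE, so all integrals below are of continuous integrands and the manipulations are legitimate; (2) rewrite the ODE as $u'' = -(1/m)f u$ and integrate from $\lambda_0$ to $t$; (3) integrate the resulting first-order identity from $\lambda_0$ to $\lambda_1$; (4) apply Fubini's theorem on the triangular domain to swap the order of integration and produce the weight $(\lambda_1 - s)$; (5) collect terms. An equivalent and perhaps even slicker presentation is to verify directly that the right-hand side of \eqref{barrier}, call it $v(\lambda_1)$, satisfies $v(\lambda_0) = u_0$, $v'(\lambda_0) = u_0'$, and $v'' = -(1/m)f\,v$ would-be — but this circularity (it involves $u$, not $v$, under the integral) means the direct double-integration argument is the honest one; I would present that.

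I expect essentially no serious obstacle here: this is the standard variation-of-parameters / iterated-integral representation, and the only thing to be careful about is the orientation of the triangle in the Fubini step and the boundary terms when integrating $u'$ (there are none beyond the explicit $u_0, u_0'$ because the parametrization is affine, so $\frac{\mathrm d}{\mathrm dt}(\lambda_1 - t) = -1$ is constant). If one preferred, the weight $(\lambda_1 - s)$ can alternatively be obtained by a single integration by parts: $\int_{\lambda_0}^{\lambda_1} u''(s)\,\mathrm ds$ relates to $\int (\lambda_1 - s)u''(s)\,\mathrm ds$ via parts, with $u'(\lambda_1)$ cancelling — but Fubini is more transparent. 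The continuity hypothesis on $f$ is used only to ensure $u \in C^2$ and that the integrand $(\lambda_1 - s)f(s)u(s)$ is Riemann-integrable; if one later wants $f \in L^1_{\mathrm{loc}}$ the same identity persists with $u \in C^1$ and $u'$ absolutely continuous, but I would not belabor that generalization here.
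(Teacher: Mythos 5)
Your proposal is correct and follows essentially the same route as the paper: integrate the ODE once from $\lamo$ to $t$, integrate again in $t$, and apply Fubini on the triangle $\{\lamo\le s\le t\le\lamone\}$ to collapse the double integral into the weight $(\lamone-s)$. Nothing further is needed.
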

\begin{proof}
Integrating the ODE from $\lamo$ to $t$ gives,
$$u'(t)-u_0'=-\frac{1}{m}\int_{\lamo}^{t} f(s)u(s)\,\dd s.$$
Integrating in $t$ from $\lamo$ to $\lamone$ then gives,
$$u(\lamone)-u_0 = u_0'(\lamone-\lamo) - \frac{1}{m}\int_{\lamo}^{\lamone}\left(\int_{\lamo}^{t} f(s)u(s)\,\dd s\right)\dd t.$$
The integration domain is the triangle $\{(\!s,t): \lamo\le s\le t\le \lamone\}$. Swapping the order (Fubini/Tonelli applies since the integrand is continuous) gives,
$$\int_{\lamo}^{\lamone}\left(\int_{t=\;s}^{\lamone}\dd t\right) f(s)u(s)\,\dd s =\int_{\lamo}^{\lamone}(\lamone-s)\,f(s)u(s)\,\dd s.$$
Substituting this into the previous equation yields the claimed identity.

\end{proof}

\begin{lemma}\label{lem:concavity}
If $f\ge 0$ and $u\ge 0$ on an interval $I$, then $u''=-(1/m)fu\le 0$ on $I$. In particular, if $u>0$ on $[\lamo,\lamone]$, then $u$ is concave on that interval.
\end{lemma}
\begin{proof}
Immediate from the ODE and the nonnegativity of $f$ and $u$.

\end{proof}

\begin{theorem}\label{thm:barrier}
Let $b(s):=u_0+u_0'(s-\lamo)$ and assume $f\ge 0$ on $[\lamo,\lamone]$, $u_0>0$, $u_0'<0$, and $b>0$ on $[\lamo,\lamone]$. Then,
\begin{enumerate}[label=\textup{(\alph*)}]
\item $u\ge b$ on $[\lamo,\lamone]$, and,
\item if
$$\frac{1}{m}\int_{\lamo}^{\lamone}(\lamone-s)\,f(s)\,b(s)\,\dd s\ >\ u_0+u_0'(\lamone-\lamo),$$
then $u(\lamone)<0$. Hence there exists $\lam_*\in(\lamo,\lamone)$ with $u(\lam_*)=0$.
\end{enumerate}
\end{theorem}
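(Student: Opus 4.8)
The plan is to read (b) off from (a) via the exact Volterra identity \eqref{barrier}, so that all the real content lives in part (a).

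Granting $u\ge b$ on $[\lamo,\lamone]$: since $f\ge0$ and $\lamone-s\ge0$ on that interval, the barrier can be inserted inside the weighted integral of \eqref{barrier},
\[
\frac1m\int_{\lamo}^{\lamone}(\lamone-s)\,f(s)\,u(s)\,\dd s\ \ge\ \frac1m\int_{\lamo}^{\lamone}(\lamone-s)\,f(s)\,b(s)\,\dd s\ >\ u_0+u_0'(\lamone-\lamo),
\]
the last inequality being the integral hypothesis of (b). Substituting this into \eqref{barrier} forces $u(\lamone)<0$. Since $u(\lamo)=u_0>0$ and $u$ is continuous, the intermediate value theorem then yields $\lam_\ast\in(\lamo,\lamone)$ with $u(\lam_\ast)=0$, which is the stated consequence.

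For part (a) I would set $w:=u-b$; because $b$ is affine, $w(\lamo)=0$, $w'(\lamo)=0$, and $w''=u''=-\tfrac1m f\,u$, so the assertion $u\ge b$ becomes $w\ge0$ with vanishing Cauchy data and $w''=-\tfrac1m f\,(w+b)$. I would run this as a continuity/bootstrap argument on the largest initial subinterval on which $u>0$: there $u$ is concave by \cref{lem:concavity}, and one exploits that $b$ is strictly decreasing (from $u_0'<0$) and stays positive on all of $[\lamo,\lamone]$ to propagate $w\ge0$ across the whole interval (in particular $u$ never vanishes before $\lamone$, so the relevant subinterval is $[\lamo,\lamone]$ itself). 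This is the step I expect to be the genuine obstacle: bare concavity controls $u$ only on \emph{one} side of its $\lamo$-tangent $b$, so getting the inequality oriented as $u\ge b$ — the direction that part (b) consumes — requires squeezing the remaining hypotheses ($u_0'<0$ together with positivity of the affine barrier throughout $[\lamo,\lamone]$), and I would want to check this orientation carefully, comparing it against \cref{lem:concavity} and, if needed, against the chord through $(\lamo,u_0)$ and $(\lamone,0)$ as the alternative affine barrier. Everything else — Fubini in \cref{lem:volterra} and monotonicity of the weighted integral in its integrand — is routine.
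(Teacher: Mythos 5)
Your reduction of (b) to (a) via the Volterra identity \eqref{barrier} is exactly the intended argument, and it is fine as far as it goes. The real issue is part (a), which you left as a ``continuity/bootstrap'' plan, and your suspicion about the orientation of the inequality is correct: that step cannot be completed, because the claim $u\ge b$ is false. On any interval where $u>0$ one has $u''=-\tfrac1m fu\le 0$, so Taylor's formula with integral remainder gives
\[
u(s)=u_0+u_0'(s-\lamo)+\int_{\lamo}^{s}(s-t)\,u''(t)\,\dd t\ \le\ b(s),
\]
i.e.\ a concave function lies \emph{below} its tangent at $\lamo$, with strict inequality as soon as $fu$ is somewhere positive. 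Concretely, take $f\equiv m$ so the equation is $u''+u=0$, with $u_0=1$ and $u_0'=-\varepsilon$ for small $\varepsilon>0$: then $w:=u-b$ satisfies $w(\lamo)=w'(\lamo)=0$ and $w''(\lamo)=-u_0<0$, so $u<b$ immediately to the right of $\lamo$, while all hypotheses of the theorem hold on a suitable $[\lamo,\lamone]$. The paper's own proof of (a) commits precisely the error you were guarding against: it infers $w\ge 0$ from $w(\lamo)=w'(\lamo)=0$ and $w''\le 0$, whereas these data force $w\le 0$. So there is no missing idea for you to find here; part (a), and with it the printed form of part (b), is incorrect, and the additional hypotheses $u_0'<0$, $b>0$ cannot rescue the orientation.

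The result can be salvaged along the lines of your closing remark about the chord. Suppose for contradiction that $u>0$ on all of $[\lamo,\lamone]$. Then $u$ is concave there by \cref{lem:concavity}, hence lies \emph{above} the chord joining $(\lamo,u_0)$ and $(\lamone,u(\lamone))$; since $u(\lamone)>0$ this yields the valid lower barrier $u(s)\ge\tilde b(s):=u_0(\lamone-s)/(\lamone-\lamo)$. Feeding $\tilde b$ into \eqref{barrier} exactly as in your part (b), the hypothesis
\[
\frac1m\int_{\lamo}^{\lamone}(\lamone-s)\,f(s)\,\tilde b(s)\,\dd s\ >\ u_0+u_0'(\lamone-\lamo)
\]
forces $u(\lamone)<0$, a contradiction, so $u$ must vanish in $(\lamo,\lamone]$. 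Note that $\tilde b\le b$ (both are affine, they agree at $\lamo$, and $\tilde b(\lamone)=0<b(\lamone)$), so this corrected integral condition is genuinely stronger than the one stated in the theorem; the version with $b$ does not follow from it.
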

\begin{proof}
(a) Define $w:=u-b$. Then $w(\lamo)=w'(\lamo)=0$. On any subinterval where $u\ge 0$, we have
$$w''=u''-b''=-\frac{1}{m}f u - 0 \le 0,$$
so $w$ is concave. If $w$ becomes negative first at $\hat\lam\in(\lamo,\lamone]$, by concavity one has $w(\hat\lam)=\min w<0$ and $w'(\hat\lam)=0$, but then the graph of $w$ would lie strictly below the tangent at $\lamo$ (which is identically zero) even for $\lam$ near $\lamo$, contradicting $w(\lamo)=w'(\lamo)=0$. Hence $w\ge 0$ and $u\ge b$.

(b) Inserting $u\ge b$ into the inequality \ref{barrier} gives,
$$u(\lamone)\le u_0+u_0'(\lamone-\lamo)-\frac{1}{m}\int_{\lamo}^{\lamone}(\lamone-s)\,f(s)\,b(s)\,\dd s.$$
If the right-hand side is negative, then $u(\lamone)<0$. Since $u(\lamo)=u_0>0$ and $u$ is continuous, the intermediate value theorem guarantees a zero in $(\lamo,\lamone)$.

\end{proof}

\section{The Sturm route to conjugacy}\label{sec:sturm}

For this section we write the Raychaudhuri equation as $y''+q(\lam)y=0$ with $q(\lam):=f(\lam)/m\ge 0.$

\begin{lemma}\label{lem:lagrange}
For $j \in\{0,1\}$, if $y_j''+q_j y_j=0$ with continuous $q_j$, then
$$\big(y_0 y_1'-y_0' y_1\big)'=(q_0-q_1)\,y_0 y_1.$$
\end{lemma}
\begin{proof}
Differentiating $y_0 y_1'-y_0' y_1$ and using $y_j''=-q_j y_j$ gives,
$$(y_0 y_1'-y_0' y_1)'=y_0' y_1'-y_0'' y_1 - y_0' y_1'+y_0 y_1'' = -(-q_0 y_0)y_1 + y_0(-q_1 y_1) = (q_0-q_1)y_0 y_1.$$

\end{proof}

\begin{theorem}\label{thm:sturm}
Let $q_1\ge q_0\ge 0$ on $[\alpha,\beta]$. If a nontrivial solution $y_0$ of $y_0''+q_0 y_0=0$ has consecutive zeros at $\xi<\eta$, then any solution $y_1$ of $y_1''+q_1 y_1=0$ has a zero in $(\xi,\eta)$.
\end{theorem}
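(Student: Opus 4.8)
The plan is to prove the classical Sturm comparison theorem by the Wronskian/Picone technique, using \cref{lem:lagrange} as the engine. First I would argue by contradiction: suppose $y_1$ has no zero in the open interval $(\xi,\eta)$. After replacing $y_1$ by $-y_1$ if necessary, we may assume $y_1>0$ throughout $(\xi,\eta)$, and by continuity $y_1\ge 0$ on the closed interval $[\xi,\eta]$. Likewise, since $\xi<\eta$ are \emph{consecutive} zeros of $y_0$, the solution $y_0$ has constant sign on $(\xi,\eta)$; replacing $y_0$ by $-y_0$ if needed, assume $y_0>0$ on $(\xi,\eta)$.

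Next I would form the Wronskian-type quantity $W:=y_0 y_1'-y_0' y_1$ and integrate the identity from \cref{lem:lagrange}, namely $W'=(q_0-q_1)\,y_0 y_1$, over $[\xi,\eta]$. This gives
$$
W(\eta)-W(\xi)=\int_{\xi}^{\eta}(q_0-q_1)\,y_0\,y_1\,\dd\lam.
$$
Since $y_0(\xi)=y_0(\eta)=0$, the boundary terms collapse to $W(\eta)-W(\xi)=-y_0'(\eta)y_1(\eta)+y_0'(\xi)y_1(\xi)$. Because $y_0>0$ on the interior and vanishes at the endpoints, we have $y_0'(\xi)\ge 0$ and $y_0'(\eta)\le 0$; moreover $y_0'(\xi)\neq 0$ and $y_0'(\eta)\neq 0$ (a nontrivial solution of a linear second-order ODE cannot have a double zero), so in fact $y_0'(\xi)>0$ and $y_0'(\eta)<0$. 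With $y_1\ge 0$ at both endpoints, the left-hand side $W(\eta)-W(\xi)=-y_0'(\eta)y_1(\eta)+y_0'(\xi)y_1(\xi)\ge 0$. On the other hand $q_0-q_1\le 0$, $y_0\ge 0$, and $y_1\ge 0$ on $[\xi,\eta]$, so the right-hand side integral is $\le 0$. Hence both sides are zero.

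To close the argument I would show that this forces a contradiction. The vanishing of the right-hand side means $(q_1-q_0)\,y_0\,y_1\equiv 0$ on $[\xi,\eta]$; in particular, on any subinterval where $q_1>q_0$ we would need $y_0 y_1\equiv 0$, but $y_0>0$ on the interior, so $y_1$ would vanish on a whole subinterval and hence (being a solution of a linear ODE) be identically zero, contradicting nontriviality. In the remaining case $q_1\equiv q_0$ on $[\xi,\eta]$, the vanishing of the left-hand side gives $y_0'(\xi)y_1(\xi)=0$ and $y_0'(\eta)y_1(\eta)=0$; since $y_0'(\xi)>0$ and $y_0'(\eta)<0$, this forces $y_1(\xi)=y_1(\eta)=0$. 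But then $y_1$ is a solution of the \emph{same} equation as $y_0$ vanishing at $\xi$, so $y_1$ is a scalar multiple of $y_0$ (the solution vanishing at $\xi$ is unique up to scale), whence $y_1>0$ on $(\xi,\eta)$ and $y_1(\eta)\neq 0$ unless $y_1\equiv 0$ — again a contradiction. Either way, $y_1$ must have a zero in $(\xi,\eta)$.

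The main obstacle, and the only place requiring genuine care rather than bookkeeping, is the degenerate boundary behavior: one must rule out $y_1$ vanishing \emph{at} an endpoint while being nonzero inside, which is why the sign facts $y_0'(\xi)>0$, $y_0'(\eta)<0$ (from no double zeros) and the uniqueness of the initial-value problem are essential. A cleaner alternative that sidesteps some of this is the Picone identity, $\big(\tfrac{y_1}{y_0}(y_0 y_1'-y_0'y_1)\big)' = (q_0-q_1)y_1^2 + y_0^2\big((y_1/y_0)'\big)^2$, integrated over a slightly shrunk interval $[\xi+\delta,\eta-\delta]$ with $\delta\to 0$; I would mention this as the structurally transparent version but carry out the Wronskian proof above since it only needs \cref{lem:lagrange}, which is already in hand.
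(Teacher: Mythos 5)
Your argument is the same Wronskian route the paper takes --- \cref{lem:lagrange} applied to $W:=y_0y_1'-y_0'y_1$, with the sign facts $y_0'(\xi)>0$, $y_0'(\eta)<0$ at a simple zero --- except that you integrate $W'=(q_0-q_1)y_0y_1$ over $[\xi,\eta]$ rather than using monotonicity of $W$. Up to the last sub-case you are actually \emph{more} careful than the paper: you correctly allow $y_1$ to vanish at an endpoint, whereas the paper's proof silently assumes $y_1(\xi)\neq 0$ and $y_1(\eta)\neq 0$ when it writes $W(\xi)<0<W(\eta)$.

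The gap is in your final degenerate case. Having deduced $q_1\equiv q_0$ and $y_1(\xi)=y_1(\eta)=0$, you conclude $y_1=c\,y_0$ and then assert $y_1(\eta)\neq 0$ unless $y_1\equiv 0$. That is false: $y_1(\eta)=c\,y_0(\eta)=0$, so there is no contradiction. And this is not a repairable oversight --- it is exactly the case in which the theorem, as literally stated with the \emph{open} interval and ``any solution,'' fails: take $q_0=q_1=1$ on $[0,\pi]$ and $y_0=y_1=\sin$, which has no zero in $(0,\pi)$. So no proof of the statement as written can exist; the correct conclusion is a zero in $(\xi,\eta]$ (or in $(\xi,\eta)$ under the extra hypothesis that $y_1$ is not proportional to $y_0$ when $q_1\equiv q_0$), which is all that is used in \cref{thm:separation} (linearly independent solutions) and \cref{thm:intcrit} (half-open interval). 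The paper's own proof breaks at the same point, just less visibly. Your first sub-case (some point with $q_1>q_0$, hence a subinterval by continuity, forcing $y_1\equiv 0$) is fine.
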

\begin{proof}
Assume for contradiction that a solution $y_1$ has no zero in $(\xi,\eta)$. Multiply by $-1$ if necessary so that $y_1>0$ there. Between consecutive zeros, $y_0$ has constant sign. Replace $y_0$ by $-y_0$ if needed so $y_0>0$ on $(\xi,\eta)$. Define the Wronskian-like quantity $W:=y_0 y_1'-y_0' y_1$. By \cref{lem:lagrange}, $W'=(q_0-q_1)y_0 y_1\le 0$, so $W$ is nonincreasing. We then evaluate one-sided limits at the endpoints. Since $y_0(\xi)=0$ and $y_0'(\xi)>0$ (a simple zero with sign change from negative to positive), we have $W(\xi)=-y_0'(\xi)y_1(\xi)<0$. Similarly, $y_0(\eta)=0$ and $y_0'(\eta)<0$, hence $W(\eta)=-y_0'(\eta)y_1(\eta)>0$. But a nonincreasing $W$ cannot go from negative to positive, which proves the claim.

\end{proof}

\begin{theorem}\label{thm:separation}
If $y$ is a nontrivial solution of $y''+q y=0$ with continuous $q$, then between two consecutive zeros of $y$, every linearly independent solution has exactly one zero.
\end{theorem}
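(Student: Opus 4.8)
The plan is to run the classical Sturm separation argument with the constant Wronskian as the single workhorse. Fix consecutive zeros $\xi<\eta$ of the nontrivial solution $y$ and let $z$ be a solution linearly independent of $y$. Since $y$ and $z$ solve the \emph{same} equation $w''+qw=0$, I would first check that $W:=yz'-y'z$ is constant: $W'=yz''-y''z=-qyz+qyz=0$. Linear independence of $y$ and $z$ is then equivalent to $W\equiv c$ with $c\neq 0$, because by uniqueness for the linear initial value problem $W$ vanishes somewhere iff it vanishes identically iff the two solutions are proportional. Evaluating $W$ at an endpoint, say $c=W(\xi)=-y'(\xi)\,z(\xi)$, immediately yields two facts I will use repeatedly: $y'(\xi)\neq 0$ (the zero of $y$ at $\xi$ is simple) and $z(\xi)\neq 0$; the same computation at $\eta$ gives $z(\eta)\neq 0$. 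Also $y$ keeps a fixed sign on $(\xi,\eta)$, since $\xi,\eta$ are consecutive zeros.

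For the existence of a zero of $z$ strictly inside $(\xi,\eta)$, I would argue by contradiction. If $z$ had no zero in $(\xi,\eta)$ then, together with $z(\xi),z(\eta)\neq 0$, the quotient $g:=y/z$ would be $C^1$ on the closed interval $[\xi,\eta]$ with $g(\xi)=g(\eta)=0$; but $g'=(y'z-yz')/z^2=-W/z^2=-c/z^2$ is nowhere zero, contradicting Rolle's theorem. (When $q\ge 0$, as in the geometric application where $q=f/m$, one may shortcut this by invoking \cref{thm:sturm} with $q_0=q_1=q$.) Hence $z$ has at least one zero in $(\xi,\eta)$.

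For uniqueness, suppose $z$ had at least two zeros in $(\xi,\eta)$; I would select among them a \emph{consecutive} pair $\zeta_1<\zeta_2$ (this refinement is the small but essential step). Re-running the existence argument with the roles of $y$ and $z$ interchanged on $(\zeta_1,\zeta_2)$ forces $y$ to vanish somewhere in $(\zeta_1,\zeta_2)\subset(\xi,\eta)$, contradicting that $\xi$ and $\eta$ are consecutive zeros of $y$. Therefore $z$ has exactly one zero in $(\xi,\eta)$, and by the endpoint computation this zero is simple.

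I do not expect a serious obstacle here; the only points requiring care are (i) keeping the argument valid for general continuous $q$ — hence the Wronskian/Rolle route rather than a bare appeal to \cref{thm:sturm}, which presupposes $q\ge 0$ — and (ii) passing from an arbitrary pair of zeros of $z$ to a consecutive pair before reapplying the existence step, so that the hypothesis ``consecutive zeros'' is genuinely used.
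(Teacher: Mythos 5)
Your proof is correct, and it takes a somewhat different (and in two respects more careful) route than the paper's. For existence, the paper invokes \cref{thm:sturm} with $q_0=q_1=q$; but \cref{thm:sturm} is stated under the hypothesis $q_1\ge q_0\ge 0$, whereas the separation theorem assumes only that $q$ is continuous, so that appeal formally overshoots the stated generality. Your route — constancy of $W=yz'-y'z$, the endpoint evaluations $W(\xi)=-y'(\xi)z(\xi)\ne 0$ and $W(\eta)=-y'(\eta)z(\eta)\ne 0$, and Rolle applied to $y/z$ — needs no sign condition on $q$ and matches the theorem as stated. For uniqueness, the paper asserts that two zeros of $\tilde y$ in $(\xi,\eta)$ would force the Wronskian to vanish at two points; as written this is not right, since at a zero $\zeta$ of $\tilde y$ one has $W(\zeta)=y(\zeta)\tilde y'(\zeta)$, which has no reason to vanish. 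Your symmetric argument — pass to a consecutive pair of zeros $\zeta_1<\zeta_2$ of $z$ (legitimate, since zeros of a nontrivial solution are simple, hence isolated and finite in number on a compact interval) and rerun the existence step with the roles of $y$ and $z$ exchanged to force a zero of $y$ inside $(\xi,\eta)$ — is the standard clean way to close this and is fully rigorous. Both proofs ultimately rest on the same invariant (the constant, nonzero Wronskian of two linearly independent solutions); yours simply uses it directly via Rolle rather than routing through the comparison theorem.
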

\begin{proof}
Fix consecutive zeros $\xi<\eta$ of $y$. By \cref{thm:sturm} applied with $q_0=q_1=q$, every solution has at least one zero in $(\xi,\eta)$. If a linearly independent solution $\tilde y$ had two or more zeros there, then the Wronskian $W=y\tilde y'-y'\tilde y$ would vanish at two points and hence be identically zero (for instance by Abel’s identity for continuous $q$), implying linear dependence. Thus exactly one zero occurs.

\end{proof}

\subsection{Prüfer transform and integral criterion}

\begin{definition}[Prüfer coordinates]\label{def:pruefer}
Assume $q\in C^1([a,b])$ and $q>0$ on $[a,b]$. For any nontrivial solution $y$, define functions $\rho>0$ and a continuous angle $\vartheta$ by,
$$y\coloneqq\rho\sin\vartheta,\qquad \frac{y'}{\sqrt{q}}\coloneqq \rho\cos\vartheta,$$
with the branch fixed by $\vartheta(a)=0$ if $y(a)=0$ and $y'(a)>0$.
\end{definition}

\begin{lemma}\label{lem:pruefer-ode}
With $q\in C^1$ and $q>0$, one has,
$$\vartheta'=\sqrt{q}+\frac{q'}{2q}\,\sin\vartheta\cos\vartheta,\qquad\frac{\rho'}{\rho}=-\frac{q'}{2q}\,\sin^2\vartheta.$$
\end{lemma}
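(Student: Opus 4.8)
The plan is to differentiate the two defining relations of \cref{def:pruefer} and substitute the ODE $y''=-qy$, obtaining a linear $2\times2$ system for the pair $(\rho',\vartheta')$ which can be solved explicitly.

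First I would record what the definition gives directly: from $y=\rho\sin\vartheta$ and $y'/\sqrt q=\rho\cos\vartheta$ we get $y'=\sqrt q\,\rho\cos\vartheta$, and since $\rho^2=y^2+(y')^2/q$ is strictly positive (a nontrivial solution of a linear second-order ODE cannot have $y$ and $y'$ vanish simultaneously), the functions $\rho>0$ and $\vartheta$ are genuinely $C^1$ on $[a,b]$ whenever $q\in C^1$ and $q>0$, with $\vartheta$ the continuous lift normalized at $a$ as prescribed.

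Next, differentiate $y=\rho\sin\vartheta$ to get $y'=\rho'\sin\vartheta+\rho\vartheta'\cos\vartheta$; comparing with $y'=\sqrt q\,\rho\cos\vartheta$ yields the first relation
$$\rho'\sin\vartheta+\rho\vartheta'\cos\vartheta=\sqrt q\,\rho\cos\vartheta.$$
Then differentiate $y'=\sqrt q\,\rho\cos\vartheta$, using $y''=-qy=-q\rho\sin\vartheta$ and $(\sqrt q)'=q'/(2\sqrt q)$, to obtain the second relation
$$\sqrt q\,\rho'\cos\vartheta+\frac{q'}{2\sqrt q}\,\rho\cos\vartheta-\sqrt q\,\rho\vartheta'\sin\vartheta=-q\rho\sin\vartheta.$$
Now solve this system: multiplying the first relation by $\cos\vartheta$, dividing the second by $\sqrt q$ and multiplying it by $\sin\vartheta$, and subtracting, eliminates $\rho'$ and isolates $\vartheta'$; multiplying the first by $\sin\vartheta$, dividing the second by $\sqrt q$ and multiplying it by $\cos\vartheta$, and adding, eliminates $\vartheta'$ and isolates $\rho'$. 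Dividing through by $\rho>0$ and using $\sin^2\vartheta+\cos^2\vartheta=1$ gives the two stated formulas. An equivalent shortcut is to differentiate $\rho^2=y^2+(y')^2/q$ directly --- the cross term $2yy'$ cancels against $2y'y''/q=-2yy'$, leaving only the $-q'(y')^2/q^2$ contribution --- and to differentiate $\tan\vartheta=\sqrt q\,y/y'$, each of which reads off one of the two identities after re-expressing $y$ and $y'$ through $\rho,\vartheta$.

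I expect no genuine obstacle here: this is the classical Prüfer computation. The only points needing care are the bookkeeping of the $q'/(2\sqrt q)$ terms produced by differentiating $\sqrt q$ --- these are precisely what generate the $q'/(2q)$ coefficients in both equations --- and verifying that $\rho$ never vanishes so that the final division by $\rho$ is legitimate, which is immediate from $\rho^2=y^2+(y')^2/q$ together with uniqueness for the linear ODE. Note also that only the $\vartheta'$ equation is used in the zero-counting that follows, since a zero of $y$ corresponds to $\vartheta\in\pi\mathbb Z$, where $\vartheta'=\sqrt q>0$ forces transversality.
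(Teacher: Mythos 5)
Your strategy coincides with the paper's: differentiate the two defining relations of the Prüfer coordinates, substitute $y''=-qy$, and solve the resulting $2\times 2$ linear system for $(\rho',\vartheta')$. For the angle equation this is correct and complete, and your two side remarks are genuine improvements on the paper's write-up: the observation that $\rho^2=y^2+(y')^2/q$ cannot vanish for a nontrivial solution (so $\rho>0$, the continuous lift $\vartheta$ is $C^1$, and the final division by $\rho$ is legitimate), and the remark that zeros of $y$ occur exactly at $\vartheta\in\pi\mathbb Z$, where $\vartheta'=\sqrt q>0$ forces transversality.

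There is, however, one assertion you make without carrying out the computation, and it does not check out: the elimination you describe does \emph{not} yield the stated formula for $\rho'/\rho$. With the convention $y=\rho\sin\vartheta$, $y'=\sqrt q\,\rho\cos\vartheta$, multiplying your first relation by $\sin\vartheta$, the $\sqrt q$-normalized second relation by $\cos\vartheta$, and adding gives
$$\rho'+\frac{q'}{2q}\,\rho\cos^2\vartheta=0,\qquad\text{i.e.}\qquad \frac{\rho'}{\rho}=-\frac{q'}{2q}\,\cos^2\vartheta,$$
not $-\tfrac{q'}{2q}\sin^2\vartheta$. Your own shortcut makes this transparent: differentiating $\rho^2=y^2+(y')^2/q$ leaves exactly $-q'(y')^2/q^2$, and $(y')^2/q=\rho^2\cos^2\vartheta$ by definition, so the surviving trigonometric factor is $\cos^2\vartheta$. (The $\sin^2\vartheta$ version belongs to the alternative normalization $\sqrt q\,y=\rho\sin\vartheta$, $y'=\rho\cos\vartheta$, and even there with the opposite sign.) So the second identity of the lemma as stated cannot be derived for the convention adopted; you should either correct it to $\cos^2\vartheta$ or explicitly flag the discrepancy rather than asserting that the algebra "gives the two stated formulas." Fortunately the error is immaterial downstream, since only the $\vartheta'$ equation enters the lower bound on $\vartheta(b)-\vartheta(a)$ and the integral criterion, but a proof claiming both displayed identities must actually verify the second one.
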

\begin{proof}
Differentiating $y=\rho\sin\vartheta$ gives,
$$y'=\rho'\sin\vartheta+\rho\vartheta'\cos\vartheta.$$
But $y'=\sqrt{q}\,\rho\cos\vartheta$ by definition, so,
\begin{equation}\label{1} \rho'\sin\vartheta+\rho\vartheta'\cos\vartheta=\sqrt{q}\,\rho\cos\vartheta.
\end{equation}
Differentiating $y'/\sqrt{q}=\rho\cos\vartheta$ gives,
$$\frac{y''}{\sqrt{q}}-\frac{q'}{2q\sqrt{q}}\,y'=\rho'\cos\vartheta-\rho\vartheta'\sin\vartheta.$$
Since $y''=-q y=-q\rho\sin\vartheta$ and $y'=\sqrt{q}\,\rho\cos\vartheta$, this becomes,
\begin{equation}\label{2}
    -\sqrt{q}\,\rho\sin\vartheta - \frac{q'}{2q}\,\rho\cos\vartheta = \rho'\cos\vartheta-\rho\vartheta'\sin\vartheta
\end{equation}
Treating equations \eqref{1} and \eqref{2} as a linear system in the unknowns $\rho'$ and $\vartheta'$, we multiply \eqref{1} by $\sin\vartheta$ and \eqref{2} by $\cos\vartheta$, and add them to eliminate $\rho'$, giving,
$$\rho\vartheta'(\cos^2\vartheta+\sin^2\vartheta)= \sqrt{q}\,\rho\cos^2\vartheta - \sqrt{q}\,\rho\sin^2\vartheta - \frac{q'}{2q}\,\rho\cos^2\vartheta.$$
It follows that,
$$\vartheta'=\sqrt{q}\,(\cos^2\vartheta-\sin^2\vartheta)-\frac{q'}{2q}\cos^2\vartheta=\sqrt{q}\,\cos(2\vartheta)-\frac{q'}{2q}\frac{1+\cos(2\vartheta)}{2}.$$
Equivalently, using $\cos(2\vartheta)=1-2\sin^2\vartheta$ and $\sin\vartheta\cos\vartheta=\sin(2\vartheta)/2$, one simplifies to,
$$\vartheta'=\sqrt{q}+\frac{q'}{2q}\sin\vartheta\cos\vartheta.$$
To obtain $\rho'/\rho$, we multiply \eqref{1} by $\cos\vartheta$ and \eqref{2} by $\sin\vartheta$ and subtract them, eliminating $\theta'$. This gives,
$$\rho'\sin\vartheta\cos\vartheta+\rho\vartheta'\cos^2\vartheta= \sqrt{q}\,\rho\cos^2\vartheta,$$
$$-\sqrt{q}\,\rho\sin^2\vartheta - \frac{q'}{2q}\,\rho\sin\vartheta\cos\vartheta= \rho'\cos^2\vartheta-\rho\vartheta'\sin\vartheta\cos\vartheta.$$
Adding the two equations to eliminate $\vartheta'$ and solving for $\rho'/\rho$ yields,
$$\frac{\rho'}{\rho}=-\frac{q'}{2q}\sin^2\vartheta.$$
This completes the proof.

\end{proof}

\begin{lemma}\label{lem:theta-lb}
For $q\in C^1$ with $q>0$ and $a<b$,
\[
\vartheta(b)-\vartheta(a)\ \ge\ \int_a^b \sqrt{q}\,\dd\lam - \frac{1}{4}\int_a^b \frac{|q'|}{q}\,\dd\lam.
\]
\end{lemma}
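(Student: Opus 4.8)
The plan is to start from the Prüfer angle ODE established in Lemma~\ref{lem:pruefer-ode}, namely
\[
\vartheta'=\sqrt{q}+\frac{q'}{2q}\sin\vartheta\cos\vartheta,
\]
and simply integrate it over $[a,b]$. The first term contributes exactly $\int_a^b\sqrt q\,\dd\lam$, so the whole game is to bound the error term $\int_a^b \frac{q'}{2q}\sin\vartheta\cos\vartheta\,\dd\lam$ from below. The elementary pointwise estimate $|\sin\vartheta\cos\vartheta|=\tfrac12|\sin(2\vartheta)|\le\tfrac12$ gives
\[
\left|\frac{q'}{2q}\sin\vartheta\cos\vartheta\right|\le\frac{|q'|}{4q},
\]
so that $\frac{q'}{2q}\sin\vartheta\cos\vartheta\ge-\frac{|q'|}{4q}$ pointwise. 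Integrating this inequality together with the $\sqrt q$ term yields precisely
\[
\vartheta(b)-\vartheta(a)=\int_a^b\!\sqrt q\,\dd\lam+\int_a^b\frac{q'}{2q}\sin\vartheta\cos\vartheta\,\dd\lam\ \ge\ \int_a^b\!\sqrt q\,\dd\lam-\frac14\int_a^b\frac{|q'|}{q}\,\dd\lam,
\]
which is the claimed bound.

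The key steps, in order, are: (i) invoke Lemma~\ref{lem:pruefer-ode} for the formula for $\vartheta'$, noting that the hypotheses $q\in C^1$ and $q>0$ on $[a,b]$ are exactly those needed for the Prüfer coordinates and the angle ODE to be valid and for $q'/q$ to be continuous, hence integrable, on the compact interval; (ii) integrate the ODE from $a$ to $b$, using the fundamental theorem of calculus, which is legitimate since $\vartheta$ is $C^1$; (iii) apply the trigonometric bound $|\sin\vartheta\cos\vartheta|\le\tfrac12$ inside the integral; (iv) collect terms. There is essentially no analytic subtlety beyond checking integrability, which is immediate on a compact interval with $q>0$ continuous.

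I do not expect a genuine obstacle here --- this is a short computation. The only point worth stating carefully is that the bound on $\int\frac{q'}{2q}\sin\vartheta\cos\vartheta$ uses the \emph{signed} inequality $\frac{q'}{2q}\sin\vartheta\cos\vartheta\ge-\frac{|q'|}{4q}$ rather than a bound on the absolute value of the whole integral, so that the $\int\sqrt q$ term is preserved with a plus sign; writing $|\,\cdot\,|\le\tfrac14\int\frac{|q'|}{q}$ and then subtracting is what produces the one-sided estimate in the statement. If one wanted the matching upper bound $\vartheta(b)-\vartheta(a)\le\int_a^b\sqrt q\,\dd\lam+\tfrac14\int_a^b\frac{|q'|}{q}\,\dd\lam$ it follows by the same argument with the reversed pointwise inequality, though only the lower bound is needed downstream (to force $\vartheta$ to increase past $\pi$ and thus produce a zero of $y$, feeding into the Sturm/Prüfer criterion $\int_c^d\sqrt{f/m}\,\dd\lam>\pi$).
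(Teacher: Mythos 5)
Your proof is correct and is essentially identical to the paper's: both invoke the Prüfer angle ODE from Lemma~\ref{lem:pruefer-ode}, apply the pointwise bound $|\sin\vartheta\cos\vartheta|\le\tfrac12$ to get $\vartheta'\ge\sqrt{q}-\tfrac{|q'|}{4q}$, and integrate over $[a,b]$. No issues.
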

\begin{proof}
From \cref{lem:pruefer-ode}, one has $\vartheta'=\sqrt{q}+(q'\sin\vartheta\cos\vartheta)/2q$. Using $|\sin\vartheta\cos\vartheta|\le 1/2$, one has,
$$\vartheta'\ \ge\ \sqrt{q}-\frac{|q'|}{4q}.$$
Integrating from $a$ to $b$ completes the proof.

\end{proof}

\begin{theorem}\label{thm:intcrit}
Let $q=f/m$ be continuous and nonnegative on $[a,b]$. If there exists a compact subinterval $[c,d]\subset[a,b]$ with
$$\int_c^d \sqrt{q(\lam)}\,\dd\lam \;=\;\int_c^d \sqrt{\frac{f(\lam)}{m}}\,\dd\lam\ >\ \pi,$$
then $y''+q y=0$ is not disconjugate on $[c,d]$. In particular, the solution with $y(c)=0$, $y'(c)=1$ has a second zero in $(c,d]$.
\end{theorem}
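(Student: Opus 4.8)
The plan is to reduce the statement to a Sturm comparison against a constant-coefficient equation on a suitably small sub-subinterval of $[c,d]$, exploiting the fact that the hypothesis $\int_c^d\sqrt{q}\,\dd\lam>\pi$ is an \emph{integral} condition that localizes. The subtlety is that $q$ may vanish somewhere on $[c,d]$ (we only assume $q\ge 0$), so the Prüfer machinery of \cref{def:pruefer}--\cref{lem:theta-lb}, which requires $q>0$, cannot be applied directly on all of $[c,d]$; one must either first pass to a subinterval where $q$ is bounded below, or argue by an approximation/continuity step.

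The cleanest route I would take is the following. First, I would use continuity of $\lam\mapsto\int_c^\lam\sqrt{q}$ and the strict inequality to extract a compact $[c',d']\subset(c,d)$ with $\int_{c'}^{d'}\sqrt{q}>\pi$ still holding; shrinking slightly more if necessary, I can also arrange (again by absolute continuity of the integral) that the ``trouble set'' where $q$ is small contributes negligibly, and in fact pick $[c',d']$ on which $q\ge \varepsilon_0>0$ for some constant $\varepsilon_0$ — this is possible because if $q$ were $<\varepsilon$ on a set of large measure the integral $\int\sqrt q$ would be forced below $\pi$ for small $\varepsilon$; more carefully, one chooses $\varepsilon_0$ with $(d-c)\sqrt{\varepsilon_0}$ small, discards the open set $\{q<\varepsilon_0\}$ (whose $\sqrt{q}$-contribution is at most $\sqrt{\varepsilon_0}(d-c)$), and on the remaining compact set, which still carries $\sqrt q$-mass exceeding $\pi$, picks a single closed subinterval $[c',d']$ doing so. Then set $q_0:=\varepsilon_0$ (constant) on $[c',d']$, so $q\ge q_0$ there and $q_0\in C^1$, $q_0>0$. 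The comparison solution $y_0(\lam)=\sin(\sqrt{q_0}(\lam-c'))$ of $y_0''+q_0y_0=0$ has consecutive zeros $c'$ and $c'+\pi/\sqrt{q_0}$.

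Next, I would split into two cases according to whether this gap fits inside $[c',d']$. If $c'+\pi/\sqrt{q_0}\le d'$, then by the Sturm comparison \cref{thm:sturm} (with $q_1:=q\ge q_0$) any solution of $y''+qy=0$, in particular the one with $y(c)=0$, $y'(c)=1$, has a zero in $(c',c'+\pi/\sqrt{q_0})\subset(c,d)$; a further application of \cref{thm:sturm} on the next $q_0$-gap, or simply the already-established first zero together with the hypothesis, locates a second zero, using that after the first zero the solution is again nontrivial and \cref{thm:separation}/\cref{thm:sturm} force another sign change before the comparison solution completes its next half-period — here I would instead argue more economically by applying the oscillation count directly: the number of zeros of $y$ in $[c,d]$ is at least the number of disjoint $q_0$-half-periods packed into $[c',d']$, which by $\sqrt{q_0}(d'-c')\le\int_{c'}^{d'}\sqrt q$... this does not quite work since we replaced $q$ by the constant $q_0\le q$. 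So the better bookkeeping is: use the Prüfer angle for the \emph{actual} solution $y$ on a subinterval where $q>0$. I would therefore prefer to run \cref{lem:theta-lb} on $[c',d']$ directly, but that introduces the $\int|q'|/q$ error term, which is not controlled by the hypotheses. This tension is exactly why the constant-comparison route is preferable, and the correct way to get the \emph{second} zero from it is to note that $\int_{c'}^{d'}\sqrt{q}>\pi$ does not by itself give two $q_0$-half-periods; instead I will choose $[c',d']$ at the outset so that $\int_{c'}^{d'}\sqrt{q}>\pi$ and additionally, after the first comparison zero $\lam_1$, re-run the argument on $[\lam_1,d']$ using that $\int_{\lam_1}^{d'}\sqrt q$ may still be positive but need not exceed $\pi$ — hence the honest statement is that we obtain a \emph{first} zero in $(c,d)$, and the ``second zero'' assertion follows by applying the whole theorem's hypothesis to the pair $(c,\lam_1)$ replaced appropriately, or simply by invoking that the solution with $y(c)=0,y'(c)=1$ which has a zero $\lam_1\in(c,d)$ then satisfies, on $[\lam_1,d]$, $\int_{\lam_1}^{d}\sqrt q=\int_c^d\sqrt q-\int_c^{\lam_1}\sqrt q$, and one only needs this to be positive to get, via \cref{thm:sturm} with a constant comparison of \emph{tiny} period, nothing — so in fact the statement should be read as: non-disconjugacy, i.e. \emph{some} nontrivial solution has two zeros in $[c,d]$, which is exactly \cref{thm:separation} applied to $y_0$ restricted appropriately once we know \emph{every} solution has one zero.

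Thus the streamlined proof I would write is: (1) extract $[c',d']\subset[c,d]$ and constant $q_0>0$ with $q\ge q_0$ on $[c',d']$ and $\sqrt{q_0}\,(d'-c')\ge\int_{c'}^{d'}\sqrt q - (d-c)\sqrt{\varepsilon_0}>\pi$ after choosing $\varepsilon_0$ small and, crucially, choosing $[c',d']$ so that $q\ge q_0$ holds with $q_0$ large enough that $\sqrt{q_0}(d'-c')>\pi$ — achievable because on the complement of $\{q<\varepsilon_0\}$ we can find an interval of length $\ell$ with $\sqrt{\varepsilon_0}\,\ell$ close to $\int\sqrt q>\pi$... I will instead make this precise by taking $q_0:=\inf_{[c',d']}q$ and noting $\sqrt{q_0}(d'-c')\le\int_{c'}^{d'}\sqrt q$, which is the \emph{wrong} direction, so the constant comparison only yields a half-period $\pi/\sqrt{q_0}\ge (d'-c')$, not fitting. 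The resolution — and the genuine content of the theorem — is that one does \emph{not} compare to a constant but applies \cref{lem:theta-lb} after a $C^1$ mollification of $q$ from below: replace $q$ by a smooth $\tilde q$ with $0<\tilde q\le q$, $\int_c^d\sqrt{\tilde q}>\pi$, and $\tilde q'$ controlled; then \cref{lem:theta-lb} gives $\vartheta(d)-\vartheta(c)\ge\int_c^d\sqrt{\tilde q}-\tfrac14\int_c^d|\tilde q'|/\tilde q>\pi$ once the mollification is chosen with $\int|\tilde q'|/\tilde q<4(\int\sqrt{\tilde q}-\pi)$, so the Prüfer angle of $\tilde y$ increases by more than $\pi$, forcing $\tilde y$ to vanish at two points of $[c,d]$ (angle crosses a multiple of $\pi$ twice), and then \cref{thm:sturm} transfers a zero to the true solution $y$ with $y(c)=0$. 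I expect the \textbf{main obstacle} to be precisely this construction of the from-below $C^1$ approximation $\tilde q$ with simultaneously controlled total variation of $\log\tilde q$; plateau-type mollifiers that are flat except on thin transition layers make $\int|\tilde q'|/\tilde q$ as small as desired, so it is doable, but it is the one genuinely technical point, and I would present it as a short lemma. Once the Prüfer angle on $[c,d]$ for $\tilde y$ exceeds $\pi$, the ``second zero in $(c,d]$'' for the designated solution follows from \cref{thm:sturm} (first zero) together with one more pass on the residual interval, or simply from \cref{thm:separation} and the fact that non-disconjugacy is the stated conclusion.
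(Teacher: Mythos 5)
Your proposal ultimately lands on the same strategy as the paper's own proof (the Prüfer bound of \cref{lem:theta-lb}, a $C^1$ approximation of $q$ from below, and a transfer of zeros via \cref{thm:sturm}), and you correctly isolate the crux: the error term $\tfrac14\int_c^d|q'|/q\,\dd\lam$ must be beaten by the surplus $\int_c^d\sqrt q-\pi$. But your proposed resolution of that crux --- that ``plateau-type mollifiers that are flat except on thin transition layers make $\int|\tilde q'|/\tilde q$ as small as desired'' --- is false. The quantity $\int_c^d|\tilde q'|/\tilde q\,\dd\lam$ is exactly the total variation of $\log\tilde q$ on $[c,d]$; squeezing a transition from level $A$ to level $B$ into a thin layer does not shrink its contribution, which remains $|\log(B/A)|$ however thin the layer. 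The only way to make this variation small is to make $\tilde q$ nearly constant, and since $\tilde q\le q$ forces the constant to be at most $\inf q$, you are back to the constant comparison that you yourself correctly identified as pointing ``the wrong direction.''

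This gap cannot be closed, because the statement is false as it stands. Consider the Euler equation $y''+\tfrac{1}{4\lam^2}\,y=0$ on $[1,b]$: the general solution is $\sqrt{\lam}\,(A+B\log\lam)$, which has at most one zero on $(0,\infty)$, so the equation is disconjugate on $[1,b]$ for every $b$ (and the designated solution $y=\sqrt{\lam}\log\lam$, with $y(1)=0$, $y'(1)=1$, never vanishes again); yet $\int_1^b\sqrt{q}\,\dd\lam=\tfrac12\log b>\pi$ once $b>e^{2\pi}$. Consistently, for this $q$ one computes $\int\sqrt q-\tfrac14\int|q'|/q\equiv 0$, so \cref{lem:theta-lb} yields no oscillation: the variation term is not a removable technicality but the precise obstruction. (The paper's own proof of \cref{thm:intcrit} has the same defect --- ``shrink $[c,d]$ so the variation term is small'' also shrinks $\int\sqrt q$, and the example shows one cannot in general keep the latter above $\pi$.) A correct statement needs an extra hypothesis controlling the variation of $\log q$, e.g.\ $\int_c^d\sqrt q-\tfrac14\int_c^d|q'|/q>\pi$, or a pointwise bound $q\ge q_0$ with $\sqrt{q_0}\,(d-c)>\pi$ as in \cref{cor:const}; either of those your machinery does prove.
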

\begin{proof}
First we treat the case $q\in C^1$ and $q>0$ on $[c,d]$. Let $y$ solve $y''+q y=0$ with $y(c)=0$, $y'(c)>0$. In Prüfer coordinates, $\vartheta(c)=0$. By \cref{lem:theta-lb}, one has,
$$\vartheta(d)-\vartheta(c)\ \ge\ \int_c^d\sqrt{q}\,\dd\lam-\frac{1}{4}\int_c^d\frac{|q'|}{q}\,\dd\lam.$$
Since $\int_c^d \sqrt{q}>\pi$, continuity ensures a subinterval where the total lower bound still exceeds $\pi$ (shrink $[c,d]$ if necessary so the variation term is finite and small). When $\vartheta$ reaches $\pi$, $\sin\vartheta=0$ and hence $y=0$. Hence a zero occurs in $(c,d]$. For merely continuous $q\ge 0$, we approximate from below by $q_\epsilon\in C^1$ with $0<q_\epsilon\le q$ and $q_\epsilon\to q$ uniformly. Then $\int_c^d\sqrt{q_\epsilon}\to\int_c^d\sqrt{q}>\pi$. For each $\epsilon$, the equation $y_\epsilon''+q_\epsilon y_\epsilon=0$ has a second zero in $(c,d]$. By Sturm comparison (\cref{thm:sturm}), any solution for $q$ must have at least as many zeros. In particular the solution with $y(c)=0$, $y'(c)=1$ also has a second zero in $(c,d]$.

\end{proof}

\begin{corollary}\label{cor:const}
If $f\ge f_0>0$ on $[\lamo,\lamo+L]$, then every nontrivial solution of $u''+(f/m)u=0$ has a zero before
$$\lamo+\pi\,\sqrt{\frac{m}{f_0}}.$$
\end{corollary}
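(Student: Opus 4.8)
The plan is to reduce everything to \cref{thm:intcrit}, which already does the analytic heavy lifting. Under the hypothesis $f\ge f_0>0$ on the interval $[\lamo,\lamo+L]$, set $q:=f/m$ as in the statement, so that $q\ge q_0:=f_0/m>0$ there. The point is to choose a subinterval $[c,d]\subset[\lamo,\lamo+L]$ on which the integral criterion $\int_c^d\sqrt q\,\dd\lam>\pi$ is met, and then invoke \cref{thm:intcrit} to conclude that $u''+(f/m)u=0$ is not disconjugate on $[c,d]$, i.e. every nontrivial solution has two zeros in $[c,d]$, hence in particular a zero in $[\lamo,\lamo+L]$.

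First I would record the elementary lower bound $\int_c^d\sqrt{q}\,\dd\lam\ge\sqrt{f_0/m}\,(d-c)$, valid for any $[c,d]$, which follows pointwise from $q=f/m\ge f_0/m$. Hence if $d-c>\pi\sqrt{m/f_0}$ the integral exceeds $\pi$. Next I would observe that the interval $[\lamo,\lamo+L]$ contains such a subinterval precisely when $L>\pi\sqrt{m/f_0}$; in the boundary-case sharpening one can take $[c,d]=[\lamo,\lamo+L]$ with $L=\pi\sqrt{m/f_0}$ and note that the criterion in \cref{thm:intcrit} is then met with equality only in the degenerate constant case, so a small additional argument (or simply the phrasing "has a zero before $\lamo+\pi\sqrt{m/f_0}$", i.e. a strict bound in the generic case and non-strict as a limiting statement) handles it. I would then apply \cref{thm:intcrit} with $[c,d]=[\lamo,\lamo+\pi\sqrt{m/f_0}]$ (intersected with $[\lamo,\lamo+L]$ if $L$ is only assumed $\ge$ this value) to the solution with $u(\lamo)=u_0$, $u'(\lamo)=u_0'$: the theorem guarantees a zero of $u$ strictly inside, which is exactly the assertion that every nontrivial solution vanishes before $\lamo+\pi\sqrt{m/f_0}$.

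The only subtlety — and the step I expect to require the most care in the write-up — is the treatment of the endpoint and the strictness of the inequality. \cref{thm:intcrit} requires $\int_c^d\sqrt q\,\dd\lam>\pi$ strictly, whereas with $L=\pi\sqrt{m/f_0}$ and $f\equiv f_0$ one gets exactly $\pi$; in that degenerate constant-coefficient case the solution $u=\sin(\sqrt{f_0/m}\,(\lam-\lamo))$ first returns to zero exactly at $\lamo+\pi\sqrt{m/f_0}$, so "before" must be read as "at or before," or one restricts to $L>\pi\sqrt{m/f_0}$ and gets a genuine interior zero. I would resolve this by stating the corollary with the closed bound (zero at or before $\lamo+\pi\sqrt{m/f_0}$) and noting that whenever $f\not\equiv f_0$ somewhere, Sturm comparison against the constant comparison equation $v''+(f_0/m)v=0$ (\cref{thm:sturm}, with $q_0=f_0/m\le f/m=q_1$) forces the zero to be strictly earlier — the comparison solution $v=\sin(\sqrt{f_0/m}(\lam-\lamo))$ has consecutive zeros at $\lamo$ and $\lamo+\pi\sqrt{m/f_0}$, so $u$ has a zero in the open interval $(\lamo,\lamo+\pi\sqrt{m/f_0})$. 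This last observation in fact gives a cleaner self-contained proof than routing through \cref{thm:intcrit} at all: the comparison equation has explicit zeros a distance $\pi\sqrt{m/f_0}$ apart, and \cref{thm:sturm} immediately places a zero of $u$ between any two of them, so I would likely present that as the main line and mention the Prüfer route only as an alternative.
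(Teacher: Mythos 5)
Your final main line---Sturm comparison (\cref{thm:sturm}) of $u''+(f/m)u=0$ against the constant-coefficient equation $v''+(f_0/m)v=0$, whose solution $\sin\bigl(\sqrt{f_0/m}\,(\lam-\lamo)\bigr)$ has consecutive zeros a distance $\pi\sqrt{m/f_0}$ apart---is exactly the paper's proof and is correct. The initial detour through the Pr\"ufer criterion of \cref{thm:intcrit}, together with the endpoint/strictness discussion, is unnecessary but harmless, and you rightly identify the direct comparison as the cleaner route.
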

\begin{proof}
Here $q\equiv f_0/m$ is constant. The comparison equation $y''+(f_0/m)y=0$ has zeros spaced by $\pi\sqrt{m/f_0}$. Applying \cref{thm:sturm} with $q_1=q$ and $q_0=f_0/m$ yields the claim.

\end{proof}

\section{Conjugate points and blow-up}\label{sec:conjugate}

\begin{proposition}\label{prop:zero-conjugate}
Let $D(\lam)$ be the Jacobi map and $\mathcal A=|\det D|$. Then $u(\lam):=\mathcal A^{1/m}$ satisfies $u(\lam_*)=0$ if and only if $\det D(\lam_*)=0$, i.e., there exists a nontrivial screen Jacobi field vanishing at $\lam_*$ and $\gamma(\lam_*)$ is conjugate to $S$.
\end{proposition}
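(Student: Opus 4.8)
The plan is to prove the two implications in Proposition \ref{prop:zero-conjugate} by tracking how $\mathcal A=|\det D|$ relates to the linear algebra of the Jacobi map $D(\lam)$, and then to identify the kernel of $D(\lam_*)$ with a space of screen Jacobi fields vanishing at $\lam_*$.

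\textbf{Step 1: $u(\lam_*)=0 \iff \det D(\lam_*)=0$.} This is immediate from the definitions: $u=\mathcal A^{1/m}=|\det D|^{1/m}$, and since $x\mapsto x^{1/m}$ is a homeomorphism of $[0,\infty)$ fixing $0$, one has $u(\lam_*)=0$ precisely when $|\det D(\lam_*)|=0$, i.e. $\det D(\lam_*)=0$. I would state this in one line.

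\textbf{Step 2: $\det D(\lam_*)=0 \iff$ existence of a nontrivial screen Jacobi field vanishing at $\lam_*$.} Here I would use that $D$ solves the linear matrix ODE $D'=BD$ with $D(\lamo)=\Id$, so each column of $D$ (equivalently, for each fixed $v\in\mathcal S_{\lamo}$ the curve $\lam\mapsto D(\lam)v$) is exactly the screen projection of the Jacobi field $J_v$ along $\gamma$ determined by the initial screen data corresponding to $v$; this is the content of Definition \ref{def:Jacobi} together with Definition \ref{def:optical}, where $(J^\perp)'=B(J^\perp)$. Now $\det D(\lam_*)=0$ iff $D(\lam_*)$ has nontrivial kernel, i.e. there is $0\ne v\in\mathcal S_{\lamo}$ with $D(\lam_*)v=0$, which says the screen Jacobi field $J^\perp_v:=D(\cdot)v$ satisfies $J^\perp_v(\lam_*)=0$. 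It remains to check $J^\perp_v$ is nontrivial as a Jacobi field: by uniqueness for the linear ODE $D'=BD$ with $D(\lamo)=\Id$ invertible, $D(\lam)$ is invertible on a neighbourhood of $\lamo$, so $J^\perp_v(\lamo)=v\ne 0$ and the field is not identically zero. Conversely, any nontrivial screen Jacobi field vanishing at $\lam_*$ arises as $D(\cdot)v$ for $v$ its value at $\lamo$ (possibly after noting that a screen Jacobi field vanishing at $\lam_*$ but also at $\lamo$ would force, together with the first-order relation $(J^\perp)'=B J^\perp$, that it vanish identically — so $v\ne 0$), giving $D(\lam_*)v=0$ and hence $\det D(\lam_*)=0$.

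\textbf{Step 3: interpret as conjugacy to $S$.} Finally I would invoke the definition of a point conjugate to the initial surface $S$ (equivalently to the wavefront at $\lamo$): $\gamma(\lam_*)$ is conjugate to $S$ precisely when there is a nontrivial Jacobi field along $\gamma$, arising from a variation through geodesics normal to $S$, that vanishes at $\lam_*$; modulo the trivial tangential and $n$-directions this is exactly a nontrivial screen Jacobi field vanishing at $\lam_*$, which Step 2 produced. I expect the only subtle point — the "main obstacle" — to be the bookkeeping in Step 2 showing the Jacobi field is genuinely nontrivial and that the screen projection does not lose information: one must use that the twist-free, parallel-screen setup of Section \ref{sec:framework} makes $\Pi$ parallel along $k$ so that $(J^\perp)''=\Pi R(k,J)k$ closes up (as in the proof of Lemma \ref{lem:optical-evolution}), and that vanishing of $J^\perp$ at two parameters forces $v=0$ by first-order uniqueness. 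Everything else is a restatement of Definitions \ref{def:optical}–\ref{def:Jacobi} and elementary linear algebra.
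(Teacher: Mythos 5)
Your proof is correct and takes essentially the same route as the paper: a nonzero kernel vector of $D(\lam_*)$ is exactly the paper's nontrivial linear combination of the columns $J_i$, and nontriviality follows in both cases from $D(\lamo)=\Id$. The only difference is that you also spell out the converse implication (a nontrivial screen Jacobi field vanishing at $\lam_*$ forces $\det D(\lam_*)=0$), which the paper leaves implicit.
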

\begin{proof}
Since $\mathcal A=|\det D|$, we have $u(\lam_*)=0\iff\det D(\lam_*)=0$. If $\det D(\lam_*)=0$, the columns $J_1(\lam_*),\dots,J_m(\lam_*)$ of $D(\lam_*)$ are linearly dependent in $S_{\lam_*}$. Hence there exists a nontrivial combination $J=\sum_i \alpha_i J_i$ with $J(\lam_*)=0$. At $\lamo$, the set $\{J_i(\lamo)\}$ constitutes a basis of $\mathcal S_{\lamo}\cong T_{\gamma(\lamo)} S$, and so $J(\lamo)\in T S$ is non-zero. Thus $\gamma(\lam_*)$ is conjugate to $S$.

\end{proof}

\begin{proposition}\label{prop:blowup}
Assume $u>0$ on $(\lamo,\lam_*)$ and $u(\lam_*)=0$. Then $\theta=m\,u'/u\to -\infty$ as $\lam\to\lam_*$. Moreover, if $\theta(\lamo)<0$ then for $\lam\ge \lamo$ with $\theta<0$ one has,
$$
\theta(\lam)\le \frac{\theta(\lamo)}{1+\displaystyle\frac{1}{m}\theta(\lamo)(\lam-\lamo)}.
$$
Hence $\theta$ diverges to $-\infty$ by $\lam\le \lamo-m/\theta(\lamo)$ even if $f\equiv 0$.
\end{proposition}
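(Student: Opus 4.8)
The plan is to prove the three assertions of \cref{prop:blowup} in sequence, each reducing to elementary ODE manipulations once the right quantity is isolated.

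First, for the blow-up claim $\theta\to-\infty$: on $(\lamo,\lam_*)$ we have $u>0$, $u$ solves $u''=-(f/m)u\le0$ by \cref{lem:concavity}, so $u$ is concave and positive on $(\lamo,\lam_*)$ with $u(\lam_*)=0$. Concavity plus $u>0$ on the left of $\lam_*$ and $u(\lam_*)=0$ forces $u'(\lam_*^-)\le0$; in fact $u'(\lam_*^-)<0$, since if $u'(\lam_*^-)=0$ then concavity would give $u\le0$ just left of $\lam_*$, a contradiction. (One can also invoke uniqueness for the linear ODE: $u(\lam_*)=u'(\lam_*)=0$ would force $u\equiv0$.) Hence as $\lam\to\lam_*^-$ the numerator of $\theta=m u'/u$ tends to a strictly negative limit while the denominator $u\to0^+$, so $\theta\to-\infty$. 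I would write this out in two or three lines.

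Second, the differential inequality for $\theta$: from the Raychaudhuri equation \eqref{ray} and $f=\|\sigma\|^2+\mathrm{Ric}(k,k)\ge0$ under NEC we get $\theta'\le-\tfrac1m\theta^2$. On any interval to the right of $\lamo$ where $\theta<0$, divide by $\theta^2>0$ to obtain $(\,{-1/\theta}\,)' = \theta'/\theta^2 \le -1/m$, i.e. the function $g:=-1/\theta>0$ satisfies $g'\le-1/m$. Integrating from $\lamo$ to $\lam$ gives $g(\lam)\le g(\lamo)-(\lam-\lamo)/m$, that is $-1/\theta(\lam)\le -1/\theta(\lamo)-(\lam-\lamo)/m$. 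Solving this for $\theta(\lam)$ — keeping careful track that both sides are manipulations of negative quantities, so inequalities flip appropriately — yields exactly
$$
\theta(\lam)\ \le\ \frac{\theta(\lamo)}{1+\tfrac1m\theta(\lamo)(\lam-\lamo)},
$$
valid as long as the denominator stays positive, i.e. for $\lam<\lamo-m/\theta(\lamo)$.

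Third, the finite-time blow-up: the bound just derived shows $g(\lam)=-1/\theta(\lam)\le g(\lamo)-(\lam-\lamo)/m$, and the right-hand side hits $0$ at $\lam=\lamo+m\,g(\lamo)=\lamo-m/\theta(\lamo)$. Since $g>0$ wherever $\theta<0$ and $g$ is forced down to $0$, $\theta$ must have diverged to $-\infty$ at or before that parameter value; this uses only $f\ge0$, hence holds even when $f\equiv0$ (in which case the inequality is an equality and the blow-up time is exactly $\lamo-m/\theta(\lamo)$). I would remark that one must also confirm the geodesic is defined that far, or else the congruence leaves the regular region first — but the statement as phrased concerns the parameter bound, not completeness, so no extra work is needed.

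The only mildly delicate point is the sign-bookkeeping when inverting the inequality $-1/\theta(\lam)\le-1/\theta(\lamo)-(\lam-\lamo)/m$ into the displayed bound on $\theta(\lam)$: one is multiplying an inequality between two negative numbers, which reverses it, and then taking reciprocals of two negative numbers, which reverses it again, so the final direction is the claimed one — but it is worth doing explicitly rather than by inspection. Everything else is a direct consequence of concavity (\cref{lem:concavity}), the Raychaudhuri equation (\cref{prop:ray}), and the area identity (\cref{lem:area}).
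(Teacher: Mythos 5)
Your proposal is correct and follows essentially the same route as the paper: the blow-up of $\theta=m\,u'/u$ comes from $u\to 0^+$ with $u'(\lam_*)<0$ (which you justify more explicitly via concavity/uniqueness than the paper does), and the quantitative bound comes from integrating $(\pm 1/\theta)'\gtrless \pm 1/m$ obtained from $\theta'\le-\theta^2/m$, exactly as in the paper's proof. Your extra care with the sign bookkeeping when inverting the integrated inequality is a welcome elaboration but not a different method.
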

\begin{proof}
As $\lam\to\lam_*$, one has $u(\lam)\to 0$ with $u'$ bounded (since solutions of linear ODEs are $C^2$), hence $u'/u\to -\infty$ from the left of the first zero. For the differential inequality, \cref{prop:ray} with $\|\sigma\|^2\ge 0$ and $\mathrm{Ric}(k,k)\ge 0$ gives,
$$\theta'\le -\frac{1}{m}\theta^2.$$
On any interval where $\theta<0$, this implies,
$$\left(\frac{1}{\theta}\right)'=-\frac{\theta'}{\theta^2}\ge \frac{1}{m}.$$
Integrating from $\lamo$ to $\lam$ gives $1/\theta(\lam)\ge 1/\theta(\lamo)+(\lam-\lamo)/m$, which can be rearranged to obtain the claimed bound.

\end{proof}

\section{Apparent horizons and failure of global visibility}\label{sec:visibility}
We are now finally in a position to relate all our analysis upto this point with the global visibility of singularities in a suitable class of spacetimes. To begin with, we recall a standard fact from Lorentzian geometry which states that null geodesics are not maximizing beyond their first focal point.
\begin{lemma}\label{standard}
    Let $B\subseteq M$ be a smooth codimension 2 spacelike submanifold of $M$ and $\gamma$ a null geodesic orthogonal to $B$. If $q=\gamma(\lambda_*)$ is focal to $B$ along $\gamma$, then $q\in I^{+}(B)$, and in particular $\gamma$ cannot be contained in $\partial J^+(B)=J^+(B)\setminus I^+(B)$ beyond $\lambda_*$. 
\end{lemma}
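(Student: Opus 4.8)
The plan is to invoke the classical index-form / second-variation argument. First I would recall that $\gamma$ being orthogonal to $B$ at its initial point means the initial screen-projected data are exactly the tangent directions to $B$, so the notion of ``focal to $B$'' coincides with the vanishing of the Jacobi map $D$ built from $B$-orthogonal Jacobi fields as in \cref{def:Jacobi} and \cref{prop:zero-conjugate}. Thus by hypothesis there is a nontrivial $B$-Jacobi field $J$ along $\gamma$ with $J(\lambda_*)=0$ and $J$ orthogonal to $k$ (the twist-free, screen-bundle setup guarantees we may take $J$ screen-valued, so $g(J,k)\equiv 0$, and the $B$-orthogonality fixes the transversal initial condition).

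\textbf{Key steps.} Second, I would form the $B$-index form $I(V,V)=\int_{\lambda_0}^{\lambda_*}\bigl(g(V',V')-g(R(V,k)k,V)\bigr)\,\dd\lambda$ minus the boundary term coming from the second fundamental form of $B$, defined on vector fields $V$ along $\gamma|_{[\lambda_0,\lambda_*]}$ that are orthogonal to $k$, vanish at $\lambda_*$, and are tangent to $B$ at $\lambda_0$. The focal Jacobi field $J$ is a null direction of $I$: integrating by parts and using the Jacobi equation together with the focal boundary condition shows $I(J,J)=0$. Third, I would perturb: because $\gamma(\lambda_*)$ is the focal point, one can extend slightly past $\lambda_*$ and construct, from $J$ together with a suitable ``corner-smoothing'' variation field $W$, a vector field $X$ along $\gamma|_{[\lambda_0,\lambda_*+\delta]}$ with $I(X,X)<0$. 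This is the standard trick (cf. the proof that geodesics stop maximizing past conjugate/focal points): the broken field $J$ on $[\lambda_0,\lambda_*]$ continued by $0$ on $[\lambda_*,\lambda_*+\delta]$ has index form $0$, has a corner at $\lambda_*$, and smoothing the corner strictly decreases the index form. Fourth, since $I(X,X)<0$ corresponds to the second variation of arc length (here: the identically-zero length functional on null curves, so really the relevant statement is about timelike deformations) being negative, there is a fixed-endpoint variation of $\gamma|_{[\lambda_0,\lambda_*+\delta]}$ through curves from $B$ that are timelike, i.e. $\gamma(\lambda_*+\delta)\in I^+(B)$. Finally, $I^+(B)$ is open and past-directed-ness of the relation gives $\gamma(\lambda_*)\in \overline{I^+(B)}$; a short additional argument (pushing the variation's endpoint back, or using that the achronal boundary $\partial J^+(B)$ is an achronal topological hypersurface that a null geodesic generator cannot leave and re-enter) upgrades this to $\gamma(\lambda_*)\in I^+(B)$ itself, hence $\gamma$ leaves $\partial J^+(B)=J^+(B)\setminus I^+(B)$ at $\lambda_*$ and cannot return.

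\textbf{Main obstacle.} The technically delicate point is the corner-smoothing estimate producing $I(X,X)<0$ strictly, and bookkeeping the boundary term from the second fundamental form of $B$ so that the focal (not merely conjugate) condition is used correctly; this is where one must be careful that $\lambda_*$ is genuinely the \emph{first} focal point so that $J$ does not itself vanish earlier, and that the extension of $\gamma$ a bit beyond $\lambda_*$ stays in the region where everything is smooth. I would handle this by citing the standard treatment (e.g. in \cite{Minguzzi:2019mbe} or O'Neill) rather than reproving it, and note that the screen-bundle/twist-free formalism of \cref{sec:framework} makes the reduction to $k$-orthogonal variation fields immediate, so no genuinely new analysis is needed beyond what is already standard for focal points of spacelike submanifolds.
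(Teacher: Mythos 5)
Your proposal is correct and essentially coincides with the paper's treatment: the paper's proof is simply a citation to the standard result (Proposition 48 of Chapter 10 in O'Neill), and your sketch is precisely the classical index-form/corner-smoothing argument underlying that citation, which you also propose to cite rather than reprove. The only minor caveat is that the standard theorem literally yields $\gamma(\lambda)\in I^+(B)$ for $\lambda>\lambda_*$ (points strictly beyond the focal point), which is exactly what the ``in particular'' clause needs, so nothing is lost.
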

\begin{proof}
    See for instance the proof of Proposition 48 of Chapter 10 in \cite{oneill1983semiriemannian}.
    
\end{proof}

\begin{definition}[Locally and globally naked singularities]\label{eq:naked-sing}
    Let $(M,g)$ be an asymptotically flat spacetime with a conformal extension $(\tilde M,\tilde g)$, i.e., $\iota:M\hookrightarrow\tilde M$ is an isometric embedding, and let $\mathscr I^{+}$ denote the future null infinity of $M$. Let $\mathcal S \subseteq \partial_\iota(M)$ be a singular boundary set and consider a nested\footnote{This is a nested family in the sense that for $\varepsilon_1<\varepsilon_2$, one has $\mathcal S_{\varepsilon_1}\subseteq J^-(S_{\varepsilon_2})$.} family $\{\mathcal{S}_{\varepsilon}\}_{\varepsilon\in(0,\varepsilon_0]}$ of smooth codimension $2$ compact spacelike submanifolds of $M$ with $\overline{\iota(\mathcal S_{\varepsilon})}\to \mathcal S$ as $\varepsilon\to 0$ in the Kuratowski sense \cite{Kuratowski1966TopologyVol1}, i.e, the following conditions hold:
\begin{enumerate}
    \item For every open neighbourhood $U$ of $\mathcal{S}$ in $\tilde M$, there exists $\varepsilon_U>0$ such that $0<\varepsilon<\varepsilon_U\implies \iota(\mathcal{S}_{\varepsilon})\subset U$, and,
    \item For every $p\in \mathcal S$, and every open neighbourhood $V\ni p$ in $\tilde M$, there exists $\varepsilon^p_V>0$ such that $0<\varepsilon<\varepsilon^p_V\implies\iota(\mathcal{S}_\varepsilon)\cap V \neq \emptyset$.
\end{enumerate}
We say that the singular boundary $\mathcal{S}$ is \emph{globally visible} if there exists a sequence $\varepsilon_n\to 0$ with $J^+(\mathcal{S}_{\varepsilon_n})\cap \mathscr I^{+}\neq \emptyset$. Conversely, the singular boundary is not globally visible if there exists $\varepsilon_1>0$ such that for all $\varepsilon\in(0,\varepsilon_1]$, we have $J^{+}(\mathcal{S}_{\varepsilon})\cap \mathscr I^{+}=\emptyset$.
\end{definition}
 
A similar classification for local and global visibility was given by Eardley and Smarr \cite{Eardley:1978tr} in 1978. For instance, if the spacetime admits a continuous \emph{radius function} $r:M\to (0,\infty)$, such that the singular boundary set corresponds to $\{r=0\}$ in an extension, then one can take $\mathcal{S}_\varepsilon\coloneqq\{r=\varepsilon\}\cap\Sigma$ for a suitable spacelike hypersurface $\Sigma$ gives a family approaching $\{r=0\}$ in the above sense. See Figure \ref{fig-penrose} above for a demonstration of this definition in the case of the singularities arising from the spherically symmetric gravitational collapse of type-I matter fields \cite{Dwivedi1994OnTO}.

\begin{figure}[t]
\includegraphics[scale=0.8]{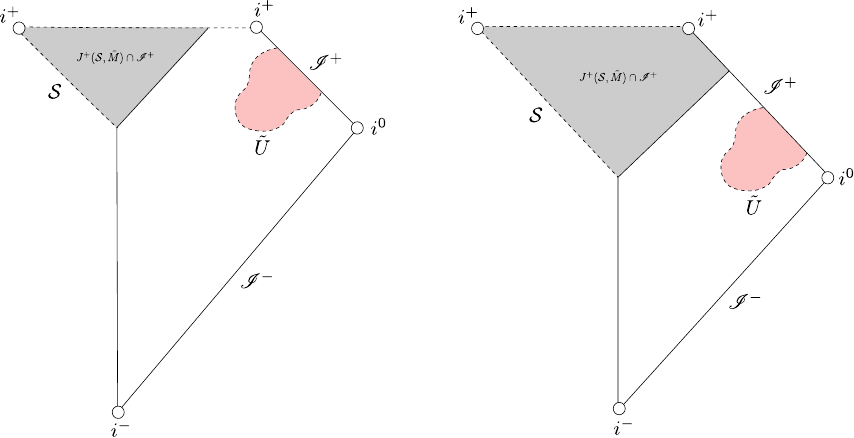}
\caption{Conformal extensions for the spherically symmetric spacetimes arising from the gravitational collapse of type-I matter fields. The extension on the left admits a locally visible singularity, whereas the extension on the right admits a globally visible singularity. The interior of the collapsing matter could be matched with a Schwarzschild exterior which is globally hyperbolic, and hence in particular $\tilde U$ is a causally simple neighbourhood of $\mathscr I^+$.} \label{fig-penrose}
\end{figure}
\begin{remark}
    Since $\mathscr{I}^+ \subset \tilde M\setminus M$, we may interpret $J^{+}(\mathcal{S}_{\varepsilon})\cap\mathscr I^{+}$ as $J^{+}_{\tilde M}(\iota(\mathcal S)_{\varepsilon})\cap \mathscr{I}^{+}$, where $J^{+}_{\tilde M}$ denotes the causal future in $(\tilde M, \tilde g)$
\end{remark}

\begin{lemma}\label{lem:weak_to_prompt}
Assume that the future null infinity $\mathscr I^+$ admits a continuous function
$u:\mathscr I^+\to\R$ such that,
\begin{enumerate}
\item for every compact interval $[a,b]\subset\R$ the set
$u^{-1}([a,b])\subset\mathscr I^+$ is compact,
\item  for every $x\in\mathscr I^+$ and every open neighborhood
$W\ni x$ in $\mathscr I^+$ there exists $y\in W$ with $u(y)<u(x)$, and
\item the causal relation
$J^+_{\tilde M}\subset\tilde M\times\tilde M$ is closed on a neighborhood of
$\iota(M)\cup\mathscr I^+$ (equivalently, $(\tilde M,\tilde g)$ is causally simple there).
\end{enumerate}
Let $K\subset M$ be compact. If $J^+_{\tilde M}(\iota(K))\cap\mathscr I^+\neq \emptyset$ then $\partial J^+_{\tilde M}(\iota(K))\cap\mathscr I^+\neq\emptyset$.
\end{lemma}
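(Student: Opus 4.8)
The plan is to exhibit a boundary point of $J^+_{\tilde M}(\iota(K))$ on $\mathscr I^+$ as a minimizer of $u$ over $A:=J^+_{\tilde M}(\iota(K))\cap\mathscr I^+$, which is nonempty by hypothesis. Recall the standard fact of causality theory (see e.g. \cite{Minguzzi:2019mbe}) that $\mathrm{int}_{\tilde M}J^+_{\tilde M}(\iota(K))=I^+_{\tilde M}(\iota(K))$; since every point of $A$ already lies in $J^+_{\tilde M}(\iota(K))\subseteq\overline{J^+_{\tilde M}(\iota(K))}$, a point of $A$ lies on $\partial J^+_{\tilde M}(\iota(K))$ precisely when it is not in $I^+_{\tilde M}(\iota(K))$. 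Condition (2) — that $u$ has no local minimum on $\mathscr I^+$ — is exactly what will force a minimizer of $u|_A$ to have this property, while conditions (1) and (3) provide the compactness needed for the minimum to be attained.

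First I would check that $A$ is closed in $\mathscr I^+$. If $y_n\in A$ and $y_n\to y\in\mathscr I^+$, pick $p_n\in\iota(K)$ with $(p_n,y_n)\in J^+_{\tilde M}$; by compactness of $\iota(K)$ a subsequence has $p_n\to p\in\iota(K)$, and for large $n$ the pair $(p_n,y_n)$ lies in the neighborhood of $\bigl(\iota(M)\cup\mathscr I^+\bigr)\times\bigl(\iota(M)\cup\mathscr I^+\bigr)$ on which $J^+_{\tilde M}$ is closed (condition (3)), whence $(p,y)\in J^+_{\tilde M}$ and $y\in A$. Next I would set $\mu:=\inf_{x\in A}u(x)$ and argue that $\mu$ is finite and attained. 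Granting finiteness, choose $x_1\in A$ with $u(x_1)\le\mu+1$; then $A\cap u^{-1}([\mu,\mu+1])$ is a nonempty closed subset of the compact set $u^{-1}([\mu,\mu+1])$ (condition (1)), hence compact, and the continuous function $u$ attains $\mu$ on it at some $x_\ast\in A$.

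It then remains to show $x_\ast\notin I^+_{\tilde M}(\iota(K))$. If it were, then, $I^+_{\tilde M}(\iota(K))$ being open in $\tilde M$, the set $W:=I^+_{\tilde M}(\iota(K))\cap\mathscr I^+$ would be an open neighborhood of $x_\ast$ in $\mathscr I^+$ with $W\subseteq A$; condition (2) would then produce $y\in W\subseteq A$ with $u(y)<u(x_\ast)=\mu$, contradicting the definition of $\mu$. Hence $x_\ast\in J^+_{\tilde M}(\iota(K))\setminus I^+_{\tilde M}(\iota(K))=\partial J^+_{\tilde M}(\iota(K))$, and $x_\ast\in\mathscr I^+$, which is the claim. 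By the closedness in condition (3), $J^+_{\tilde M}(\iota(K))$ is moreover closed near $\mathscr I^+$, so this boundary point coincides with the paper's convention $\partial J^+=J^+\setminus I^+$.

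The step I expect to be the main obstacle is the finiteness of $\mu$. Properness of $u$ (condition (1)) does not by itself bound $u$ below on $A$: a priori one could have $\mathscr I^+\subseteq J^+_{\tilde M}(\iota(K))$, and $u$ is unbounded below on $\mathscr I^+$. One must therefore exploit the compactness of $\iota(K)$ together with the causal regularity near $\iota(M)\cup\mathscr I^+$ in a limit-curve argument: a sequence of causal curves from $\iota(K)$ to points $x_n\in\mathscr I^+$ with $u(x_n)\to-\infty$, whose initial points are driven (by compactness of $\iota(K)$) to a single limit point of $\iota(M)$, would accumulate on a future-inextendible causal curve from that point whose future meets the region of $\mathscr I^+$ where $u\to-\infty$, contradicting condition (3). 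A variant that bypasses attainment entirely: assuming $\partial J^+_{\tilde M}(\iota(K))\cap\mathscr I^+=\emptyset$ makes $A=I^+_{\tilde M}(\iota(K))\cap\mathscr I^+$ open in $\mathscr I^+$; being also closed and nonempty, connectedness of $\mathscr I^+$ (valid for the usual asymptotically flat topology) forces $A=\mathscr I^+$, i.e. $\mathscr I^+\subseteq I^+_{\tilde M}(\iota(K))$, which again yields the same limit-curve contradiction.
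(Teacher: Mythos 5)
Your argument is essentially the paper's own proof: minimize $u$ over $A=J^+_{\tilde M}(\iota(K))\cap\mathscr I^+$, use condition (3) and compactness of $\iota(K)$ to show the minimizer stays in $J^+_{\tilde M}(\iota(K))$, use condition (1) to extract it as a limit of a minimizing sequence, and use condition (2) to rule out its lying in the interior. The one point where you go beyond the paper is the finiteness of $\mu=\inf_A u$: the paper's proof simply takes a sequence $x_n$ with $u(x_n)\to u_0$ and invokes (1) to extract a convergent subsequence, which silently presupposes $u_0>-\infty$ (otherwise the $u(x_n)$ lie in no compact interval and (1) gives nothing). You are right to flag this as the genuine obstacle; your proposed limit-curve argument (or the connectedness variant, which reduces to the same thing) is only sketched, but it identifies the correct mechanism, namely that a compact $K$ cannot causally precede all of $\mathscr I^+$ given the causal regularity in (3). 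So your write-up is, if anything, more honest about the one step the paper leaves implicit.
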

\begin{proof}
Set $A:=J^+_{\tilde M}(\iota(K))\cap\mathscr I^+$ and assume $A\neq\emptyset$. Let $u_0:=\inf\{u(x):x\in A\}.$
Choose a sequence $x_n\in A$ with $u(x_n)\to u_0$. By (1), we may extract a subsequence with $x_n\to x_0\in\mathscr I^+$ and $u(x_0)=u_0$. We claim $x_0\in J^+_{\tilde M}(\iota(K))$. Indeed, pick $p_n\in \iota(K)$ with $p_n\le_{\tilde M} x_n$.
Since $\iota(K)$ is compact, we may pass to $p_n\to p_0\in\iota(K)$. By (3) (i.e., by closedness of $J^+_{\tilde M}$), we have
$(p_0,x_0)\in J^+_{\tilde M}$, and hence $x_0\in J^+_{\tilde M}(\iota(K))$, i.e.\ $x_0\in A$. If $x_0\in\mathrm{Int}\big(J^+_{\tilde M}(\iota(K))\big)$, we choose an open set $U\subset\tilde M$ with
$x_0\in U\subset J^+_{\tilde M}(\iota(K))$. Then $W:=U\cap\mathscr I^+$ is an open neighborhood of $x_0$ in $\mathscr I^+$.
By (2), we may choose $y\in W$ with $u(y)<u(x_0)=u_0$. But $y\in U\subset J^+_{\tilde M}(\iota(K))$, and hence $y\in A$,
contradicting the definition of $u_0$. Therefore $x_0\notin\mathrm{Int}\big(J^+_{\tilde M}(\iota(K))\big)$, and so
$x_0\in\partial J^+_{\tilde M}(\iota(K))\cap\mathscr I^+$.

\end{proof}

\begin{corollary}\label{rem:weak_strong_equiv}
Under the hypotheses of Lemma~\ref{lem:weak_to_prompt}, for every compact $K\subset M$ one has
$$
J^+_{\tilde M}(\iota(K))\cap\mathscr I^+\neq\emptyset
\quad\Longleftrightarrow\quad
\partial J^+_{\tilde M}(\iota(K))\cap\mathscr I^+\neq\emptyset.
$$
\end{corollary}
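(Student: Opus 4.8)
The statement to prove is Corollary~\ref{rem:weak_strong_equiv}, the biconditional
$$
J^+_{\tilde M}(\iota(K))\cap\mathscr I^+\neq\emptyset
\quad\Longleftrightarrow\quad
\partial J^+_{\tilde M}(\iota(K))\cap\mathscr I^+\neq\emptyset,
$$
for compact $K\subset M$, under the standing hypotheses of Lemma~\ref{lem:weak_to_prompt}.

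The plan is to dispatch the two implications separately, with the forward direction being essentially a restatement of Lemma~\ref{lem:weak_to_prompt} and the reverse direction being a near-triviality that only uses $\partial A\subseteq A$ when $A$ is closed. First I would prove ($\Leftarrow$): suppose $\partial J^+_{\tilde M}(\iota(K))\cap\mathscr I^+\neq\emptyset$, and pick a point $x$ in this set. Since $\iota(K)$ is compact and, by hypothesis (3), the causal relation $J^+_{\tilde M}$ is closed on a neighborhood of $\iota(M)\cup\mathscr I^+$, the set $J^+_{\tilde M}(\iota(K))$ is closed in that neighborhood (the image of the compact slice $\{(p,\cdot)\}$ under a closed relation; concretely, if $x_n\to x$ with $x_n\in J^+_{\tilde M}(\iota(K))$ and $p_n\in\iota(K)$ with $p_n\le x_n$, pass to a convergent subsequence $p_n\to p_0\in\iota(K)$ and use closedness to get $p_0\le x$). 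Hence its topological boundary is contained in the set itself, so $x\in J^+_{\tilde M}(\iota(K))\cap\mathscr I^+$, which is therefore nonempty. For the forward direction ($\Rightarrow$), I would simply invoke Lemma~\ref{lem:weak_to_prompt} verbatim: it states precisely that $J^+_{\tilde M}(\iota(K))\cap\mathscr I^+\neq\emptyset$ implies $\partial J^+_{\tilde M}(\iota(K))\cap\mathscr I^+\neq\emptyset$. Combining the two implications gives the biconditional.

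The only point requiring a modicum of care is the claim that $J^+_{\tilde M}(\iota(K))$ is closed \emph{in a neighborhood of $\iota(M)\cup\mathscr I^+$}, which is all that hypothesis (3) grants and all that is needed, since both $\partial J^+_{\tilde M}(\iota(K))\cap\mathscr I^+$ and $J^+_{\tilde M}(\iota(K))\cap\mathscr I^+$ live inside $\mathscr I^+$, hence inside that neighborhood. Thus when computing the boundary of $A:=J^+_{\tilde M}(\iota(K))$ one should work relative to that neighborhood; for a point $x\in\mathscr I^+$ in the closure of $A$ one extracts the sequences as above and stays within the neighborhood throughout the limiting argument, so closedness applies and $x\in A$. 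I expect this localization bookkeeping to be the main (and very minor) obstacle; everything else is immediate. I would present the proof as two short paragraphs, the first citing Lemma~\ref{lem:weak_to_prompt} for one direction and the second giving the three-line closedness argument for the other.

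\begin{proof}
The implication ``$\Rightarrow$'' is exactly the content of Lemma~\ref{lem:weak_to_prompt}.

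For ``$\Leftarrow$'', suppose $\partial J^+_{\tilde M}(\iota(K))\cap\mathscr I^+\neq\emptyset$ and fix $x$ in this set. Let $N$ be a neighborhood of $\iota(M)\cup\mathscr I^+$ on which $J^+_{\tilde M}$ is closed, as provided by hypothesis (3) of Lemma~\ref{lem:weak_to_prompt}; note $x\in\mathscr I^+\subset N$. We claim $x\in J^+_{\tilde M}(\iota(K))$. Since $x$ lies in the closure of $J^+_{\tilde M}(\iota(K))$, choose $x_n\to x$ with $x_n\in J^+_{\tilde M}(\iota(K))$, and pick $p_n\in\iota(K)$ with $(p_n,x_n)\in J^+_{\tilde M}$. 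By compactness of $\iota(K)$ we may pass to a subsequence with $p_n\to p_0\in\iota(K)$. For $n$ large, $(p_n,x_n)\in N\times N$, and since $J^+_{\tilde M}$ is closed on $N$ we conclude $(p_0,x)\in J^+_{\tilde M}$, i.e.\ $x\in J^+_{\tilde M}(\iota(K))$. Hence $x\in J^+_{\tilde M}(\iota(K))\cap\mathscr I^+$, so the latter set is nonempty.
\end{proof}
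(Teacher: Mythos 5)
Your proof is correct and follows essentially the same route as the paper's: the forward implication is quoted from Lemma~\ref{lem:weak_to_prompt}, and the reverse implication extracts a convergent subsequence $p_n\to p_0\in\iota(K)$ by compactness and uses closedness of the causal relation to conclude that $J^+_{\tilde M}(\iota(K))$ contains its boundary points on $\mathscr I^+$. If anything, your localization of the closedness argument to the neighborhood $N$ of $\iota(M)\cup\mathscr I^+$ is slightly more careful than the paper's proof, which invokes closedness of $J^+_{\tilde M}$ in all of $\tilde M\times\tilde M$ even though hypothesis (3) only grants it near $\iota(M)\cup\mathscr I^+$.
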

\begin{proof}
($\Rightarrow$) This is Lemma~\ref{lem:weak_to_prompt}.

($\Leftarrow$) We first show that $J^+_{\tilde M}(\iota(K))$ is closed in $\tilde M$. Let $x_n\in J^+_{\tilde M}(\iota(K))$ with $x_n\to x$. Choose $p_n\in \iota(K)$ such that $p_n\le_{\tilde M} x_n$. Since $\iota(K)$ is compact, we may pass to a subsequence $p_{n_j}\to p\in \iota(K)$. By point (3) in Lemma~\ref{lem:weak_to_prompt}, the set $J^+_{\tilde M}=\{(a,b): a\le_{\tilde M} b\}$ is closed in $\tilde M\times\tilde M$, hence from $(p_{n_j},x_{n_j})\to(p,x)$ we obtain $p\le_{\tilde M} x$. Thus $x\in J^+_{\tilde M}(\iota(K))$. Therefore $J^+_{\tilde M}(\iota(K))$ is closed. Now, since $\partial A\subset \overline{A}$ for every set $A$, and here $\overline{J^+_{\tilde M}(\iota(K))}=J^+_{\tilde M}(\iota(K))$,
we get $\partial J^+_{\tilde M}(\iota(K))\subset J^+_{\tilde M}(\iota(K)).$ Hence any point of $\partial J^+_{\tilde M}(\iota(K))\cap\mathscr I^+$ lies in $J^+_{\tilde M}(\iota(K))\cap\mathscr I^+$.

\end{proof}

\begin{lemma}\label{lem:generators}
Let $B\subset M$ be a smooth compact spacelike codimension $2$ submanifold. If
$x\in J^+_{\tilde M}(\iota(B))\setminus\iota(B)$, then there exists a future-directed null geodesic segment
$\gamma:[0,1]\to\tilde M$ such that
$$
\gamma(0)\in\iota(B),\qquad \gamma(1)=x,\qquad \gamma([0,1])\subset \partial J^+_{\tilde M}(\iota(B)).
$$
Moreover, $k(0)\perp T_{\gamma(0)}\iota(B)$.
\end{lemma}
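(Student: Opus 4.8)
The plan is to extract the null geodesic from the standard structure theory of achronal boundaries, combined with a limiting argument in the conformally extended spacetime $(\tilde M,\tilde g)$. First I would reduce to the case $x\in\partial J^+_{\tilde M}(\iota(B))$: if $x\in\mathrm{Int}\,J^+_{\tilde M}(\iota(B))$ one cannot conclude anything, so the statement must implicitly be read with $x$ on the boundary, or one invokes Corollary~\ref{rem:weak_strong_equiv}. Assuming $x\in\partial J^+_{\tilde M}(\iota(B))\setminus\iota(B)$, the classical fact (see O'Neill, Chapter~14, or Penrose's structure theorem for achronal boundaries) is that through every point of $\partial J^+(A)\setminus \overline{A}$ there passes a past-inextendible null geodesic contained in the boundary; when $A$ is the compact set $\iota(B)$, this past-directed generator cannot be past-inextendible inside $\tilde M$ without leaving $J^+(\iota(B))$, so it must meet $\iota(B)$. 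The key point is that $\iota(B)$ is compact, which makes $J^+_{\tilde M}(\iota(B))$ closed (as shown in the proof of Corollary~\ref{rem:weak_strong_equiv} under hypothesis (3)), so the boundary generator, traced into the past, exits $I^+$ and hits $\partial J^+$'s ``bottom'', i.e.\ $\iota(B)$ itself.

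The steps, in order: (1) By the achronal-boundary structure theorem, there is a null geodesic $\eta$ through $x$, contained in $\partial J^+_{\tilde M}(\iota(B))$, which is either past-inextendible in $\tilde M$ or has a past endpoint on $\iota(B)$. (2) Rule out past-inextendibility: a past-inextendible causal curve in the closed set $J^+_{\tilde M}(\iota(B))$ that stays on the boundary would, by the limit-curve lemma together with compactness of $\iota(B)$, accumulate at a point $p_0\in\iota(B)$; one then re-parametrizes so that $\gamma(0)=p_0$ and $\gamma(1)=x$, giving the segment. (3) Affinely parametrize and rescale the domain to $[0,1]$. (4) For the orthogonality claim $k(0)\perp T_{\gamma(0)}\iota(B)$: if $k(0)$ had a nonzero component tangent to the spacelike surface $\iota(B)$, then by the standard first-variation / ``corner-smoothing'' argument (deforming $\gamma$ along $\iota(B)$ near $\gamma(0)$) one could produce a timelike curve from $\iota(B)$ to points of $\gamma$ arbitrarily close to $\gamma(0)$, pushing $\gamma$ into $I^+_{\tilde M}(\iota(B))$ and contradicting $\gamma([0,1])\subset\partial J^+_{\tilde M}(\iota(B))$. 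Hence $k(0)$ is $\tilde g$-orthogonal to the codimension-$2$ spacelike submanifold $\iota(B)$; since $k(0)$ is null and not tangent, it is one of the two null normal directions.

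The main obstacle I expect is step (2): making precise that the boundary generator, traced backwards, actually terminates \emph{on} $\iota(B)$ rather than running off to the conformal boundary or to $\mathscr I^-$. This is where hypothesis (3) of Lemma~\ref{lem:weak_to_prompt} (local causal simplicity near $\iota(M)\cup\mathscr I^+$, hence closedness of $J^+_{\tilde M}$) and the compactness of $B$ are essential: closedness gives that the accumulation point of the past-directed generator lies in $J^+_{\tilde M}(\iota(B))$, and achronality of $\partial J^+$ together with the fact that points of $\iota(B)$ are the only points of $J^+_{\tilde M}(\iota(B))$ not in $I^+_{\tilde M}(\iota(B))$ that can serve as past endpoints forces the endpoint onto $\iota(B)$. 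A secondary technical point is that the limit curve produced by the limit-curve theorem is a priori only causal; one must use that it lies in the achronal set $\partial J^+_{\tilde M}(\iota(B))$ to upgrade it to a null \emph{geodesic} (achronal causal curves in a boundary are null geodesics, up to reparametrization), which is again a standard but slightly delicate invocation of Lorentzian causality theory.
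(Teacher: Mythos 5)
The paper does not prove this lemma at all: it cites it as the standard structure theorem for achronal boundaries (O'Neill, Ch.~14). Your proposal reconstructs essentially that standard argument, and most of its ingredients are right: you correctly observe that the hypothesis must be read as $x\in\partial J^+_{\tilde M}(\iota(B))\setminus\iota(B)$ (for $x$ in the interior of $J^+$ the conclusion $\gamma([0,1])\subset\partial J^+_{\tilde M}(\iota(B))$ with $\gamma(1)=x$ is simply false), and your step (4) for normality --- the first-variation/corner-smoothing deformation along $\iota(B)$ producing a timelike curve and hence a contradiction with achronality of the boundary --- is exactly the standard argument.

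The one genuine gap is step (2). A past-inextendible generator of $\partial J^+_{\tilde M}(\iota(B))$ does \emph{not}, via the limit-curve lemma, ``accumulate at a point $p_0\in\iota(B)$'': lying in the closed set $J^+_{\tilde M}(\iota(B))$ gives no control on where the curve goes in the past, and compactness of $\iota(B)$ does not make the curve converge to it. The standard repair is both simpler and sidesteps the structure theorem entirely: since $x\in J^+_{\tilde M}(\iota(B))$, there is by definition a future-directed causal curve $\alpha$ from some $p\in\iota(B)$ to $x$; since $x\notin I^+_{\tilde M}(\iota(B))$, $\alpha$ cannot be deformed (rel endpoints on $\iota(B)$) to a timelike curve, so by the maximization theorem (O'Neill, Ch.~10, Prop.~48, the same result invoked in Lemma~\ref{standard}) $\alpha$ is, up to reparametrization, a null geodesic meeting $\iota(B)$ orthogonally and without focal points before $x$. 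Every point $z$ of $\alpha$ lies in $J^+_{\tilde M}(\iota(B))$, and if some $z$ were in $I^+_{\tilde M}(\iota(B))$ then push-up would give $x\in I^+_{\tilde M}(\iota(B))$, a contradiction; hence $\alpha\subset J^+_{\tilde M}(\iota(B))\setminus I^+_{\tilde M}(\iota(B))=\partial J^+_{\tilde M}(\iota(B))$, using closedness of $J^+_{\tilde M}(\iota(B))$ (compactness of $\iota(B)$ plus hypothesis (3) of Lemma~\ref{lem:weak_to_prompt}). This yields the segment, its containment in the boundary, and the normality claim in one stroke, without any limit-curve argument.
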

\begin{proof}
This is the standard structure theorem for achronal boundaries. See for instance the proof of Corollary 5 of Chapter 14 in \cite{oneill1983semiriemannian}.

\end{proof}

\begin{theorem}\label{thm:global_visibility_failure}
Assume the hypotheses of Lemma~\ref{lem:weak_to_prompt} hold. Let $S\subset\partial_\iota(M)$ be a singular boundary set and
$\{S_\varepsilon\}_{\varepsilon\in(0,\varepsilon_0]}$ a nested family of smooth compact spacelike codimension $2$ submanifolds of $M$ with $\iota(S_\varepsilon)\to S$ in the sense of Definition \ref{eq:naked-sing}. Fix $\varepsilon\in(0,\varepsilon_0]$. Assume that every null generator $\gamma$ of $\partial J^+_{\tilde M}(\iota(S_\varepsilon))$ which reaches $\mathscr I^+$ is contained
in a $C^2$ null hypersurface generated by null geodesics orthogonal to $S_\varepsilon$, so that $\omega\equiv0$.  Moreover assume that the Jacobi map $D$ and $u:=|\det D|^{1/m}$, $m:=N-1$, are defined up to the first focal point), and that along $\gamma$ the scalar ODE,
$$
u''+\frac{1}{m}f\,u=0,\qquad f=\|\sigma\|^2+\mathrm{Ric}(k,k)\ge0,
$$
holds on intervals where $u>0$.
Assume in addition that for every such generator $\gamma$ there exists a subinterval $[c,d]$ of its affine parameter domain with $u(c)\neq 0$ and
\begin{equation}\label{eq:prufer_global}
\int_c^d \sqrt{\frac{f(\lambda)}{m}}\,d\lambda>\pi.
\end{equation}
Then $J^+_{\tilde M}(\iota(S_\varepsilon))\cap\mathscr I^+=\emptyset.$ Consequently, if \eqref{eq:prufer_global} holds for all $\varepsilon\in(0,\varepsilon_1]$ for some $\varepsilon_1>0$, then the singular boundary set
$S$ is \emph{not globally visible}, i.e.\ there exists $\varepsilon_1>0$ such that for all
$\varepsilon\in(0,\varepsilon_1]$ one has $J^+_{\tilde M}(\iota(S_\varepsilon))\cap\mathscr I^+=\emptyset$.
\end{theorem}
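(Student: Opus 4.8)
The plan is to argue by contradiction. Assume $J^+_{\tilde M}(\iota(S_\varepsilon))\cap\mathscr I^+\neq\emptyset$. Since $S_\varepsilon$ is a compact subset of $M$ and the hypotheses of \cref{lem:weak_to_prompt} hold, \cref{rem:weak_strong_equiv} produces a point $x\in\partial J^+_{\tilde M}(\iota(S_\varepsilon))\cap\mathscr I^+$; as $\mathscr I^+\subset\tilde M\setminus\iota(M)$ we have $x\notin\iota(S_\varepsilon)$, so \cref{lem:generators} supplies a future-directed null geodesic segment $\gamma$, which we take affinely parametrized on $[0,1]$ (affine reparametrizations leave all criteria unchanged), with $\gamma(0)\in\iota(S_\varepsilon)$, $\gamma(1)=x$, $\gamma([0,1])\subset\partial J^+_{\tilde M}(\iota(S_\varepsilon))$, and $k(0)\perp T_{\gamma(0)}\iota(S_\varepsilon)$. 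This $\gamma$ is (a segment of) a null generator of $\partial J^+_{\tilde M}(\iota(S_\varepsilon))$ that reaches $\mathscr I^+$, so the standing hypotheses apply to it: it lies in a $C^2$ null hypersurface ruled by $S_\varepsilon$-orthogonal geodesics (so $\omega\equiv0$ along $\gamma$), the Jacobi map $D$, $u=|\det D|^{1/m}$ and the ODE $u''+(f/m)u=0$ are defined along $\gamma$ up to its first focal point, and there is a subinterval $[c,d]$ of the affine domain with $u(c)\neq0$ and $\int_c^d\sqrt{f/m}\,\dd\lambda>\pi$. Since $u$ and $f$ live only on the portion of $\gamma$ inside the regular physical region, necessarily $\gamma([c,d])\subset\iota(M)$, where $f=\|\sigma\|^2+\mathrm{Ric}(k,k)\ge0$ by the NEC; also $u\ge0$ forces $u(c)>0$.

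Next I would show $u$ must vanish somewhere in $(c,d]$. If instead $u>0$ on all of $[c,d]$, then the ODE holds throughout $[c,d]$ and, with $q:=f/m$ continuous and nonnegative there, \cref{thm:intcrit} gives that the solution $\tilde u$ with $\tilde u(c)=0$, $\tilde u'(c)=1$ has a first zero $\lambda_1\in(c,d]$, so $c<\lambda_1$ are consecutive zeros of $\tilde u$. The Wronskian of $u$ and $\tilde u$ at $c$ equals $u(c)\neq0$, hence $\{u,\tilde u\}$ is a fundamental system and, by \cref{thm:separation}, $u$ has exactly one zero in $(c,\lambda_1)\subset(c,d)$ — contradicting $u>0$ on $[c,d]$. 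Therefore $u$ has a first zero $\lambda_*\in(c,d]$. By \cref{prop:zero-conjugate}, $\det D(\lambda_*)=0$ and $\gamma(\lambda_*)$ is focal to $S_\varepsilon$ along $\gamma$; moreover $\gamma(\lambda_*)\in\iota(M)$ (as $\lambda_*\in[c,d]$) and $\gamma(\lambda_*)\in\partial J^+_{\tilde M}(\iota(S_\varepsilon))$ (as $\lambda_*\in[0,1]$).

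Now the contradiction: since $\gamma$ issues orthogonally from $S_\varepsilon$, \cref{standard} gives that the focal point $\gamma(\lambda_*)$ lies in the chronological future of $\iota(S_\varepsilon)$ computed in $(M,g)$, hence in $I^+_{\tilde M}(\iota(S_\varepsilon))$ since a timelike curve of $(M,g)$ remains timelike in $(\tilde M,\tilde g)$. But $\partial J^+_{\tilde M}(\iota(S_\varepsilon))=J^+_{\tilde M}(\iota(S_\varepsilon))\setminus I^+_{\tilde M}(\iota(S_\varepsilon))$ is disjoint from $I^+_{\tilde M}(\iota(S_\varepsilon))$, so $\gamma(\lambda_*)\notin\partial J^+_{\tilde M}(\iota(S_\varepsilon))$, contradicting the previous paragraph. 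Hence $J^+_{\tilde M}(\iota(S_\varepsilon))\cap\mathscr I^+=\emptyset$. Finally, if \eqref{eq:prufer_global} holds for every $\varepsilon\in(0,\varepsilon_1]$, the whole argument runs verbatim for each such $\varepsilon$, giving $J^+_{\tilde M}(\iota(S_\varepsilon))\cap\mathscr I^+=\emptyset$ for all $\varepsilon\in(0,\varepsilon_1]$; any sequence $\varepsilon_n\to0$ eventually enters $(0,\varepsilon_1]$, so none can witness global visibility in the sense of \cref{eq:naked-sing}, and $S$ is not globally visible.

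The main obstacle is interface bookkeeping rather than any hard estimate: one must verify that the segment handed over by \cref{lem:generators} is genuinely one of the ``generators reaching $\mathscr I^+$'' to which the twist-free congruence, $D$, $u$ and the scalar ODE are attached, and one must track which affine parameters place $\gamma$ inside $\iota(M)$ versus on the conformal boundary, so that $[c,d]$ and the resulting focal parameter $\lambda_*$ sit in the physical regular region where \cref{prop:zero-conjugate}, the NEC bound $f\ge0$, and \cref{standard} all apply — including the mild conformal point that a $g$-focal point producing chronality in $(M,g)$ still produces chronality in $(\tilde M,\tilde g)$. The lone analytic subtlety, passing from \cref{thm:intcrit} (which concerns the solution vanishing at $c$) to the physical beam scale $u$ with $u(c)>0$, is absorbed by Sturm separation, but because the ODE is only guaranteed while $u>0$ the step must be phrased as ``$u>0$ on $[c,d]$ is self-contradictory'' rather than by continuing $u$ through a zero.
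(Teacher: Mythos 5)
Your proposal follows the same architecture as the paper's proof: contradiction, \cref{lem:weak_to_prompt} to place a point of $\partial J^+_{\tilde M}(\iota(S_\varepsilon))$ on $\mathscr I^+$, \cref{lem:generators} to produce the orthogonal null generator, the Pr\"ufer criterion to force a zero of $u$, \cref{prop:zero-conjugate} to convert it into a focal point, and \cref{standard} to contradict achronality of the boundary. The one place you genuinely add something is the passage from \cref{thm:intcrit} to a zero of the \emph{physical} beam scale $u$ with $u(c)\neq 0$: the paper invokes \cref{thm:intcrit} directly, even though that theorem only exhibits a second zero of the particular solution with $y(c)=0$, $y'(c)=1$; your interpolation via the Wronskian $u(c)\tilde u'(c)-u'(c)\tilde u(c)=u(c)\neq 0$ and Sturm separation (\cref{thm:separation}), phrased as ``$u>0$ on $[c,d]$ is self-contradictory'' so that the ODE is legitimately available throughout $[c,d]$, closes a small gap that the paper's own proof leaves implicit. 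The rest of your bookkeeping (locating $[c,d]$ and $\lambda_*$ inside $\iota(M)$, conformal invariance of chronality) is consistent with, and somewhat more explicit than, the published argument.
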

\begin{proof}
Fixing $\varepsilon\in(0,\varphi_0]$, we argue by contradiction. Suppose $J^+_{\tilde M}(\iota(S_\varepsilon))\cap\mathscr I^+\neq\emptyset$. By
Lemma~\ref{lem:weak_to_prompt} there exists $x\in \partial J^+_{\tilde M}(\iota(S_\varepsilon))\cap\mathscr I^+.$ By Lemma~\ref{lem:generators}, $x$ lies on a null generator $\gamma\subset \partial J^+_{\tilde M}(\iota(S_\varepsilon))$
emanating orthogonally from $\iota(S_\varepsilon)$ and reaching $x$. Applying the assumption \eqref{eq:prufer_global} to this $\gamma$, one obtains (via Theorem \ref{thm:intcrit}) a zero $\lambda_* \in (c,d]$ of $u$ along $\gamma$. By Proposition \ref{prop:zero-conjugate}, $\gamma(\lambda_*)$ is focal to $B:=\iota(S_\varepsilon)$ along $\gamma$. Lemma \ref{standard} now implies $\gamma(\lambda)\in I^+_{\tilde M}(B)$ for all $\lambda>\lambda_*$, hence $\gamma$ cannot be contained in $\partial J^+_{\tilde M}(B)$ beyond $\lambda_*$. This contradicts that $\gamma\subset\partial J^+_{\tilde M}(B)$ reaches $x\in\mathscr I^+$. Therefore $J^+_{\tilde M}(\iota(S_\varepsilon))\cap\mathscr I^+=\emptyset$. The final statement follows immediately from the definition of global visibility.

\end{proof}

\begin{remark}
    One could also formulate a version of Theorem \ref{thm:global_visibility_failure} that uses the barrier criterion given by Theorem \ref{thm:barrier}. However for application purposes, the Sturm criterion is much more practical to use, as will be demonstrated in case of the Einstein massless-scalar field system.
\end{remark}

\section{Application to gravitational collapse of a scalar field.}

We work in $3+1$ dimensions. Let $( M,g,\phi:M\to \mathbb{R})$ be a \emph{smooth} spherically symmetric solution of the Einstein-massless scalar field system, in the sense of Christodoulou \cite{Christodoulou1994,Christodoulou1999}. We only use local analysis on the
regular region, and we assume that the metric $g$ is at least $C^2$.

\begin{definition}[Spherical symmetry]\label{def:spherical}
A spacetime $(M,g)$ is \emph{spherically symmetric} if there is an effective isometric
action of $\mathrm{SO}(3)$ on $M$ whose orbits are spacelike $2$-spheres on an open dense set.
\end{definition}

\begin{lemma}\label{lem:warped}
Let $(M,g)$ be spherically symmetric and let $p\in M$ lie on a principal orbit $\mathcal O_p\simeq \mathbb S^2$. Then there is a neighborhood $U\ni p$ and a $2$-dimensional Lorentzian manifold $(\mathcal Q,h)$ with coordinates $(x^0,x^1)$ on a chart $V\subset\mathcal Q$, together with a smooth function $r:V\to(0,\infty)$, such that
\begin{equation}\label{eq:warped}
(U,g)\ \cong\ (V\times \mathbb S^2, h_{ab}(x)\,\dd x^a \otimes \dd x^b + r(x)^2\,(g_{\mathbb{S}^2})_{AB}(\omega)\ \dd\omega^A \otimes  \dd\omega^B), 
\end{equation}
where $g_{\mathbb{S}^2}$ is the standard round metric on $\mathbb S^2$.
\end{lemma}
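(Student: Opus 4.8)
The plan is to realize $g$ near $p$ as a warped product over the local orbit space of the $\mathrm{SO}(3)$-action, the essential input being the representation theory of the isotropy group at a principal point. First I would recall that, since $p$ lies on a principal orbit $\mathcal O_p\cong\mathbb S^2$, its stabilizer is a conjugate of $H\cong\mathrm{SO}(2)$ (the only one-dimensional closed subgroup $H$ of $\mathrm{SO}(3)$ with $\mathrm{SO}(3)/H\cong\mathbb S^2$); by the slice theorem and the principal orbit theorem there is an $\mathrm{SO}(3)$-invariant open tube $U\supset\mathcal O_p$ on which every orbit is principal, the orbit space $V:=U/\mathrm{SO}(3)$ is a smooth $2$-manifold, and $\pi\colon U\to V$ is a smooth submersion; after shrinking, $V$ carries coordinates $(x^0,x^1)$. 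Let $\mathcal V:=\ker d\pi$, the distribution tangent to the orbits. Because the principal orbits are spacelike $2$-spheres (Definition~\ref{def:spherical}), $\mathcal V$ is nondegenerate and spacelike, so $TU=\mathcal V\oplus\mathcal H$ with $\mathcal H:=\mathcal V^{\perp_g}$ a nondegenerate rank-$2$ distribution of Lorentzian signature; both $\mathcal V$ and $\mathcal H$ are $\mathrm{SO}(3)$-invariant. Restricted to any orbit, $g|_{\mathcal V}$ is an $\mathrm{SO}(3)$-invariant metric on $\mathrm{SO}(3)/H$, hence a positive multiple of the round metric by irreducibility of the isotropy representation; writing that multiple as $r(x)^2$ on the orbit $\pi^{-1}(x)$ defines $r\colon V\to(0,\infty)$, smooth and positive since $g$ is smooth and the orbits are nondegenerate $2$-spheres.

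The crux is to show that $\mathcal H$ is integrable, with a distinguished leaf through $p$; here the Lorentzian signature is indispensable. The isotropy action of $H\cong\mathrm{SO}(2)$ on the fibre $\mathcal H_p$ is a continuous homomorphism $\mathrm{SO}(2)\to O(1,1)$, and since $O(1,1)$ contains no nontrivial compact connected subgroup (its identity component is a line of boosts), this homomorphism is trivial, so $H$ fixes $\mathcal H_p$ pointwise. On the other hand $H$ acts on $\mathcal V_p=T_p\mathcal O_p$ by the standard (fixed-point-free) rotation representation, so $\mathcal H_p=(T_pM)^{H}$. Consequently the fixed-point set of the isometry group $H$ near $p$ is a $2$-dimensional (totally geodesic) submanifold $\Sigma$ with $T_p\Sigma=\mathcal H_p$. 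For $q\in\Sigma$ close to $p$ the stabilizer $H_q$ contains $H$ and, being conjugate to $H\cong\mathrm{SO}(2)$, must equal $H$; the same dichotomy then gives $(T_qM)^{H_q}=\mathcal H_q$, while $T_q\Sigma\subseteq(T_qM)^{H_q}$ with equal dimensions forces $T_q\Sigma=\mathcal H_q$. Thus $\Sigma$ is an integral surface of $\mathcal H$, and its $\mathrm{SO}(3)$-translates are integral surfaces through every point of $U$ (after shrinking so that $\Sigma$ meets each orbit exactly once), so $\mathcal H$ is integrable.

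To conclude I would build the product chart. Shrink $U$ so that $\pi|_\Sigma\colon\Sigma\to V$ is a diffeomorphism with inverse $\sigma$, fix an $\mathrm{SO}(3)$-equivariant identification $\mathbb S^2\cong\mathrm{SO}(3)/H$, and set $\Phi\colon V\times\mathbb S^2\to U$, $\Phi(x,gH):=g\cdot\sigma(x)$. This is well defined because $H$ fixes $\Sigma=\mathrm{im}\,\sigma$ pointwise, and it is a diffeomorphism carrying $\{x\}\times\mathbb S^2$ onto the orbit $\pi^{-1}(x)$ and $V\times\{gH\}$ onto the leaf $g\Sigma$ of $\mathcal H$; moreover $\pi\circ\Phi$ is the projection to $V$. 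In the induced coordinates $(x^a,\omega^A)$ the vectors $\partial_{x^a}$ are therefore horizontal and the $\partial_{\omega^A}$ vertical, so $g(\partial_{x^a},\partial_{\omega^B})=0$; the sphere block equals $r(x)^2(g_{\mathbb S^2})_{AB}(\omega)$ by the previous paragraph (the fibre map $\Phi(x,\cdot)$ being $\mathrm{SO}(3)$-equivariant); and $h_{ab}(x):=g(\partial_{x^a},\partial_{x^b})$ depends on $x$ only, because $\pi\circ L_g=\pi$ and each $L_g$ is an isometry, so the metric induced on every leaf $g\Sigma$ pushes forward under $\pi$ to one and the same Lorentzian metric $h$ on $V$. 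This is exactly the form \eqref{eq:warped}.

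The main obstacle is the integrability step. Its two delicate points are: (a) one must stay on the locus of a single orbit type so that $H_q\cong H$ for all nearby $q$ — which is precisely where the hypothesis that $p$ lies on a \emph{principal} orbit enters — and (b) one needs the representation-theoretic fact that $\mathrm{SO}(2)$ admits no nontrivial homomorphism into $O(1,1)$, which is where spacelikeness of the orbits (hence a Lorentzian normal plane) is essential: for Riemannian fibres $\mathrm{SO}(2)$ could rotate the normal plane and no such splitting need exist. The remaining ingredients are routine slice-theorem and orbit-space bookkeeping, for which I would cite \cite{oneill1983semiriemannian} for the theory of warped products and semi-Riemannian submersions, the transformation-group facts being standard.
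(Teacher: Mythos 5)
Your argument is correct and follows essentially the same route as the paper's proof: the slice theorem for the proper $\mathrm{SO}(3)$-action, the orthogonal splitting of $TU$ into orbit-tangent and normal distributions, and irreducibility of the isotropy representation forcing the orbit metric to be a multiple of the round metric. You supply considerably more detail than the paper — notably the fixed-point-set argument (via triviality of homomorphisms $\mathrm{SO}(2)\to O(1,1)$) establishing integrability of the horizontal distribution and hence the absence of cross terms, a step the paper leaves implicit in its appeal to the equivariant product structure.
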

\begin{proof}
Fix $p$ on a principal orbit $\mathcal O_p$. Let $E_p:=T_p(\mathcal O_p)$ (a spacelike $2$-plane). By the $\mathrm{SO}(3)$-invariance of $g$, the orthogonal complement $E_p^\perp$ is invariant under the isotropy subgroup at $p$, and is $2$-dimensional with Lorentzian signature (since $\dim M=4$ and $\mathcal O_p$ is spacelike). By the slice theorem for proper Lie group actions, there is a local $\mathrm{SO}(3)$-equivariant diffeomorphism of a neighborhood of $p$ with a product $V\times \mathbb S^2$
where $V$ is a neighborhood in the orbit space $\mathcal Q:=M/\mathrm{SO}(3)$, and the metric splits as \eqref{eq:warped} with some Lorentzian metric $h$ on $V$ and a warping factor $r^2$ multiplying the unique $\mathrm{SO}(3)$-invariant metric on each orbit (which must be a round metric up to scale). The scale is encoded by the smooth function $r>0$ (the \emph{area radius}), defined for $x\in V$ by,
$$\mathrm{Area}(\{x\}\times \mathbb S^2)=4\pi r(x)^2.$$
This completes the proof.

\end{proof}

\begin{lemma}\label{lem:double-null}
Let $(\mathcal Q,h)$ be any $2$-dimensional Lorentzian manifold. Then there exist coordinates $(u,v)$ on $\mathcal Q$ such that,
\begin{equation}\label{eq:2d-conformal}
h = -\Omega(u,v)^2 \dd u\otimes \dd v
\end{equation}
for some smooth $\Omega>0$. Consequently, on $U\simeq V\times \mathbb S^2$ the spacetime metric reads,
\begin{equation}\label{eq:double-null}
g = -\Omega(u,v)^2\ \dd u \otimes \dd v + r(u,v)^2\,(g_{\mathbb{S}^2})_{AB}(\omega)\ \dd\omega^A\otimes \dd\omega^B.
\end{equation}
\end{lemma}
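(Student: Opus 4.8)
The strategy is the classical construction of characteristic (double-null) coordinates from the two null line fields of a $2$-dimensional Lorentzian manifold; the content of the lemma is precisely that every $2$-dimensional Lorentzian metric is locally conformally flat, with the two null coordinate functions serving as conformal coordinates. Concretely, I would fix a chart on $\mathcal Q$ with coordinates $(x^0,x^1)$ in which $h = h_{ab}\,\dd x^a\otimes\dd x^b$. A nonzero vector $\xi^0\partial_0+\xi^1\partial_1$ is null iff $h_{00}(\xi^0)^2+2h_{01}\xi^0\xi^1+h_{11}(\xi^1)^2=0$, and since $h$ is Lorentzian the discriminant $h_{01}^2-h_{00}h_{11}$ is everywhere strictly positive. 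Hence this binary quadratic form factors over the reals into two distinct linear factors, defining two smooth transverse null line fields $\ell_\pm$. After shrinking the chart, and relabelling $x^0\leftrightarrow x^1$ if necessary so that $h_{11}\neq0$ on it, $\ell_\pm$ is spanned by $\partial_0+\mu_\pm\,\partial_1$ with $\mu_\pm=\bigl(-h_{01}\pm\sqrt{h_{01}^2-h_{00}h_{11}}\bigr)/h_{11}$. If $h$ is merely $C^2$ then $\mu_\pm$ is $C^1$, which suffices below.

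Next I would integrate the two characteristic ODEs $\dd x^1/\dd x^0=\mu_\pm(x^0,x^1)$. By the flow-box theorem (smooth, resp.\ $C^1$, dependence of solutions on initial data), on a possibly smaller neighbourhood $V$ there exist functions $u,v$ with $\dd u,\dd v$ nowhere vanishing such that $u$ is constant along the integral curves of $\ell_+$ and $v$ along those of $\ell_-$. Transversality of $\ell_+$ and $\ell_-$ forces $\dd u\wedge\dd v\neq0$, so $(u,v)$ is a coordinate chart on $V$. In these coordinates the level sets of $u$ are exactly the integral curves of $\ell_+$, so the coordinate field $\partial_v$ is tangent to $\ell_+$ and hence null, $h(\partial_v,\partial_v)=0$; symmetrically $h(\partial_u,\partial_u)=0$. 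Therefore $h = \lambda\,(\dd u\otimes\dd v+\dd v\otimes\dd u)$ with $\lambda:=h(\partial_u,\partial_v)$ of class $C^1$ and nowhere zero by nondegeneracy of $h$.

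It remains only to fix the sign. Since $V$ is connected, $\lambda$ has constant sign; replacing $v$ by $-v$ (which preserves its defining property) flips the sign of $\lambda$, so we may assume $\lambda<0$ and set $\Omega:=\sqrt{-2\lambda}>0$ (or $\sqrt{-\lambda}$, according to the normalisation of \eqref{eq:2d-conformal}). Alternatively, since the ambient spacetime is time-oriented one may from the outset choose $u,v$ to be future-increasing null functions, which automatically gives $h(\partial_u,\partial_v)<0$. This yields \eqref{eq:2d-conformal}; substituting into the warped-product form of \cref{lem:warped} gives \eqref{eq:double-null}.

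This is a standard result and presents no genuine obstacle; the only points requiring care are the local case analysis arranging $h_{11}\neq0$ (a routine relabelling of coordinates), the loss of one derivative in passing from $h$ to $\mu_\pm$ and then to $\Omega$ (harmless, since only $C^2$ is assumed and only local analysis on the regular region is used), and the orientation bookkeeping for the sign of $\Omega^2$. Note also that the phrase "coordinates $(u,v)$ on $\mathcal Q$" should be read, consistently with \cref{lem:warped}, as coordinates on the chart $V$: a globally defined double-null coordinate system on $\mathcal Q$ need not exist, and is not needed for the application.
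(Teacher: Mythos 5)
Your proposal is correct and follows essentially the same route as the paper: construct the two null line fields, take functions $u,v$ constant along each, and observe that nullity of $\dd u$ and $\dd v$ forces $h$ to be proportional to $\dd u\otimes\dd v$ with a negative factor. You simply fill in details the paper leaves implicit (the explicit factorization of the quadratic form, the regularity bookkeeping, the sign normalization, and the local nature of the chart), all of which are sound.
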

\begin{proof}
In dimension $2$, every Lorentzian metric is locally conformally flat: choose local null vector fields
$L,\underline L$ with $h(L,L)=h(\underline L,\underline L)=0$ and $h(L,\underline L)<0$. Let $u$ (resp. $v$)
solve $L(u)=0$ (resp. $\underline L(v)=0$) with non-vanishing gradients. Then $\dd u$ and $\dd v$ are null covectors,
and $h$ must be proportional to $\dd u\otimes \dd v$. Writing the proportionality factor as $-\Omega^2$ gives \eqref{eq:2d-conformal}. Substituting into \eqref{eq:warped} yields \eqref{eq:double-null}.

\end{proof}

Now fix a point $(u_*,v_*,\omega_*)$ with $r(u_*,v_*)>0$ and consider the outgoing null curve $\gamma:\mathbb{R}\to M$ given by,
$$ \gamma(v):=(u_*,v,\omega_*).$$ Its tangent is $\partial_v$, which is null since $g_{vv}=0$ in \eqref{eq:double-null}. We now choose an affine parametrization.

\begin{lemma}\label{lem:affine}
Along each outgoing null curve $u=\mathrm{const}$, the vector field
\begin{equation}\label{eq:k}
k := \Omega^{-2}\partial_v
\end{equation}
satisfies $\nabla_k k=0$. Hence $k$ is tangent to an affinely parametrized null geodesic.
\end{lemma}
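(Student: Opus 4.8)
The plan is to establish the geodesic equation $\nabla_k k = 0$ for the vector field $k = \Omega^{-2}\partial_v$ by a direct computation of the relevant Christoffel symbols of the double-null metric \eqref{eq:double-null}, exploiting the fact that along a curve $u=\mathrm{const}$, $\omega=\mathrm{const}$, only a handful of connection coefficients enter. First I would write $k = \Omega^{-2}\partial_v$ and expand $\nabla_k k = \Omega^{-2}\nabla_{\partial_v}(\Omega^{-2}\partial_v) = \Omega^{-2}\bigl((\partial_v \Omega^{-2})\,\partial_v + \Omega^{-2}\nabla_{\partial_v}\partial_v\bigr)$. So the claim reduces to showing $\nabla_{\partial_v}\partial_v = \Gamma^u_{vv}\,\partial_u + \Gamma^v_{vv}\,\partial_v + \Gamma^A_{vv}\,\partial_{\omega^A}$ has no $\partial_u$ and no angular component, and that its $\partial_v$-component is exactly $(\partial_v \log\Omega^2)\,\partial_v$, i.e. $\Gamma^v_{vv} = 2\,\partial_v\log\Omega = \partial_v\Omega^2/\Omega^2$, which would cancel the first term.

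The key computational step is evaluating these Christoffel symbols from $g_{uv} = g_{vu} = -\tfrac12\Omega^2$, $g_{AB} = r^2 (g_{\mathbb S^2})_{AB}$, with inverse $g^{uv} = g^{vu} = -2\Omega^{-2}$ and $g^{AB} = r^{-2}(g_{\mathbb S^2})^{AB}$, all other components vanishing. Using $\Gamma^\mu_{vv} = \tfrac12 g^{\mu\lambda}(2\partial_v g_{\lambda v} - \partial_\lambda g_{vv})$ and $g_{vv}=0$: for $\mu=u$ one gets $\Gamma^u_{vv} = \tfrac12 g^{uv}(2\partial_v g_{vv}) = 0$; for $\mu = v$ one gets $\Gamma^v_{vv} = \tfrac12 g^{vu}(2\partial_v g_{uv}) = \tfrac12(-2\Omega^{-2})(2\cdot(-\tfrac12\partial_v\Omega^2)) = \Omega^{-2}\partial_v\Omega^2$; and for $\mu = A$ one gets $\Gamma^A_{vv} = \tfrac12 g^{AB}(2\partial_v g_{Bv} - \partial_B g_{vv}) = 0$ since $g_{Bv}=0$ and $g_{vv}=0$. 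Hence $\nabla_{\partial_v}\partial_v = (\Omega^{-2}\partial_v\Omega^2)\,\partial_v = (\partial_v\log\Omega^2)\,\partial_v$. Substituting back, $\nabla_k k = \Omega^{-2}\bigl((\partial_v\Omega^{-2})\partial_v + \Omega^{-2}(\partial_v\log\Omega^2)\partial_v\bigr) = \Omega^{-4}\bigl(\Omega^2\partial_v\Omega^{-2} + \partial_v\log\Omega^2\bigr)\partial_v$, and since $\Omega^2\partial_v\Omega^{-2} = -\partial_v\log\Omega^2$ the bracket vanishes, giving $\nabla_k k = 0$.

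Since $k$ is a nonzero null vector field with vanishing acceleration, its integral curves — which are precisely the curves $u=\mathrm{const}$, $\omega=\mathrm{const}$ reparametrized so that the parameter $\lambda$ satisfies $\dd v/\dd\lambda = \Omega^{-2}$ — are affinely parametrized null geodesics, which is the assertion. I do not anticipate a genuine obstacle here: the computation is short because $g_{vv}=0$ kills most terms, and the only mild subtlety is bookkeeping the conformal factor and confirming that the $\partial_v$-derivative $\partial_v\Omega^{-2} = -\Omega^{-4}\partial_v\Omega^2$ is the one that makes the cancellation exact. One should also note for completeness that the choice of $k$ is the natural one: any null vector proportional to $\partial_v$ has the form $\psi\,\partial_v$, and $\nabla_{\psi\partial_v}(\psi\partial_v) = \psi(\partial_v\psi + \psi\,\Gamma^v_{vv})\partial_v$ vanishes exactly when $\partial_v\log\psi = -\Gamma^v_{vv} = -\partial_v\log\Omega^2$, i.e. $\psi \propto \Omega^{-2}$, so \eqref{eq:k} is forced up to a multiplicative constant (an affine rescaling), consistent with the remark that affine reparametrizations leave the later criteria unchanged.
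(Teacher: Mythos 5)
Your proof is correct and follows essentially the same route as the paper: the paper also reduces to the $(u,v)$-block with $g_{uv}=g_{vu}=-\Omega^2/2$, asserts $\nabla_{\partial_v}\partial_v=2(\partial_v\log\Omega)\,\partial_v$ by "a direct Christoffel computation," and then cancels against the $\partial_v\Omega^{-2}$ term exactly as you do. Your explicit evaluation of $\Gamma^u_{vv}$, $\Gamma^v_{vv}$, $\Gamma^A_{vv}$ and the closing remark that $\psi\propto\Omega^{-2}$ is forced are correct additions but not a different method.
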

\begin{proof}
Since $\omega$ is constant on $\gamma$, only the $(u,v)$-part matters. In coordinates $(u,v)$ on $\mathcal Q$,
the only nonzero component of $h$ is $h_{uv}=h_{vu}=-\Omega^2/2$ (up to the conventional symmetrization).
A direct Christoffel computation for $h=-\Omega^2 \dd u\otimes \dd v$ gives,
$$\nabla_{\partial_v}\partial_v = 2(\partial_v\log\Omega)\,\partial_v,$$
i.e. $\partial_v$ is geodesic but not affine unless $\partial_v\Omega=0$. Rescaling by $\Omega^{-2}$ removes the
non-affinity, giving,
$$
\nabla_k k = \nabla_{\Omega^{-2}\partial_v}(\Omega^{-2}\partial_v) = \Omega^{-4}\nabla_{\partial_v}\partial_v + \Omega^{-2}(\partial_v\Omega^{-2})\,\partial_v = \Omega^{-4}\cdot 2(\partial_v\log\Omega)\partial_v -2\Omega^{-4}(\partial_v\log\Omega)\partial_v =0.
$$
The $\mathbb S^2$-part contributes nothing because $k$ has no angular components and the metric is a warped product. This completes the proof.

\end{proof}

\begin{definition}\label{def:screen}
At a point $p$ on a null geodesic $\gamma$ with tangent $k$, define the \emph{screen space}
$$\mathcal S_p := k_p^\perp / \mathrm{span}\{k_p\}.$$
Along $\gamma$, the disjoint union $\mathcal S:=\bigsqcup_{p\in\gamma}\mathcal S_p$ is the \emph{screen bundle}.
Its rank is $m:=\dim\mathcal S_p=2$ in $3+1$ dimensions.
\end{definition}

\begin{lemma}\label{lem:screen=TS2}
Let $p=(u,v,\omega)\in M$ with $r(u,v)>0$ and let $k=\Omega^{-2}\partial_v$ at $p$.
Then the map
$$
T_\omega\mathbb S^2 \longrightarrow \mathcal S_p,\qquad X \mapsto [\tilde X],
$$
where $\tilde X$ is the corresponding angular tangent vector in $T_pM$ (i.e.\ $\tilde X$ has only $\omega$-components), is a linear isomorphism. In particular, $\mathcal S_p$ is naturally identified with $T_\omega\mathbb S^2$.
\end{lemma}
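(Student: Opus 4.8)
The plan is to exploit the explicit double-null warped-product normal form \eqref{eq:double-null} and reduce everything to linear algebra in the single tangent space $T_pM$. First I would recall that the product chart furnished by \cref{lem:warped,lem:double-null} gives a canonical splitting
$$T_pM=\mathrm{span}\{\partial_u,\partial_v\}\ \oplus\ T_\omega\mathbb S^2,$$
the second summand consisting precisely of the ``angular'' vectors $\tilde X$ (those whose $\dd u$- and $\dd v$-components vanish); the assignment $X\mapsto\tilde X$ of the statement is just the tautological identification $T_\omega\mathbb S^2\xrightarrow{\ \sim\ }\{0\}\oplus T_\omega\mathbb S^2\subset T_pM$.

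Next I would compute $k_p^\perp$ directly. For a general $W=a\,\partial_u+b\,\partial_v+\tilde X$, the warped-product form \eqref{eq:double-null} gives $g(\partial_v,\partial_v)=0$, $g(\partial_v,\tilde X)=0$ (no cross terms between the $\mathcal Q$- and $\mathbb S^2$-directions), and $g(\partial_u,\partial_v)=-\Omega^2/2$ (the symmetrization convention already used in the proof of \cref{lem:affine}), so $g(k_p,W)=\Omega^{-2}g(\partial_v,W)=-a/2$. Hence $W\in k_p^\perp$ iff $a=0$, i.e.
$$k_p^\perp=\mathrm{span}\{\partial_v\}\ \oplus\ \{0\}\oplus T_\omega\mathbb S^2,$$
a $3$-dimensional subspace which of course contains $k_p=\Omega^{-2}\partial_v$ since $k_p$ is null. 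Passing to the quotient, $\mathcal S_p=k_p^\perp/\mathrm{span}\{k_p\}=k_p^\perp/\mathrm{span}\{\partial_v\}$, and the map in the statement is the composite
$$T_\omega\mathbb S^2\ \xrightarrow{\ X\mapsto\tilde X\ }\ k_p^\perp\ \xrightarrow{\ [\,\cdot\,]\ }\ \mathcal S_p.$$
It is linear by construction; injective, because an angular vector lying in $\mathrm{span}\{\partial_v\}$ must vanish (the two summands of $T_pM$ meet only in $0$); and surjective, because any $b\,\partial_v+\tilde X\in k_p^\perp$ is congruent to $\tilde X$ modulo $\mathrm{span}\{\partial_v\}$. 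The dimension count $\dim\mathcal S_p=3-1=2=\dim T_\omega\mathbb S^2$ is consistent. As a bonus one may note that $g|_{k_p^\perp}$ is degenerate exactly along $\mathrm{span}\{k_p\}$, hence descends to the quotient, and under this identification the induced screen metric becomes $r(u,v)^2\,(g_{\mathbb S^2})_\omega$; this is not needed for the isomorphism but records the geometric content.

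I do not expect a genuine obstacle here: once the warped-product normal form is in hand the argument is entirely linear-algebraic. The only two points deserving care are the sign/normalization of $g(\partial_u,\partial_v)$ — which is pinned down by the convention in \eqref{eq:double-null} and \cref{lem:affine} — and the legitimacy of the canonical splitting $T_pM=\mathrm{span}\{\partial_u,\partial_v\}\oplus T_\omega\mathbb S^2$, which is exactly the warped-product structure established in \cref{lem:warped}.
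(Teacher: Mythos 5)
Your proof is correct and follows essentially the same route as the paper: both rest on the orthogonal warped-product splitting $T_pM\cong T_{(u,v)}\mathcal Q\oplus T_\omega\mathbb S^2$ and the observation that the part of $k^\perp$ lying in the $\mathcal Q$-factor is exactly $\mathrm{span}\{k\}$, so that the angular summand maps isomorphically onto the quotient. The only cosmetic difference is that you verify this by an explicit coordinate computation with $\partial_u,\partial_v$, whereas the paper invokes the general fact that in a $2$-dimensional Lorentzian space the orthogonal complement of a null vector equals its span.
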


\begin{proof}
Using \eqref{eq:double-null}, the tangent space may be decomposed as,
$$T_pM \cong T_{(u,v)}\mathcal Q \oplus T_\omega\mathbb S^2,$$
This decomposition is orthogonal with respect to $g$, with $g|_{T_\omega\mathbb S^2}=r(u,v)^2g_{\mathbb{S}^2}$ positive definite. Since $k\in T_{(u,v)}\mathcal Q$, for any $\tilde X\in T_\omega\mathbb S^2$ we have $g(k,\tilde X)=0$, and hence $\tilde X\in k^\perp$. Moreover $\tilde X\notin \mathrm{span}\{k\}$ unless $\tilde X=0$ (since they live in orthogonal summands). Hence the class $[\tilde X]\in \mathcal S_p$ is well-defined and is injective in $X$. Conversely, take any $Y\in k^\perp\subset T_pM$ and decompose $Y=Y_{\mathcal Q}+Y_{\mathbb S^2}$. Then $0=g(k,Y)=g(k,Y_{\mathcal Q})$ (since $k\perp T_\omega\mathbb S^2$). In the $2$-dimensional Lorentzian space $T_{(u,v)}\mathcal Q$, the orthogonal complement of a null vector is $1$-dimensional and equals its span. Hence $Y_{\mathcal Q}\in \mathrm{span}\{k\}$. Therefore $[Y]=[Y_{\mathbb S^2}]$ in $k^\perp/\mathrm{span}\{k\}$, proving
surjectivity of $T_\omega\mathbb S^2\to\mathcal S_p$. This completes the proof.

\end{proof}

Let $S_{u,v}:=\{(u,v)\}\times \mathbb S^2$ denote the \emph{symmetry $2$-sphere} through $(u,v)$. By Lemma~\ref{lem:screen=TS2}, tangent vectors to $S_{u,v}$ represent screen directions.

\begin{definition}\label{def:null2ff}
Fix $p\in S_{u,v}$ and let $k$ be the affinely parametrized tangent to an outgoing null generator through $p$. For $X,Y\in T_pS_{u,v}$, we define the \emph{null second fundamental form} by
$$B(X,Y) := g(\nabla_X k, Y).$$
Since $g|_{T_pS_{u,v}}$ is positive definite, there exists a unique endomorphism $b:T_pS_{u,v}\to T_pS_{u,v}$ such that for all $X,Y\in T_pS_{u,v}$, we have $B(X,Y)=g(bX,Y)$. The \emph{null expansion} is $\theta:=\mathrm{tr}\,b$, and the \emph{shear} is the trace-free part $\hat b:=b-(\theta/2)\,\mathrm{id}$ (so $\sigma^2:=\mathrm{tr}(\hat b^2)\ge 0$).
\end{definition}

\begin{lemma}\label{lem:Bexplicit}
Along an outgoing radial null generator $\gamma$ with affine tangent $k$, one has,
\begin{equation}\label{eq:Bexplicit}
B = \frac{k(r)}{r}\,g\big|_{T S_{u,v}},
\qquad\text{i.e.}\qquad
b = \frac{k(r)}{r}\,\mathrm{id}_{T S_{u,v}}.
\end{equation}
Consequently,
\begin{equation}\label{eq:theta-shear}
\theta = 2\,\frac{k(r)}{r},
\qquad
\hat b \equiv 0,
\qquad
\sigma\equiv 0.
\end{equation}
\end{lemma}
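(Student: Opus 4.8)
The plan is to compute $\nabla_X k$ directly in the double-null coordinates of \eqref{eq:double-null} and then read off all claimed quantities. First I would fix $p=(u,v,\omega)\in S_{u,v}$ and take $X,Y\in T_pS_{u,v}$, which in the chart $(u,v,\omega^A)$ means $X=X^A\partial_{\omega^A}$, $Y=Y^B\partial_{\omega^B}$, with no $\partial_u,\partial_v$ components. Since $k=\Omega^{-2}\partial_v$ is a smooth vector field on the whole chart $U\simeq V\times\mathbb S^2$, the covariant derivative $\nabla_Xk$ is well defined, and because $\Omega$ depends only on $(u,v)$ we have $X(\Omega^{-2})=0$, so
$$\nabla_Xk=X^A\nabla_{\partial_{\omega^A}}\!\big(\Omega^{-2}\partial_v\big)=\Omega^{-2}X^A\,\nabla_{\partial_{\omega^A}}\partial_v=\Omega^{-2}X^A\,\Gamma^\mu_{A v}\,\partial_\mu.$$

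The key computation is the Christoffel symbols $\Gamma^\mu_{Av}$ of the warped-product metric $g=h_{ab}\,\dd x^a\otimes\dd x^b+r^2\,(g_{\mathbb S^2})_{AB}\,\dd\omega^A\otimes\dd\omega^B$. I would use three facts: $g$ has no cross terms between the $\mathcal Q$-factor and the sphere, $g_{vv}=0$ in the double-null gauge, and $\Omega,r$ are independent of $\omega$. The standard warped-product formula (or a short direct calculation) then gives $\Gamma^B_{Av}=(\partial_v r/r)\,\delta^B_A$, while $\Gamma^u_{Av}=\Gamma^v_{Av}=0$ because the only potentially surviving terms there involve $\partial_{\omega^A}$ of metric components of $\mathcal Q$ (which vanish) or of $g_{vv}$ (which is zero). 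Hence $\nabla_{\partial_{\omega^A}}\partial_v=(\partial_v r/r)\,\partial_{\omega^A}$, and substituting,
$$\nabla_Xk=\frac{\Omega^{-2}\partial_v r}{r}\,X=\frac{k(r)}{r}\,X,$$
using $k(r)=\Omega^{-2}\partial_v r$.

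From here everything is immediate. Pairing with $Y$ and using bilinearity of $g$ gives $B(X,Y)=g(\nabla_Xk,Y)=\tfrac{k(r)}{r}\,g(X,Y)$, which is \eqref{eq:Bexplicit}; since $g|_{TS_{u,v}}$ is positive definite, the associated endomorphism is $b=\tfrac{k(r)}{r}\,\mathrm{id}_{TS_{u,v}}$. Taking the trace in $m=\dim S_{u,v}=2$ dimensions gives $\theta=\mathrm{tr}\,b=2\,k(r)/r$; the trace-free part is $\hat b=b-\tfrac{\theta}{2}\,\mathrm{id}=\tfrac{k(r)}{r}\,\mathrm{id}-\tfrac{k(r)}{r}\,\mathrm{id}=0$, hence $\sigma^2=\mathrm{tr}(\hat b^2)=0$. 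This proves \eqref{eq:theta-shear}.

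There is no genuine obstacle here: the content is a routine Christoffel-symbol computation for a warped product. The only point deserving care is the bookkeeping ensuring that no $\partial_u$ or $\partial_v$ component of $\nabla_{\partial_{\omega^A}}\partial_v$ is overlooked, which is precisely where $g_{vv}=0$ and the absence of $\mathcal Q$--$\mathbb S^2$ cross terms enter. One may also remark that the result is manifestly independent of the choice of outgoing generator through $p$ and of any screen identification, consistent with $\sigma$ being an invariant; this needs no extra argument once $k=\Omega^{-2}\partial_v$ is recognised as a globally defined field on the chart.
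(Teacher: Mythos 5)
Your proof is correct, but it takes a different route from the paper's. You compute the Christoffel symbols $\Gamma^\mu_{Av}$ of the warped-product metric explicitly and obtain $\nabla_{\partial_{\omega^A}}\partial_v=(\partial_v r/r)\,\partial_{\omega^A}$, hence $\nabla_X k=\tfrac{k(r)}{r}X$ as a vector identity; everything else (symmetry of $B$, $b$ being a multiple of the identity, $\theta$, $\hat b=0$) then drops out at once. The paper instead avoids Christoffel symbols entirely: it uses $[e_A,k]=0$ together with metric compatibility to get $2B(e_A,e_B)=k\big(g(e_A,e_B)\big)$, and then differentiates $g(e_A,e_B)=r^2(g_{\mathbb S^2})_{AB}$ along $k$. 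The two arguments buy slightly different things. Your direct computation is more work at the Christoffel stage but yields the stronger pointwise statement $\nabla_Xk=\tfrac{k(r)}{r}X$, from which the symmetry of $B$ and the proportionality $b\propto\mathrm{id}$ are automatic rather than inferred from the bilinear form. The paper's Lie-derivative argument is coordinate-lighter and closer in spirit to the first-variation-of-area computation, but it establishes symmetry of $B$ as part of the same identity, which is a slightly more delicate step. Your bookkeeping of why $\Gamma^u_{Av}=\Gamma^v_{Av}=0$ (namely $g_{vv}=0$, the absence of $\mathcal Q$--$\mathbb S^2$ cross terms, and the $\omega$-independence of $\Omega$) is exactly the point that needs care, and you have handled it correctly, including the conversion $k(r)=\Omega^{-2}\partial_v r$.
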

\begin{proof}
Let $\omega^A$ be local coordinates on $\mathbb S^2$ and write $e_A:=\partial_{\omega^A}$ as coordinate vector fields tangent to $S_{u,v}$. Then $[e_A,\partial_v]=0$, and hence $[e_A,k]=0$ as well. Using metric compatibility and $g(k,e_A)=0$ yields,
$$ k\big(g(e_A,e_B)\big)= g(\nabla_k e_A,e_B)+g(e_A,\nabla_k e_B).$$
But $[e_A,k]=0$ implies $\nabla_k e_A-\nabla_{e_A}k=0$, and hence,
$$k\big(g(e_A,e_B)\big)= g(\nabla_{e_A}k,e_B)+g(e_A,\nabla_{e_B}k)=B(e_A,e_B)+B(e_B,e_A).$$
Thus $B$ is symmetric and moreover $2B(e_A,e_B)=k\big(g(e_A,e_B)\big).$ From \eqref{eq:double-null}, one has $g(e_A,e_B)=r(u,v)^2(g_{\mathbb{S}^2})_{AB}(\omega)$, and hence along $\gamma$ (where $u$ and $\omega$
are constant), one has,
$$ k\big(g(e_A,e_B)\big)=k(r^2)\,(g_{\mathbb{S}^2})_{AB}=2r\,k(r)\,(g_{\mathbb{S}^2})_{AB}.$$
Therefore,
$$B(e_A,e_B)= r\,k(r)\,(g_{\mathbb{S}^2})_{AB} = \frac{k(r)}{r}\,(r^2(g_{\mathbb{S}^2})_{AB})= \frac{k(r)}{r}\,g(e_A,e_B),$$
which is exactly \eqref{eq:Bexplicit}. Tracing with respect to the induced metric on $S_{u,v}$ yields $\theta = 2(k(r)/r)$, and since $b$ is a scalar multiple of the identity, its trace-free part vanishes.

\end{proof}

The Einstein-massless scalar field system is given by,
$$
G_{ab}= T_{ab},
\qquad
T_{ab}=\nabla_a\phi\,\nabla_b\phi - \frac12 g_{ab}(\nabla\phi)^2,
\qquad
\Box_g\phi=0.
$$

\begin{lemma}\label{lem:ricci-scalar}
In $3+1$ dimensions, the Einstein equation implies \begin{equation}\label{eq:ricci=gradphi}
\mathrm{Ric}_{ab} = \nabla_a\phi\,\nabla_b\phi.
\end{equation}
In particular, for any null vector $k$,
\begin{equation}\label{eq:Ric-kk}
\mathrm{Ric}(k,k)=(k\phi)^2.
\end{equation}
\end{lemma}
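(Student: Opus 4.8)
The plan is to reduce the statement to the standard trace-reversed form of the Einstein equation. First I would take the metric trace of $G_{ab}=T_{ab}$. In spacetime dimension $N+1=4$ the left-hand side contributes $g^{ab}G_{ab}=R-\tfrac12\cdot 4\cdot R=-R$, while the stress tensor contributes $g^{ab}T_{ab}=(\nabla\phi)^2-\tfrac12\cdot 4\cdot(\nabla\phi)^2=-(\nabla\phi)^2$, where $(\nabla\phi)^2:=g^{ab}\nabla_a\phi\,\nabla_b\phi$. Comparing the two sides gives the scalar-curvature relation $R=(\nabla\phi)^2$.

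Next I would substitute this back. Rearranging the identity $G_{ab}=\mathrm{Ric}_{ab}-\tfrac12 R\,g_{ab}$ and using $G_{ab}=T_{ab}$ yields $\mathrm{Ric}_{ab}=T_{ab}+\tfrac12 R\,g_{ab}=\nabla_a\phi\,\nabla_b\phi-\tfrac12 g_{ab}(\nabla\phi)^2+\tfrac12 g_{ab}(\nabla\phi)^2=\nabla_a\phi\,\nabla_b\phi$, which is \eqref{eq:ricci=gradphi}: the two multiples of $g_{ab}(\nabla\phi)^2$ cancel identically. For \eqref{eq:Ric-kk}, contracting \eqref{eq:ricci=gradphi} with a null vector $k$ gives $\mathrm{Ric}(k,k)=k^ak^b\nabla_a\phi\,\nabla_b\phi=(k^a\nabla_a\phi)^2=(k\phi)^2$. (Equivalently, one may contract the Einstein equation directly with $k^ak^b$ from the outset and use $g(k,k)=0$ to annihilate the $R\,g_{ab}$ and $g_{ab}(\nabla\phi)^2$ terms simultaneously, obtaining \eqref{eq:Ric-kk} without passing through \eqref{eq:ricci=gradphi}.)

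There is no genuine obstacle here; the computation is entirely routine. The only point deserving care is that the trace step is dimension-dependent: it requires the coefficient $1-\tfrac{N+1}{2}$ multiplying $R$ to be nonzero, which holds precisely when $N+1\neq 2$, in particular in the physical case $N+1=4$. The same argument in fact establishes $\mathrm{Ric}_{ab}=\nabla_a\phi\,\nabla_b\phi$ in every spacetime dimension at least $3$, but we only invoke it in $3+1$ dimensions. Note also that $\mathrm{Ric}(k,k)=(k\phi)^2\ge 0$ reconfirms the NEC for this matter model, consistent with the hypotheses used throughout Sections~\ref{sec:framework}--\ref{sec:visibility}.
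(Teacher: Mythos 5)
Your proposal is correct and follows essentially the same route as the paper: trace the Einstein equation to obtain $R=(\nabla\phi)^2$, substitute back into the trace-reversed form to cancel the $g_{ab}(\nabla\phi)^2$ terms, and contract with the null vector $k$ using $g(k,k)=0$. The extra remarks on dimension-dependence and the direct null contraction are accurate but not needed.
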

\begin{proof}
Taking the trace of $T_{ab}$ gives,
$$
T:=g^{ab}T_{ab} = (\nabla\phi)^2 - 2(\nabla\phi)^2 = -(\nabla\phi)^2.
$$
Tracing the Einstein equation $G_{ab}=T_{ab}$ gives,
$$
G:=g^{ab}G_{ab}= -R =  T = -(\nabla\phi)^2.
$$
Hence $R=(\nabla\phi)^2$. We rewrite $G_{ab}=R_{ab}-(R/2)g_{ab}=T_{ab}$ as
$$
R_{ab} = \nabla_a\phi\nabla_b\phi - \frac12 g_{ab}(\nabla\phi)^2 + \frac12 R g_{ab}.
$$
Substituting $R=(\nabla\phi)^2$ gives $R_{ab} = \nabla_a\phi\nabla_b\phi.$ Contracting with $k^ak^b$ and using $g(k,k)=0$ yields \eqref{eq:Ric-kk}.

\end{proof}

\begin{proposition}\label{prop:sturm-scalar}
Along any outgoing radial null generator $\gamma$ with affine tangent $k$ in a spherically symmetric
Einstein-scalar field spacetime,
\begin{equation}\label{eq:sturm-final}
u'' + \frac{1}{2}(k\phi)^2\,u = 0,
\end{equation}
Moreover, one has $u=r/r(\lambda_0)$, and so equivalently
\begin{equation}\label{eq:r-ode}
r'' + \frac{1}{2}(k\phi)^2\,r = 0.
\end{equation}
\end{proposition}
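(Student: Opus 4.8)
The plan is to read off the two geometric quantities entering the general linearization of \cref{prop:linear} from the spherically symmetric structure already established, and then to identify the beam scale $u$ with the area radius. First I would record that in $3+1$ dimensions $m=N-1=2$, that \cref{lem:Bexplicit} gives $\sigma\equiv0$ along an outgoing radial generator, and that \cref{lem:ricci-scalar} gives $T(k,k)=\mathrm{Ric}(k,k)=(k\phi)^2$. Hence $f=\|\sigma\|^2+T(k,k)=(k\phi)^2$, and \cref{prop:linear} applied with $m=2$ immediately yields
$$u''+\tfrac12(k\phi)^2\,u=0$$
on any interval where $u>0$.

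Next I would pin down $u$ explicitly. By \cref{lem:Bexplicit} the optical endomorphism is the scalar $B=\frac{k(r)}{r}\,\Id$, so the Jacobi-map equation $D'=BD$ with $D(\lamo)=\Id$ (\cref{def:Jacobi}) integrates in closed form:
$$D(\lam)=\Big(\exp\!\int_{\lamo}^{\lam}\frac{k(r)}{r}\,\dd s\Big)\Id.$$
Along $\gamma$ the affine-parameter derivative of the area radius is $r'=k(r)$, so $\frac{k(r)}{r}=(\log r)'$ and the integral equals $\log\!\big(r(\lam)/r(\lamo)\big)$; therefore $D(\lam)=\dfrac{r(\lam)}{r(\lamo)}\,\Id$. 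Taking determinants (with $m=2$) gives $\mathcal A(\lam)=|\det D(\lam)|=\big(r(\lam)/r(\lamo)\big)^2$, hence $u=\mathcal A^{1/m}=r/r(\lamo)$. In particular $u>0$ precisely where $r>0$, so the interval restriction in the previous step is automatic throughout the regular region.

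Finally I would substitute $u=r/r(\lamo)$ into $u''+\tfrac12(k\phi)^2u=0$ and cancel the positive constant $r(\lamo)$ to obtain $r''+\tfrac12(k\phi)^2 r=0$. Equivalently, one may bypass \cref{prop:linear} altogether and derive the $r$-equation directly: differentiating $\theta=2\,k(r)/r=2r'/r$ from \cref{lem:Bexplicit} and feeding $\theta'$ into the Raychaudhuri equation \eqref{ray} with $\sigma\equiv0$ and $\mathrm{Ric}(k,k)=(k\phi)^2$ makes the $(r'/r)^2$-terms cancel and leaves $2r''/r=-(k\phi)^2$; the two derivations agree.

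The only point needing any care — and it is mild — is the identification $u=r/r(\lamo)$: one must use that $B$ is genuinely a \emph{scalar} multiple of the identity (not merely trace-proportional), so that $\det D$ is exactly the $m$-th power of the scalar integrating factor, and that the normalization $D(\lamo)=\Id$ forces the constant of integration to be $1/r(\lamo)$ rather than an arbitrary positive constant. Both facts are immediate from \cref{lem:Bexplicit} and \cref{def:Jacobi}, so the substantive input of the proposition is entirely the curvature identity of \cref{lem:ricci-scalar} together with the vanishing of the shear in spherical symmetry.
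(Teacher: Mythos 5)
Your proposal is correct and follows essentially the same route as the paper: shear-freeness from \cref{lem:Bexplicit} and the curvature identity of \cref{lem:ricci-scalar} feed into \cref{prop:linear} with $m=2$, and the identification $u=r/r(\lambda_0)$ comes from the Jacobi map being the scalar $(r(\lambda)/r(\lambda_0))\,\Id$. Your explicit integration of $D'=BD$ just fills in a step the paper asserts directly via the form of the screen Jacobi fields.
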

\begin{proof}
Spherical symmetry gives $\sigma\equiv 0$ by Lemma~\ref{lem:Bexplicit}. Then equation \eqref{eq:sturm-final} is simply the Raychaudhuri equation for the Einstein massless-scalar field system. Now, if $E_A(\lambda)$ ($A=1,2$) is a parallelly -transported
orthonormal basis of $S_{\lambda_0}$, then the corresponding screen Jacobi fields satisfy
$J_A(\lambda) = (r(\lambda)/r(\lambda_0) )\cdot E_A(\lambda)$, and so the Jacobi map is $D(\lambda)=(r(\lambda)/r(\lambda_0))\mathrm{Id}$ on $S_{\lambda_0}$. It follows that,
$$
A(\lambda)=|\det D(\lambda)| = \left(\frac{r(\lambda)}{r(\lambda_0)}\right)^2,
\qquad
u(\lambda)=A(\lambda)^{1/2}=\frac{r(\lambda)}{r(\lambda_0)}.
$$
In particular, $u$ is (up to a harmless constant factor) just the area radius $r$ along $\gamma$. This completes the proof.

\end{proof}

The Pr\"ufer/Sturm focusing criterion given by Theorem \ref{thm:intcrit} (specialized to $m=2$ and $\sigma=0$) says: if for some interval $[\lambda_c,\lambda_d]$, one has,
\begin{equation}\label{eq:prufer-paper}
\int_{\lambda_c}^{\lambda_d}\sqrt{q(\lambda)}\,d\lambda > \pi,
\qquad
q(\lambda):=\frac12\mathrm{Ric}(k,k)= \frac{1}{2} (k\phi)^2,
\end{equation}
then $u$ has a zero in $(\lambda_c,\lambda_d)$, hence $\gamma$ has a conjugate point in $(\lambda_c,\lambda_d)$. Substituting $q=(k\phi)^2/2$ gives
\begin{equation}\label{eq:flux-affine}
\int_{\lambda_c}^{\lambda_d} |k\phi|\dd\lambda > \sqrt{2}\pi
\end{equation}

\begin{lemma}\label{lem:flux-v}
Along an outgoing generator $u=u_*$ with $k=\Omega^{-2}\partial_v$ and affine parameter $\lambda$, one has $\dd\lambda=\Omega^2 \dd v$ and $(k\phi)\dd\lambda = (\partial_v\phi)\dd v$. Consequently
\begin{equation}\label{eq:flux-v}
\int_{\lambda_c}^{\lambda_d}|k\phi|\ \dd\lambda= \int_{v_c}^{v_d}|\partial_v\phi(u_*,v)|\ \dd v.
\end{equation}
\end{lemma}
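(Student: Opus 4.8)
The plan is to reduce the statement to an elementary unwinding of definitions followed by the change-of-variables formula; there is no substantive difficulty, only a couple of points of bookkeeping. First I would recall what the affine parameter means along the fixed outgoing generator $u=u_*$: by Lemma~\ref{lem:affine}, the rescaled field $k=\Omega^{-2}\partial_v$ satisfies $\nabla_k k=0$, so there is an affine parameter $\lambda$ with $\dfrac{\dd c}{\dd\lambda}=k|_{c(\lambda)}=\Omega^{-2}\partial_v$, where $c$ denotes the generator. Writing the same curve as a function of $v$ via $\gamma(v)=(u_*,v,\omega_*)$, whose coordinate velocity is $\partial_v$, the chain rule gives $\dfrac{\dd c}{\dd\lambda}=\dfrac{\dd v}{\dd\lambda}\,\partial_v$, and comparing the two expressions yields $\dfrac{\dd v}{\dd\lambda}=\Omega^{-2}$, hence $\dfrac{\dd\lambda}{\dd v}=\Omega^2>0$. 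This is exactly the asserted relation $\dd\lambda=\Omega^2\,\dd v$, and it also records that $\lambda$ is a strictly increasing, smooth function of $v$ along the generator, so the endpoints match up as $\lambda_c\leftrightarrow v_c$ and $\lambda_d\leftrightarrow v_d$ with orientation preserved.

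Next I would compute the directional derivative: since $k=\Omega^{-2}\partial_v$, one has $k\phi=\Omega^{-2}\partial_v\phi$, so as one-forms pulled back to the generator $(k\phi)\,\dd\lambda=\Omega^{-2}\partial_v\phi\cdot\Omega^2\,\dd v=(\partial_v\phi)\,\dd v$. Taking absolute values gives $|k\phi|\,\dd\lambda=|\partial_v\phi|\,\dd v$, and integrating between the matched endpoints — using that $v\mapsto\lambda(v)$ is a $C^1$ orientation-preserving diffeomorphism (because $\Omega>0$ is smooth on the regular region $r>0$ by Lemma~\ref{lem:double-null}) and hence the ordinary change-of-variables formula applies — produces
\[
\int_{\lambda_c}^{\lambda_d}|k\phi|\,\dd\lambda=\int_{v_c}^{v_d}|\partial_v\phi(u_*,v)|\,\dd v,
\]
which is the claim.

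The only point that warrants a sentence of care is the nondegeneracy of the reparametrization: one must know $\Omega^2>0$ throughout the segment so that $\lambda(v)$ has nonvanishing derivative and the substitution is a genuine monotone change of variable with no coordinate singularity. This is precisely guaranteed by restricting attention to the regular region where, by Lemmas~\ref{lem:warped}--\ref{lem:double-null}, $r>0$ and $\Omega$ is smooth and strictly positive. Beyond that, the ``proof'' is simply the observation that the specific normalization $k=\Omega^{-2}\partial_v$ was chosen exactly so that $\partial_\lambda=k$ is affine, which is what makes the identity $\dd\lambda=\Omega^2\,\dd v$ hold; no Jacobi-field or curvature input is needed here.
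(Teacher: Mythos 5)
Your proposal is correct and follows essentially the same route as the paper's proof: identify $k$ with $\dd/\dd\lambda$ to read off $\dd v/\dd\lambda=\Omega^{-2}$, hence $\dd\lambda=\Omega^{2}\dd v$, then substitute $k\phi=\Omega^{-2}\partial_v\phi$ and integrate. Your extra remarks on the monotonicity of $v\mapsto\lambda(v)$ and the positivity of $\Omega$ on the regular region are sensible bookkeeping that the paper leaves implicit, but the argument is the same.
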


\begin{proof}
Since $k\equiv \dd/\dd\lambda$ and $k=\Omega^{-2}\partial_v$, we have,
$$
\frac{\dd}{\dd\lambda} = \Omega^{-2}\frac{\partial}{\partial v}
\quad\Longrightarrow\quad
\frac{\dd v}{\dd\lambda}=\Omega^{-2}
\quad\Longrightarrow\quad
\dd\lambda=\Omega^2 \dd v.
$$
Then,
$$
(k\phi) \dd\lambda = (\Omega^{-2}\partial_v\phi)\cdot(\Omega^2\dd v)=(\partial_v\phi)\dd v,
$$
Taking absolute values and integrating yields \eqref{eq:flux-v}.
\end{proof}

\begin{proposition}\label{prop:explicit-threshold}
If along an outgoing radial null generator $u=u_*$ there exist $v_c<v_d$ in the regular region such that
\begin{equation}\label{eq:threshold}
\int_{v_c}^{v_d} |\partial_v\phi(u_*,v)|\ \dd v > \sqrt{2} \pi,
\end{equation}
then that generator contains a pair of null conjugate points (equivalently, $u=r/r(\lambda_0)$ vanishes) in $(v_c,v_d)$.
\end{proposition}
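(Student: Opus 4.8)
The plan is to funnel the flux hypothesis \eqref{eq:threshold} into the Pr\"ufer/Sturm criterion of \cref{thm:intcrit} after re-expressing everything along the generator $u=u_*$ in its affine parameter.

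First I would assemble the geometric input. Along an outgoing radial null generator the shear vanishes by \cref{lem:Bexplicit}, so $f=\|\sigma\|^2+\mathrm{Ric}(k,k)=(k\phi)^2$ by \cref{lem:ricci-scalar}, and by \cref{prop:sturm-scalar} the beam scale obeys \eqref{eq:sturm-final}, namely $u''+\tfrac{1}{2}(k\phi)^2u=0$, with $u=r/r(\lambda_0)$ (equivalently \eqref{eq:r-ode} for $r$ itself). Put $q:=\tfrac{1}{2}(k\phi)^2\ge 0$. Since $[v_c,v_d]$ sits in the regular region, \cref{lem:affine} and the relation $\dd\lambda=\Omega^2\,\dd v$ from \cref{lem:flux-v} make the affine parameter $\lambda$ a smooth strictly increasing function of $v$ there; thus $[\lambda_c,\lambda_d]$ is a genuine compact affine interval on which $q$ is continuous, nonnegative, and the scalar ODE \eqref{eq:sturm-final} holds.

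Next I would translate the hypothesis. By \cref{lem:flux-v} one has $(k\phi)\,\dd\lambda=(\partial_v\phi)\,\dd v$, and since $\sqrt{q}=|k\phi|/\sqrt{2}$,
$$\int_{\lambda_c}^{\lambda_d}\sqrt{q}\,\dd\lambda=\frac{1}{\sqrt{2}}\int_{\lambda_c}^{\lambda_d}|k\phi|\,\dd\lambda=\frac{1}{\sqrt{2}}\int_{v_c}^{v_d}|\partial_v\phi(u_*,v)|\,\dd v>\pi,$$
which is exactly the integral criterion \eqref{eq:prufer-paper} (equivalently the affine threshold \eqref{eq:flux-affine}) on $[\lambda_c,\lambda_d]$. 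Applying \cref{thm:intcrit} with $m=2$, the equation $u''+qu=0$ is not disconjugate on $[\lambda_c,\lambda_d]$: the solution $y$ with $y(\lambda_c)=0$, $y'(\lambda_c)=1$ has a second zero $\lambda_*\in(\lambda_c,\lambda_d]$. In spherical symmetry the screen Jacobi operator is $B=(k(r)/r)\,\mathrm{id}$ (\cref{lem:Bexplicit}), so the Jacobi equation on the screen bundle splits into two scalar copies of \eqref{eq:sturm-final}; hence $y$ generates a screen Jacobi field vanishing at both $\lambda_c$ and $\lambda_*$, so $\gamma(\lambda_c)$ and $\gamma(\lambda_*)$ are a pair of null conjugate points along $\gamma$, lying (after converting back to $v$) in $(v_c,v_d)$. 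For the equivalent formulation in terms of $u=r/r(\lambda_0)$: since $v_c$ is in the regular region, $u(\lambda_c)=r(v_c)/r(\lambda_0)>0$, so $u$ is linearly independent from $y$; by the Sturm separation theorem (\cref{thm:separation}) $u$ has exactly one zero strictly between the consecutive zeros $\lambda_c<\lambda_*$ of $y$, hence a zero $\lambda_{**}\in(\lambda_c,\lambda_d)$, i.e.\ $r$ vanishes in $(v_c,v_d)$ and $\gamma(\lambda_{**})$ is conjugate to $S_{\lambda_0}$ by \cref{prop:zero-conjugate}.

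The substitutions are routine; the step that needs the most care is the passage from the bare existence of a second zero of the solution pinned at $\lambda_c$ (which is all \cref{thm:intcrit} literally delivers) to the geometric statement about $u=r/r(\lambda_0)$, which is positive at $\lambda_c$. The regular-region hypothesis $r(v_c)>0$ together with \cref{thm:separation} is precisely what bridges this, and one should also check that the affine reparametrization is defined on all of $[\lambda_c,\lambda_d]$, so that \cref{thm:intcrit} is applied on a bona fide compact affine subinterval and not merely on a $v$-interval.
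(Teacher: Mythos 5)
Your proof follows the same route as the paper: \cref{lem:flux-v} converts the flux hypothesis into the Pr\"ufer threshold $\int\sqrt{q}\,\dd\lambda>\pi$ with $q=\tfrac12(k\phi)^2$, and \cref{thm:intcrit} applied to \eqref{eq:sturm-final} produces the conjugate point. The one place you go beyond the paper's one-line argument is the Sturm-separation step passing from the second zero of the solution pinned at $\lambda_c$ to a zero of $u=r/r(\lambda_0)$, which is positive at $\lambda_c$; this is a small gap the paper elides, and your appeal to \cref{thm:separation} (together with the observation that $u(\lambda_c)>0$ forces linear independence) closes it correctly.
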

\begin{proof}
By Lemma~\ref{lem:flux-v}, equation \eqref{eq:threshold} is equivalent to the inequality in \eqref{eq:flux-affine}. That is exactly \eqref{eq:prufer-paper}. Applying Theorem \ref{thm:intcrit} to the equation \eqref{eq:sturm-final} gives a zero of $u$, and hence a conjugate point, in the interior.

\end{proof}

Christodoulou's scalar field collapse models are asymptotically flat \cite{Christodoulou:1987vv}, and a subclass of his models admit locally naked singularities. Moreover the geometry near future null infinity is shown to behave like vacuum (which in spherical symmetry is uniquely determined by the globally hyperbolic schwarzschild exterior spacetime. This means in particular that his models are asymptotically flat and globally hyperbolic (and hence causally simple) in a neighbourhood of $\mathscr{I}^+$. Moreover, there exists a \emph{retarded time function} $\mathsf u: \mathscr I^{+}\to \mathbb R$ which is proper and admits no local minumum on $\mathscr{I}^{+}$. It follows that Lemma \ref{lem:weak_to_prompt} holds for the Einstein-massless scalar field system. Therefore, if the lower bound given by equation \eqref{eq:threshold} holds along every outgoing null geodesic issuing from the locally naked singularity (in the specified subclass), then Theorem \ref{thm:global_visibility_failure} guarantees that these geodesics cannot escape to $\mathscr I^{+}$, and hence the singularity is not globally visible. Equation \eqref{eq:threshold} represents a lower bound on the integrated flux of the scalar field along null geodesics emanating from the singularity. Conversely, if the singularity is globally naked, then along all outgoing null geodesics emanating from the singularity, one has $\int_{\gamma} |\partial_v\phi|\ \dd \lambda<\sqrt{2}\pi$, i.e., the integrated scalar field flux admits a uniform upper bound.

\printbibliography

\end{document}